\documentclass[12pt]{article}
\usepackage{mathtools,amssymb}
\usepackage{tikz}
\usepackage{amsmath}
\usepackage{amsthm}
\usepackage{geometry}
\usepackage{tgpagella}
\usepackage[title]{appendix}
\usepackage[round]{natbib}
\usepackage{hyperref}
\usepackage{setspace}
\usepackage{float}
\usepackage{times}
\usepackage{subcaption}

\newtheorem{defn}{Definition}
\newtheorem{prop}{Proposition}
\newtheorem{exmp}{Example}
\newtheorem{fact}{Fact}
\newtheorem{thm}{Theorem}
\newtheorem{claim}{Claim}
\newtheorem{lemma}{Lemma}

\newtheorem{coro}{Corollary}
\newtheorem{assu}{Assumption}

\title{Screening Workers with Affirmative Action}
\author{Charles Po-Cheng Huang\footnote{Department of Economics, National Central University, No. 300, Zhongda Rd., Zhongli District, Taoyuan City, Taiwan 320317. Email: cpchuang@ncu.edu.tw. I am especially grateful to Bumin Yenmez, Rahul Deb, and Utku Ünver for their advising and their continuous support. I thank Min-Hung Tsay, Alice Peng-Ju Su, CJ Sun, İsa Hafalır, Kentaro Tomoeda, Gunhaeng Lee, and Johan Lagerlöf for the suggestion and comments. I also thank the participants in RCHSS at Academia Sinica, Taiwan Economic Research (2025), 20th EGSC at WashU, Boston College Dissertation Workshop, SSCW 2026 for their feedback.}}
\date{July, 2026}

\begin{document}    
	\maketitle
	
	%%%%%%%%%%%%%%%%%%%%%%%%%%%%%%%%%%%%%%%%%%%%%%%%%%%%%%%%%%%%%%%%%%%%%%%%%%%%%%%%%%%%%%%
	%%%%%%% Abstract %%%%%%%%%%%%%%%%%%%%%%%%%%%%%%%%%%%%%%%%%%%%%%%%%%%%%%%%%%%%%%%%%%%%%%
	%%%%%%%%%%%%%%%%%%%%%%%%%%%%%%%%%%%%%%%%%%%%%%%%%%%%%%%%%%%%%%%%%%%%%%%%%%%%%%%%%%%%%%%
	\begin{abstract}
		This paper examines the optimal contracts in a two-dimensional screening model where one dimension (group identity) is verifiable by agents but not falsifiable. A principal offers contracts to agents who differ in cost types and group membership. Motivated by the United States Federal policy, Work Opportunity Tax Credit (WOTC), the principal receives tax benefits for hiring agents from protected groups. Under the assumption that the protected agents tend to have higher cost types, the optimal contract induces full separation across both dimensions: agents reveal the cost type and the group identity through contract choice. Furthermore, the principal is willing to hire the trait agents up to a higher cost threshold than the non-trait agents, and this threshold increases with the tax credit. Conversely, when the protected agents tend to have lower cost types, the optimal design without tax credits pools groups while separating by cost type. These results demonstrate that both affirmative action and non-discrimination can be optimal depending on the cost distribution ordering across groups.

		\vspace{1cm}
    	\noindent \textbf{Keywords:} Screening problem, Multidimensional screening, Affirmative action, Verifiable evidence.
	\end{abstract}
	
	%%%%%%%%%%%%%%%%%%%%%%%%%%%%%%%%%%%%%%%%%%%%%%%%%%%%%%%%%%%%%%%%%%%%%%%%%%%%%%%%%%%%%%%
	%%%%%%% Introduction %%%%%%%%%%%%%%%%%%%%%%%%%%%%%%%%%%%%%%%%%%%%%%%%%%%%%%%%%%%%%%%%%%%%%%%%%%%%
	%%%%%%%%%%%%%%%%%%%%%%%%%%%%%%%%%%%%%%%%%%%%%%%%%%%%%%%%%%%%%%%%%%%%%%%%%%%%%%%%%%%%%%%
	\newpage
	\section{Introduction}
	
	In labor markets where worker diversity is considered, firms face screening challenges when designing employment contracts. While traditional contract theory has extensively examined ability-based screening, the interaction between ability and diversity backgrounds remains less explored. This paper examines how an employer optimally designs contracts when workers differ along two dimensions: their production abilities and their backgrounds, which may place them in a protected group that qualifies the employer for tax incentives.

	Affirmative action has been used to protect different minority groups in the United States for many decades. It initially aimed to level the playing field for African Americans, who have experienced a long history of oppression; afterward, it expanded to include protections for individuals from low socioeconomic backgrounds. The goal of affirmative action policies is to create equal competition regardless of one's background. Evidence shows that the United States (US) Federal government also promotes equality and diversity in the workplace. Consider this quote from the website of the US Office of Federal Contract Compliance Programs:

	\begin{quotation}
		The purpose of affirmative action is to ensure equal employment opportunities for applicants and employees. It is based on the premise that, absent discrimination, over time a contractor's workforce generally will reflect the demographics of the qualified available workforce in the relevant job market. Affirmative action requirements are intended to ensure that applicants and employees of federal contractors have equal opportunity for recruitment, selection, advancement, and every other term and privilege associated with employment, without regard to their race, color, religion, sex, sexual orientation, gender identity, national origin, disability, or status as a protected veteran.
	\end{quotation}

	A US Federal government policy called the ``Work Opportunity Tax Credit (WOTC)'', administered by the Internal Revenue Service and the Department of Labor, serves a similar purpose, helping targeted groups to gain access to employment. This tax credit is available for employers who hire employees from targeted groups such as veterans, IV-A recipients,\footnote{A Qualified IV-A recipient is an individual in a family receiving Temporary Assistance for Needy Families (TANF) for at least nine months within the 18-month period ending on their hiring date.} ex-felons, and others. The WOTC is about 40\% of wages (up to \$6,000) paid to a newly hired individual who is certified as a member of a targeted group and works full-time.\footnote{See \cite{collins2018work} for detailed descriptions and the summary statistics of this federal policy. Furthermore, the policy itself can be found on the Internal Revenue Service website and the performance of the WOTC can be found on the Department of Labor website. The cap of the tax credit varies with the certified group a worker belongs to.}

	This paper considers a mechanism design framework involving one principal (she) and two groups of agents. We pose the following questions: In a hiring scenario where agents differ in abilities and backgrounds, how would the principal design contracts to effectively screen the agent? If the contracts can screen the agent, is he screened by ability, background, or both? Moreover, what insights do the optimal contracts provide? The principal's task is to design a menu of contracts, from which the agents will choose their most preferred contract. We model an agent's type as two-dimensional. The first dimension is the innate ability, captured by a privately known cost type, and the second dimension is the group he belongs to. The group determines the distribution from which the agent draws this cost type. Furthermore, we consider binary groups: a trait group that includes all protected identities and a non-trait group. Therefore, being in the trait group means the agent has the identity that allows the principal to claim the tax credit.

	This model incorporates a key asymmetry in information revelation: an agent in the trait group has verifiable evidence of his membership and can decide whether to present it, while a non-trait agent cannot pretend to be in the trait group. Therefore, the non-trait agent can only choose the contracts designed for the non-trait group, but the trait agent can choose any contract offered by the principal. This affirmative action structure creates asymmetric incentives between the groups. The principal knows this asymmetric structure and the proportion of each group, but she knows neither the exact cost type nor the group the agent belongs to.

	Our first main result (part 1 of Theorem~\ref{thm:main}) characterizes the optimal contract under the empirically plausible assumption that the trait group tends to have higher cost types, which is captured formally by the Monotone Likelihood Ratio Property (MLRP). This assumption reflects that the trait agents often face labor market discrimination or other disadvantages that affirmative action policies aim to address. Under MLRP, we show that the optimal contract design induces full separation: agents reveal both their cost type and group identity through their contract choices. Even though the trait agents could choose contracts designed for the non-trait agents, they prefer to disclose their trait status. Furthermore, the principal is willing to hire the trait agents up to a higher cost threshold than the non-trait agents, and this threshold increases with the tax credit.

	Our second main result (part 2 of Theorem~\ref{thm:main}) establishes the optimal contract when the reverse ordering holds: the trait agents tend to have lower cost types (reverse MLRP). Consider first the case without the tax credit. Under reverse MLRP, the optimal contract design pools the groups and separates the agents by cost type: the principal offers identical contracts based solely on cost type, making group identity irrelevant. This group-blind design reflects a non-discrimination policy, which bans the firm from treating workers with the same ability differently. When the tax credit appears, we show that the optimal contracts induce full separation for the low-cost segment and pool the groups for the high-cost segment. In other words, the principal finds it worthwhile to screen for the tax credit only among the lower-cost trait agents.
	
	We develop the policy content of this dichotomy in Section~\ref{sec:policy}. The same tax credit endogenously produces either explicit group-targeting, which is affirmative action, or de facto non-discrimination. The ordering of the two cost distributions, rather than the policy itself, determines which regime obtains. This observation bears directly on the empirical literature on the WOTC, which consistently finds that the subsidy produces little visible change in employers' hiring standards and contracts. Our model supplies a frictionless benchmark for interpreting these null results: under reverse MLRP, the optimal contracts leave no visible trace of the tax credit over the pooling region, so a null finding is consistent with optimal screening and does not by itself indicate frictions. We further argue that both orderings are empirically plausible: MLRP reflects the barriers to employment that define the target groups, while the applicant pools facing the firms that actually claim the credit are heavily pre-screened and can satisfy the reverse ordering. Moreover, two seemingly convenient modeling choices, group-specific menus and one-way disclosure, turn out to match the legal and certification design of the WOTC. Finally, the disclosure option has a substantive incidence implication: because a trait agent always retains the option to take the contract designed for the non-trait agents, the tax credit can never make him worse off through the contracting channel, in contrast to models in the tradition of \cite{coate1993will} where affirmative action can harm its intended beneficiaries.

	Our analysis also makes a methodological contribution to mechanism design. The information structure we study, a privately known cost type coupled with a second dimension that agents can verify but not falsify, has been examined in the literature on evidence, but existing characterizations rely on the principal's objective being linear in the allocation (\cite{vaidya2023regulating}). In our setting, production is continuous and the cost function is strictly convex, so those arguments are unavailable. We instead use the calculus of variations, which yields a deterministic optimal contract in which the binding disclosure constraint is resolved by pooling the groups. This characterization is what delivers the dichotomy above, making transparent when affirmative action and when non-discrimination emerge as optimal from the principal's objective rather than being imposed by regulation.
	
	\subsection*{Related Literature}
	Our paper builds on the monopolistic screening literature. The seminal work of \cite{mussa1978monopoly} analyzed how a monopolist should design quality-differentiated products when consumers have private information on valuations for quality. \cite{maskin1984monopoly} extended this work by allowing for a general preference structure and developed a systematic technique to solve the screening problems. While these papers provided the methodology used elsewhere in this literature, our work investigates the possibility of agents having group identity and cost at the same time, which brings us to a multidimensional problem. A related strand of the screening literature studies contracts when the agent's participation constraint is type-dependent, so that the outside option varies with the agent's private information. \cite{jullien2000participation} provides the canonical treatment of this problem and shows that the optimal contract can feature nonmonotone information rents and binding participation constraints at interior types. In our model, the trait agent's outside option is genuinely type-dependent: it varies with the non-trait agent's contract. Our analysis differs in that this type-dependent outside option arises endogenously from the verifiability structure rather than from an exogenously specified reservation utility function.

    \cite{mcafee1988multidimensional} gave conditions under which multidimensional screening reduces to a one-dimensional problem, while \cite{armstrong1999multi} and \cite{rochet2003economics} illustrated the difficulty of adding even a single dimension. Our second dimension has a particular structure that keeps the problem tractable: group identity is binary, does not enter the utility function directly but governs only the distribution from which the cost type is drawn, and, crucially, can be misrepresented in only one direction. Our pooling result also connects to a recent literature identifying distributional conditions under which optimal menus collapse in multidimensional environments. \cite{haghpanah2021pure} showed that a multiproduct monopolist optimally sells only the grand bundle precisely when types with higher bundle values are stochastically more favorable to the seller, a condition with the same likelihood-ratio flavor as the reverse ordering that generates pooling in our model.
    
	The verifiability of group identity connects to the literature on verifiable evidence in mechanism design. Starting from \cite{green1986partially}, a series of papers studied how verifiable information affects the mechanism that a social planner would design and the implementability of mechanisms (e.g., \cite{bull2004evidence}, \cite{bull2007hard}, \cite{deneckere2008mechanism}, \cite{ben2019mechanisms}). Our restriction to direct mechanisms in which agents disclose all available evidence is justified by the normality condition of \cite{bull2007hard}, which our binary evidence structure satisfies: an agent who can prove membership in the protected group can equally withhold that proof. We bring the verifiable evidence feature to the screening problem and analyze the optimal quantity that the principal should ask for. 
    
    Within this literature, the following papers are close to ours. \cite{sher2015price} study revenue-maximizing sale of a single good to a buyer with finitely many types holding overlapping sets of evidence, and characterize the optimal mechanism through linear-programming duality on an incentive graph. Our two-group structure is the simplest nontrivial case of their graph, but our allocation problem, which features continuous production, a strictly concave objective, and a continuous subsidy, lies outside their finite linear-programming framework, whose endogenous duality characterization is, as they note, not suited to comparative statics. \cite{vaidya2023regulating} is closest in how evidence operates: his buyer has a continuous payoff type and finitely many evidence realizations, and his incentive constraints decompose exactly as ours do, into standard constraints within each evidence class and a unidirectional disclosure constraint across classes. His characterization, however, exploits the linearity of the seller's objective in the allocation, identifying the optimum as an extreme point of the feasible set that randomizes over posted prices when the disclosure constraint binds. With a strictly convex production cost, this approach is unavailable. His own convex-cost extension retreats to a Dye evidence structure in which each payoff type holds uniquely identifying evidence, eliminating precisely the two overlapping populations we study. We solve this configuration under a general convex cost using the calculus of variations, and the binding disclosure constraint is resolved by deterministic pooling rather than randomization. \cite{krahmer2025unidirectional} study unidirectional incentive compatibility in the canonical one-dimensional screening model, showing it removes the monotonicity restriction on allocations and breaks revenue equivalence; in our model, unidirectionality instead operates along the binary verifiable dimension, and its interaction with the standard bidirectional cost dimension is what generates the dichotomy between separation and pooling. Finally, \cite{pram2023learning} also studies screening with evidence, in an insurance market where the evidence concerns the agent's own risk type rather than group membership.
    
	As we discuss the Work Opportunity Tax Credit (WOTC) that allows the principal to receive tax benefits by hiring an agent from the protected group, our work also relates to the literature on affirmative action. \cite{ajilore2012did}, \cite{hamersma2003work}, \cite{gunderson2007job}, and \cite{jain2026limits} empirically studied the effect of the WOTC alongside similar programs. \cite{coate1993will} analyzed how affirmative action policies affect the equilibrium outcome in the labor market, and \cite{fryer2005affirmative} studied different approaches to affirmative action and compared the consequences. \cite{chan2003does} adopted the mechanism design approach to study the effect of affirmative action in college admissions. In a similar vein, \cite{passaro2023equal} examined equal pay for similar work policies, finding they can induce workforce segregation and unexpectedly widen the wage gap. \cite{dianat2022statistical} studied affirmative action and discrimination in an experimental setting and found that affirmative action that comes as a subsidy to the firm tends to have a short-term effect on reducing statistical discrimination. \cite{ho2020efficient} designed efficient child-care subsidies as a screening problem, an exercise close to ours in spirit in that a policy instrument is embedded in an incentive problem with asymmetric constraints. In the market design literature, \cite{abdulkadirouglu2003school}, \cite{hafalir2013effective}, and \cite{sonmez2022affirmative} studied the allocation problem under diversity constraints. While these papers focus on developing allocation mechanisms or equilibrium, our main goal is to study how a principal designs contracts for each type when diversity becomes another consideration and agents can choose not to present the evidence.  
	% Cite Joesph's working paper as another empricial finding for AA.

    To summarize, our work extends the screening literature by adding a dimension whose types are partially verifiable and by incorporating diversity incentives into the principal's objective. The framework allows us to characterize optimal contracts under a novel verifiability constraint and to derive comparative statics describing how the protected group is treated as the tax benefit varies.
	
	The remainder of this paper is organized as follows. Section \ref{sec:model} presents our two-dimensional screening model with verifiable group evidence. Section \ref{sec:optimal} characterizes the optimal contracts under both MLRP and reverse MLRP settings. Section \ref{sec:quadratic} explicitly shows the optimal quantities and wages under a quadratic cost function. Section \ref{sec:policy} draws out the policy implications of our results, relating the dichotomy between separation and pooling to the empirical evidence on the WOTC and to the program's institutional design. Section \ref{sec:conclusion} concludes with directions for future research. All the proofs are left to appendix \ref{app:pfs}.
	
	\section{Two-Dimensional Screening Model}\label{sec:model}
	%%%%%%%%%%%%%%%%%%%%%%%%%%%%%%%%%%%%%%%%%%%%%%%%%%%%%%%%%%%%%%%%%%%%%%%%%%%%%%%%%%%%%%%
	%%%%%%% Model  %%%%%%%%%%%%%%%%%%%%%%%%%%%%%%%%%%%%%%%%%%%%%%%%%%%%%%%%%%%%%%%%%%%%%%%%%
	%%%%%%%%%%%%%%%%%%%%%%%%%%%%%%%%%%%%%%%%%%%%%%%%%%%%%%%%%%%%%%%%%%%%%%%%%%%%%%%%%%%%%%%
	We consider a screening problem with one principal (she) and a continuum of agents (he). In this context, the principal is the company trying to design contracts for the workers (agents). An agent has a type $(\theta, j)$, where $\theta \in [0,1]$ represents the cost type of this agent, and $j \in \{N,T\}$ captures whether the agent belongs to the trait group ($j = T$) or not ($j = N$). The proportion of the trait group is $\pi_T$, and the proportion of the non-trait group is $1-\pi_T$.  For a trait agent, whose type is $(\theta, T)$, the cost type $\theta$ is drawn from a CDF $F^T$, where we assume the PDF exists, and we denote it as $f^T$. Similarly, for a non-trait agent $(\theta, N)$, his cost type $\theta$ is drawn from another CDF $F^N$, whose PDF exists and is denoted by $f^N$. We let $F^T$ and $F^N$ have the same support: $[0,1]$, which is a normalization. 
	
	A contract is a pair $(w,q)$, a wage term and a production term. The production is captured by the quantity asked to produce, and we require both the wage term and the quantity to be nonnegative. Hence, the payoff of an agent of type $\theta$ from contract $(w,q)$ is 
	$$U\big(w, q; \theta \big) = w - C(q,\theta),$$
	where $C(\cdot)$ is a cost function mapping from $\mathbb{R} \times [0,1]$ to $\mathbb{R}$.	We assume $C$ is twice continuously differentiable, with $\frac{\partial C(\cdot)}{\partial q} > 0$, $\frac{\partial C(\cdot)}{\partial \theta} > 0$, $\frac{\partial^2 C(\cdot)}{\partial q^2} > 0$, and $\frac{\partial^2 C(\cdot)}{\partial q \partial \theta} > 0$.\footnote{This is the Spence-Mirrlees single-crossing condition.} Notice that the group identities $N$ and $T$ do not enter the utility function explicitly; they only affect the draw of the cost type $\theta$. 

	For the principal, the proportion of each group and the distributions $F^T$ and $F^N$ are common knowledge, but she does not know the exact cost type $\theta$ or to which distribution an agent belongs. Furthermore, she knows that the trait agent can choose either to present the evidence or hide the evidence when accepting a contract; however, the non-trait agent cannot pretend to be in the trait group, as he does not have the evidence to present when accepting the contract. 

	By the revelation principle of \cite{bull2007hard}, it is without loss of generality to restrict attention to direct mechanisms that request maximal evidence; thus, we focus on the direct mechanisms that (1.) ask the agents to report the cost type, and (2.) request the trait agents to deliver proof of their status. Whether compliance is incentive-compatible is governed by the within-group incentive compatibility constraint and the across-group incentive compatibility constraint, which we will introduce later in Section~\ref{subsec:Agent}. A menu of contracts is thus a pair of functions $\big(w^j(\theta), q^j(\theta)\big)$, where $j \in \{N,T\}$ indexes the group.
	
	When facing an agent with type $(\theta, j)$, the principal's utility is 
	$$V\big(w^j(\theta), q^j(\theta) \big) = P q^j(\theta) - (1-\tau(j) )w^j(\theta),$$
	where $P > 0$ is a given price of selling the product and we assume $P > \frac{\partial C(0, 0)}{\partial q}$ to ensure there is a gain from accepting the contract,  $j \in \{N,T\}$ is the group index, and $\tau$ captures the WOTC. $\tau(T) = \tau \in [0,1)$ is the subsidy or tax credit of hiring a trait agent, while $\tau(N) = 0$ for hiring a non-trait agent. The tax credit is modeled as depending on the agent's true trait status. 
	
	\subsection{Timeline}
	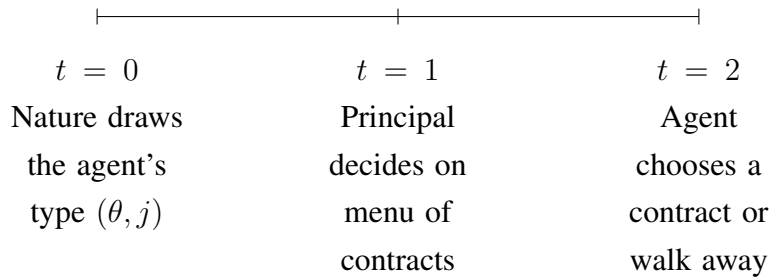
\begin{figure}[H]
		\centering
		\begin{tikzpicture}
			% Draw horizontal line
			\draw (-4,0) -- (4,0);
			
			% Draw vertical lines
			\foreach \x in {-4, 0 ,4}
			\draw (\x cm,3pt) -- (\x cm,-3pt);
			
			% Draw nodes
			\draw (-4,0) node[below=12pt, text width=2.5cm, align=center] 
			{$t=0$ \\ Nature draws the agent's type $(\theta, j)$}; % for agent
			\draw (0,0) node[below=12pt, text width=2.5cm, align=center] 
			{$t=1$ \\ Principal decides on menu of contracts};
			\draw (4,0) node[below=12pt, text width=2.5cm, align=center] 
			{$t=2$ \\ Agent chooses a contract or walks away};
		\end{tikzpicture}
		\caption{Timeline of the Model}
		\label{fig:timeline}
	\end{figure}
	
	At $t = 0$, Nature draws the group status ($N$ or $T$) and the cost $\theta$ for each agent. At $t = 1$, the principal designs the menu(s) of contracts to offer to the agents. At $t = 2$, the agent chooses one contract from the menu(s) or walks away without taking any contract. We normalize the utility of walking away to be zero. If the agent chooses a contract designed for the trait group, he is required to show evidence of being a trait agent. Only the trait agent can decide whether to present his trait status to the principal to claim the contract designed for the trait group. Upon seeing the trait status, the principal knows this agent must be a trait agent.
	
	\subsection{Agent's Problem}\label{subsec:Agent}
	We now consider the incentive compatibility and individual rationality for the agent. For the incentive compatibility, we define the following two concepts:
	
	\begin{defn}[Within-Group Incentive Compatibility]\label{def:WGIC}
		For an agent with cost $\theta \in [0,1]$, $$U(w^j(\theta), q^j(\theta); \theta) \geq U(w^j(\theta'), q^j(\theta'); \theta),$$
		for any $\theta' \in [0,1]$ and $j \in \{N,T\}$.
	\end{defn}
	
	This is the standard incentive compatibility condition, renamed within-group incentive compatibility to distinguish it from the across-group constraint below:
	
	\begin{defn}[Across-Group Incentive Compatibility]\label{def:AGIC}
		For a \textbf{trait} agent with cost $\theta \in [0,1]$, $$U(w^T(\theta), q^T(\theta); \theta) \geq U(w^N(\theta'), q^N(\theta'); \theta),$$
		for any $\theta' \in [0,1]$.
	\end{defn}
	
	This definition captures the across-group incentive. Since only the trait agent can deviate across groups (not vice versa), this constraint is unidirectional. In other words, a trait agent with cost $\theta$ should weakly prefer the contract designed for the trait agent with cost $\theta$ to that for the non-trait agent with any cost $\theta'$.
	
	Definitions \ref{def:WGIC} and \ref{def:AGIC} together give us the full incentive compatibility concept in this model. An agent will truthfully choose the contract designed for his cost $\theta$ and, on top of this, a trait agent will also truthfully choose the contract designed for his group. Therefore, we have incentive compatibility for both dimensions $\theta \in [0,1]$ and $j \in \{N,T\}$.
	
	We define individual rationality in the usual sense:
	\begin{defn}[Individual Rationality]\label{def:IR}
		For an agent with cost $\theta \in [0,1]$, $$U(w^j(\theta), q^j(\theta); \theta) \geq 0,$$
		for $j \in \{N,T\}$.
	\end{defn}
	
	For an agent with cost $\theta$, choosing the contract designed for him must yield at least zero payoff. Otherwise, the agent can always walk away without accepting any contract, and we normalize the outside option to be $0$.

	\subsection{Principal's Objective}
	The principal designs menus of contracts $\{(w^j,q^j)\}_{j \in \{N,T\}}$ to solve the following constrained optimization problem:
	\begin{align*}
		& \pi_T \int^1_0 \bigg[ Pq^T(\theta)- (1-\tau)w^T(\theta) \bigg] dF^T(\theta) + (1-\pi_T) \int^1_0 \bigg[ Pq^N(\theta) - w^N(\theta) \bigg] dF^N(\theta)
	\end{align*}
	\begin{align*}
		\text{s.t.} \quad 
		& \text{ within-group incentive compatibility constraints}, \\
		& \text{ across-group incentive compatibility constraints}, \\
		& \text{ individual rationality constraints}. \\
	\end{align*} 
	
	The first integral is taken over $F^T$, which is the CDF of the trait group, while the second integral is taken over the non-trait group. The integrals are weighted by the shares $\pi_T$ and $1- \pi_T$, respectively.

	\section{Optimal Contracts}\label{sec:optimal}
	%%%%%%%%%%%%%%%%%%%%%%%%%%%%%%%%%%%%%%%%%%%%%%%%%%%%%%%%%%%%%%%%%%%%%%%%%%%%%%%%%%%%%%%
	%%%%%%%%%%%%%%%%%%%% Result %%%%%%%%%%%%%%%%%%%%%%%%%%%%%%%%%%%%%%%%%%%%%%%%%%%%%%%%
	%%%%%%%%%%%%%%%%%%%%%%%%%%%%%%%%%%%%%%%%%%%%%%%%%%%%%%%%%%%%%%%%%%%%%%%%%%%%%%%%%%%%%%%
	We start by analyzing the agent's problem, where we will find necessary and sufficient conditions for the within-group incentive compatibility (IC) constraints, the across-group IC constraints, and the individual rationality (IR) constraints to be satisfied. We then replace the constraints in the principal's problem and solve for the optimal contract design.
	
	\subsection{The Agent's Problem}
	Consider the within-group IC constraints; these are the standard IC constraints that prevent an agent with cost $\theta$ from choosing a contract for another cost type. We can then follow the standard argument from \cite{borgers2015introduction} and give the following result:
	
	\begin{lemma}[Within-Group IC]\label{lemma:withIC}
		A set of contracts $(w^j, q^j)$ is \textbf{within-group incentive compatible} if and only if
		\begin{align*}
			& (1.) \quad q^j(\theta) \text{ is non-increasing in } \theta\\ %monotonicity
			& (2.) \quad w^j(\theta) = w^j(\theta)|_{\theta = 1} + C(q^j(\theta), \theta) - C(q^j(\theta), \theta)|_{\theta = 1} + \int^{1}_\theta \frac{\partial C(q^j(x), x)}{\partial x}dx, % revenue equivalence
		\end{align*} 
		where $j \in \{N,T\}.$
	\end{lemma}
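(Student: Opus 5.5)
The plan is the standard Mirrlees/Myerson characterization, applied separately to each group $j\in\{N,T\}$; the within-group constraints for $N$ and $T$ never interact. Fix $j$ and define the indirect utility $U^j(\theta)=w^j(\theta)-C(q^j(\theta),\theta)$. Condition (2.) is just a rewriting of the envelope formula
$$U^j(\theta)=U^j(1)+\int_\theta^1 \frac{\partial C(q^j(x),x)}{\partial x}\,dx.$$
So the whole proof amounts to showing that within-group IC is equivalent to "monotone $q^j$ plus this envelope formula."

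\emph{Necessity, monotonicity.} Take $\theta<\theta'$ and write the two IC inequalities ($\theta$ not mimicking $\theta'$, and $\theta'$ not mimicking $\theta$). Adding them cancels the wages and gives
$$C(q^j(\theta'),\theta)-C(q^j(\theta),\theta)\le C(q^j(\theta'),\theta')-C(q^j(\theta),\theta').$$
Expressing this as a double integral, it says $\int_{\theta}^{\theta'}\int_{q^j(\theta)}^{q^j(\theta')} \frac{\partial^2 C}{\partial q\partial \theta}\,dq\,dx\ge 0$. Since the cross partial is strictly positive (Spence--Mirrlees), this forces $q^j(\theta')\le q^j(\theta)$. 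Hence $q^j$ is non-increasing.

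\emph{Necessity, envelope formula.} The same pair of IC inequalities yields the sandwich
$$C(q^j(\theta'),\theta')-C(q^j(\theta'),\theta)\ \ge\ U^j(\theta)-U^j(\theta')\ \ge\ C(q^j(\theta),\theta')-C(q^j(\theta),\theta).$$
Because $q^j$ is monotone and nonnegative, it takes values in the compact interval $[0,q^j(0)]$. On that interval $\partial C/\partial\theta$ is continuous and therefore bounded, so $U^j$ is Lipschitz and hence absolutely continuous. Dividing the sandwich by $\theta'-\theta$ shows that at every continuity point of $q^j$ (all but countably many points) the derivative is $U^{j\prime}(\theta)=-\partial C(q^j(\theta),\theta)/\partial\theta$. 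Integrating from $\theta$ to $1$ and substituting the definition of $U^j$ then gives exactly the expression for $w^j(\theta)$ in (2.).

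\emph{Sufficiency.} Assume (1.) and (2.). For any $\theta,\theta'$,
$$U^j(\theta)-\big[w^j(\theta')-C(q^j(\theta'),\theta)\big]=\int_\theta^{\theta'}\Big[\frac{\partial C(q^j(x),x)}{\partial x}-\frac{\partial C(q^j(\theta'),x)}{\partial x}\Big]dx.$$
Suppose $\theta<\theta'$. Then monotonicity gives $q^j(x)\ge q^j(\theta')$ for $x\in[\theta,\theta']$. Since $\partial C/\partial\theta$ is increasing in $q$ (single crossing), the integrand is nonnegative, so the whole expression is nonnegative. The case $\theta>\theta'$ is symmetric: the orientation of the integral flips and so does the sign of the integrand. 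This establishes within-group IC.

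The main obstacle is the regularity step in the necessity direction. The quantity $q^j$ may be discontinuous, so we cannot differentiate IC naively. I would handle this through the Lipschitz sandwich argument above, equivalently the envelope theorem of Milgrom--Segal. This requires first establishing boundedness of $q^j$, which is why monotonicity is proved before the envelope formula.
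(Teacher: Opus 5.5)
Your overall route is the paper's: add the two pairwise IC inequalities and use single crossing for monotonicity, use the envelope formula for (2.), and use single crossing plus monotonicity for sufficiency. However, the necessity part as written has a sign error that breaks the monotonicity step. For $\theta<\theta'$, add ``$\theta$ does not mimic $\theta'$'' to ``$\theta'$ does not mimic $\theta$'' and cancel the wages. This gives
\[
C(q^j(\theta'),\theta')-C(q^j(\theta),\theta')\ \le\ C(q^j(\theta'),\theta)-C(q^j(\theta),\theta),
\]
which is the reverse of the inequality you display. Equivalently,
\[
\int_{\theta}^{\theta'}\int_{q^j(\theta)}^{q^j(\theta')}\frac{\partial^2 C(s,x)}{\partial q\,\partial\theta}\,ds\,dx\ \le\ 0 .
\]
This matters. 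Since $\theta'>\theta$ and the cross partial is strictly positive, the double integral has the sign of $q^j(\theta')-q^j(\theta)$. Your version ($\ge 0$) would therefore force $q^j(\theta')\ge q^j(\theta)$, the opposite of the conclusion you draw. The same reversal appears in your sandwich. The correct one is
\[
C(q^j(\theta'),\theta')-C(q^j(\theta'),\theta)\ \le\ U^j(\theta)-U^j(\theta')\ \le\ C(q^j(\theta),\theta')-C(q^j(\theta),\theta).
\]
The lower bound comes from type $\theta$ not mimicking $\theta'$, and the upper bound from $\theta'$ not mimicking $\theta$. Your reversed version, if it held, would again imply that $q^j$ is non-decreasing.

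Once both displays are flipped, the rest of your argument is sound, and in places tighter than the paper's. The Lipschitz bound and the derivative computation at continuity points of $q^j$ are unaffected, because both bounds divided by $\theta'-\theta$ still converge to $\partial C(q^j(\theta),\theta)/\partial\theta$. That absolute-continuity argument supplies the regularity the paper skips when it simply invokes ``the envelope condition'' and the fundamental theorem of calculus. Your sufficiency identity is correct and treats both directions of deviation. The paper, by contrast, only verifies that a higher type $\theta_2$ does not mimic a lower type $\theta_1$ and leaves the other direction as ``w.l.o.g.''
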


	In words, Lemma \ref{lemma:withIC} gives us a necessary and sufficient condition for a set of contracts $(w^j,q^j)$ to be within-group IC: a monotonicity condition and a revenue equivalence condition. The monotonicity condition says the quantity cannot increase when the agent becomes more costly. The revenue equivalence condition says the wage for any cost type $\theta$ is determined by the wage of the highest cost agent ($\theta = 1$), the difference in direct production cost, and the cumulative partial effect of type on production cost, $\int^1_\theta \frac{\partial C(q^j(x),x)}{\partial x} dx$, which captures the informational rent.
	
	Next, we consider the across-group IC constraints. As the across-group IC constraints allow the trait agent to choose any contract from the non-trait menu, we first analyze which is the best non-trait contract to choose. We have the following lemma:
	
	\begin{lemma}\label{lemma:acrossIC}
		A trait agent with cost $\theta$ pretending to be a non-trait agent with any cost type maximizes his utility by choosing the contract $\big(w^N(\theta), q^N(\theta) \big)$.
	\end{lemma}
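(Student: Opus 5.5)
The key observation is that the payoff function $U(w,q;\theta)=w-C(q,\theta)$ does not depend on the group index $j$. A trait agent with cost $\theta$ who hides his evidence therefore evaluates every non-trait contract exactly as a non-trait agent with cost $\theta$ would. The lemma should then follow directly from the within-group incentive compatibility of the non-trait menu (Definition~\ref{def:WGIC} with $j=N$), which is maintained as a constraint of the principal's problem.

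Concretely, I would fix a trait agent with cost $\theta$ and an arbitrary report $\theta'\in[0,1]$ into the non-trait menu, and compare his payoff $U(w^N(\theta'),q^N(\theta');\theta)$ with $U(w^N(\theta),q^N(\theta);\theta)$. Because utility is the same function for both groups, this comparison is literally the within-group IC inequality for a non-trait agent of cost $\theta$. That inequality gives
\[
U\big(w^N(\theta),q^N(\theta);\theta\big)\ \ge\ U\big(w^N(\theta'),q^N(\theta');\theta\big)\quad\text{for all } \theta'\in[0,1],
\]
so $\big(w^N(\theta),q^N(\theta)\big)$ maximizes the deviating trait agent's payoff over the non-trait menu. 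If one wants to rely only on Lemma~\ref{lemma:withIC}, the same conclusion can be recovered as follows. Use the envelope representation of $w^N$ to write the payoff difference as
\[
\int_{\theta'}^{\theta}\Big[\tfrac{\partial C(q^N(x),x)}{\partial x}-\tfrac{\partial C(q^N(x),\theta)}{\partial \theta}\Big]dx .
\]
This expression is nonpositive by monotonicity of $q^N$ together with $\partial^2 C/\partial q\partial\theta>0$, and the argument is the standard sufficiency step.

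There is no real obstacle here. The only point needing care is to state clearly that the lemma is a property of menus satisfying within-group IC, since the trait agent's type enters only through $\theta$. The consequence I would then record is that Definition~\ref{def:AGIC} reduces to the pointwise condition
\[
U\big(w^T(\theta),q^T(\theta);\theta\big)\ \ge\ U\big(w^N(\theta),q^N(\theta);\theta\big)\quad\text{for each } \theta .
\]
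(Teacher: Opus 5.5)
Your main argument is the same as the paper's: because $U$ does not depend on the group index, the within-group IC inequality for the non-trait menu at cost $\theta$ is exactly the claim. One small fix in your optional envelope route: the second term of the integrand should be $\partial C(q^N(\theta'),x)/\partial x$, not $\partial C(q^N(x),\theta)/\partial\theta$; with that correction, monotonicity of $q^N$ and single-crossing sign it as you say.
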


	Therefore, together with the within-group IC, the across-group IC reduces to comparing the utility received from the contract for the trait agent to the utility for the non-trait agent under the same cost type $\theta$, that is, $U(w^T(\theta), q^T(\theta); \theta) \geq U(w^N(\theta), q^N(\theta); \theta)$. This result comes from the fact that the group identity does not enter the utility function. We can state the following lemma for the agent's incentive problem:
	
	\begin{lemma}\label{lemma:ICs}
		A set of contracts $(w^j, q^j)$ is \textbf{within-group incentive compatible} and \textbf{across-group incentive compatible} if and only if
		\begin{align*}
			& (1.) \quad \text{for any } \theta, q^j(\theta) \text{ is non-increasing in } \theta, \\ % monotonicity
			& (2.) \quad \text{for any } \theta, w^j(\theta) = w^j(\theta)|_{\theta = 1} + C(q^j(\theta),\theta) - C(q^j(\theta), \theta)|_{\theta = 1} + \int^{1}_\theta \frac{\partial C(q^j(x), x)}{\partial x} dx, \\  % revenue equivalence
			& (3.) \quad \text{for any } \theta, \int^{1}_{\theta} \bigg( \frac{\partial C(q^N(x), x)}{\partial x} - \frac{\partial C(q^T(x), x)}{\partial x} \bigg) dx \leq U\big( w^T(\theta), q^T(\theta); \theta \big)|_{\theta = 1} - U\big( w^N(\theta), q^N(\theta); \theta \big)|_{\theta = 1},
		\end{align*} 
		where $j \in \{N,T\}.$
	\end{lemma}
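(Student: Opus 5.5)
We prove the two directions separately, combining Lemma~\ref{lemma:withIC} and Lemma~\ref{lemma:acrossIC}. Conditions (1.) and (2.) are exactly the characterization of within-group incentive compatibility in Lemma~\ref{lemma:withIC}, applied to each $j\in\{N,T\}$. So the only new content is that, given (1.) and (2.), across-group incentive compatibility is equivalent to (3.).

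For this step we first invoke Lemma~\ref{lemma:acrossIC}. When the non-trait menu is within-group IC, a trait agent of cost $\theta$ who hides his evidence does best with $(w^N(\theta),q^N(\theta))$. Definition~\ref{def:AGIC} therefore holds for all $\theta'$ if and only if
$$U^T(\theta)\equiv U(w^T(\theta),q^T(\theta);\theta)\geq U(w^N(\theta),q^N(\theta);\theta)\equiv U^N(\theta) \quad \text{for every } \theta.$$
Necessity is immediate by taking $\theta'=\theta$. Sufficiency follows because $U^N(\theta)\ge U(w^N(\theta'),q^N(\theta');\theta)$, which is just the non-trait within-group IC evaluated at cost $\theta$; this uses the fact that the utility does not depend on $j$.

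Next we rewrite $U^j(\theta)$ via the envelope representation implied by (2.). Substituting (2.) into $U=w-C$ gives
$$U^j(\theta)=w^j(1)-C(q^j(1),1)+\int_\theta^1 \frac{\partial C(q^j(x),x)}{\partial x}\,dx = U^j(1)+\int_\theta^1 \frac{\partial C(q^j(x),x)}{\partial x}\,dx.$$
The direct-cost terms $C(q^j(\theta),\theta)$ cancel here. Hence $U^T(\theta)\ge U^N(\theta)$ rearranges to
$$\int_\theta^1\Big(\frac{\partial C(q^N(x),x)}{\partial x}-\frac{\partial C(q^T(x),x)}{\partial x}\Big)dx\le U^T(1)-U^N(1),$$
which is exactly (3.). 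Every step is an equivalence, so both directions follow.

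The only delicate point is reading condition (2.) correctly. The notation $C(q^j(\theta),\theta)|_{\theta=1}$ must be interpreted as $C(q^j(1),1)$, meaning the wage at type $1$ net of the cost at type $1$, which equals $U^j(1)$. Under that reading the cancellation above goes through. I would state this interpretation explicitly, since that bookkeeping is the one place the argument could slip. Integrability of $\partial C/\partial x$ along the monotone schedules $q^j$ is already assumed in Lemma~\ref{lemma:withIC}.
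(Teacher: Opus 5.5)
Your proof is correct and follows essentially the same route as the paper: it gets (1.) and (2.) from Lemma~\ref{lemma:withIC}, uses Lemma~\ref{lemma:acrossIC} to reduce across-group IC to the same-type comparison $U^T(\theta)\ge U^N(\theta)$, and converts that comparison into (3.) via the envelope representation implied by (2.). Your explicit reading of $C(q^j(\theta),\theta)|_{\theta=1}$ as $C(q^j(1),1)$ is the right one, and it is exactly what the paper leaves implicit in its step ``by revenue equivalence.''
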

	
	Thus far, we have the agent's full incentive problem; the rest is to consider the individual rationality constraint. To simplify the notation, we let $U^j(\theta) =  U\big( w^j(\theta), q^j(\theta); \theta \big)$, and we can do so because the within-group incentive constraints are guaranteed by monotonicity and revenue equivalence. Also, let $U^j(1) =  U\big( w^j(\theta), q^j(\theta); \theta \big)|_{\theta = 1}$, for $j \in \{N,T\}$. Lemma \ref{lemma:IR} characterizes the binding individual rationality constraints by establishing the utility of the highest cost agents in both groups.
	\begin{lemma}\label{lemma:IR}
		If a within-group incentive compatible, across-group incentive compatible, and individual rational, set of contracts maximizes the principal's objective, then $$U^N(1) = 0 \quad
		\text{ and } \quad 
		U^T(1) = \max_{\theta \in [0,1]} \int_{\theta}^{1} \bigg( \frac{\partial C(q^N(x), x)}{\partial x} - \frac{\partial C(q^T(x), x)}{\partial x} \bigg) dx. $$
	\end{lemma}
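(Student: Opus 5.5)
The plan is to argue by contradiction/perturbation, using the characterization in Lemma~\ref{lemma:ICs}. First I would rewrite the objective in terms of $U^j(1)$ and the quantity schedules. By revenue equivalence (part (2) of Lemma~\ref{lemma:ICs}), $U^j(\theta) = U^j(1) + \int_\theta^1 \frac{\partial C(q^j(x),x)}{\partial x}dx$, so $w^j(\theta) = U^j(\theta) + C(q^j(\theta),\theta)$. Hence the principal's payoff equals a term depending only on $(q^N,q^T)$, minus $(1-\pi_T)U^N(1)$, minus $\pi_T(1-\tau)U^T(1)$. Since $\tau<1$, both constants enter the objective with strictly negative coefficients. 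Holding the quantity schedules fixed, the principal therefore wants each $U^j(1)$ as small as the constraints allow.

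Next I would identify the constraints on $U^N(1)$ and $U^T(1)$ for fixed $(q^N,q^T)$. Because $\partial C/\partial\theta>0$, each $U^j(\theta)$ is non-increasing in $\theta$. So IR for all types of group $j$ reduces to $U^j(1)\ge 0$.

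The across-group condition (part (3)) reads $U^T(1)-U^N(1) \ge \int_\theta^1 \big(\frac{\partial C(q^N(x),x)}{\partial x}-\frac{\partial C(q^T(x),x)}{\partial x}\big)dx$ for every $\theta$. Define $M=\max_{\theta\in[0,1]}$ of that integral. The maximum exists because the integrand is bounded: monotone $q^j$ are bounded on $[0,1]$ and $C$ is $C^2$, so the integral is continuous in $\theta$. The condition becomes $U^T(1)\ge U^N(1)+M$.

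Note $M\ge 0$, since $\theta=1$ gives the value $0$. So $U^T(1)\ge 0$ is implied once $U^N(1)\ge 0$. The feasible set for $(U^N(1),U^T(1))$ is therefore $\{U^N(1)\ge 0,\ U^T(1)\ge U^N(1)+M\}$. None of these changes affects monotonicity or the other constraints.

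Finally, suppose an optimal menu had $U^N(1)>0$. Lowering $U^N(1)$ to $0$ and $U^T(1)$ by the same amount keeps feasibility. Concretely, this subtracts a constant from $w^N(\cdot)$ and from $w^T(\cdot)$, which preserves within-group IC and keeps the difference in part (3) unchanged. It strictly raises the objective, a contradiction, so $U^N(1)=0$.

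Similarly, if $U^T(1)>M$, lowering $w^T$ uniformly by $U^T(1)-M$ stays feasible and raises profit by $\pi_T(1-\tau)(U^T(1)-M)>0$, assuming $\pi_T>0$. Hence $U^T(1)=M$, which is the claimed formula.

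The main technical point is checking that these uniform wage shifts respect all constraints, including nonnegativity of wages. A uniform downward shift might push some $w^j$ below zero. I would handle this by noting $w^j(\theta)=U^j(\theta)+C(q^j(\theta),\theta)$. After the shift, $U^j(\theta)\ge 0$ still holds by IR, so $w^j\ge C(q^j,\theta)$, which is nonnegative under the natural normalization $C(0,\theta)\ge 0$ together with $\partial C/\partial q>0$. If that normalization is not assumed, I would state it explicitly as the condition under which the lemma holds.
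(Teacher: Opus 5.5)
Your proposal is correct and follows essentially the same route as the paper. You reduce IR to $U^j(1)\ge 0$, rewrite the across-group condition as $U^T(1)\ge U^N(1)+M$ with $M$ attained by continuity of the integral, and use uniform wage reductions to show both bounds bind; your joint shift for $U^N(1)$ and your check of wage nonnegativity (under $C(0,\theta)\ge 0$) make explicit two points the paper's proof passes over quickly, without changing the argument.
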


	Lemma \ref{lemma:IR} shows that the most costly non-trait agent must receive zero utility, which is a standard result in the literature; however, the most costly trait agent need not receive zero utility; his utility is determined endogenously and depends on the integral of the difference in the partial effect of type on cost across the two groups. In other words, it depends on the information rent given to the two groups. 
	
	Next, we turn to the principal's problem.
	
	\subsection{The Principal's Problem}
	We can rewrite the principal's problem as follows:
	\begin{align*}
		\max_{q^T, q^N, U^T(1)} & \pi_T \int^1_0 \biggl\{ Pq^T(\theta)- (1-\tau) \bigg[ U^T(1) + \frac{\partial C(q^T(\theta), \theta)}{\partial \theta} \frac{F^T(\theta)}{f^T(\theta)} + C(q^T(\theta), \theta)\bigg] \biggr\} dF^T(\theta) + \\
		&(1-\pi_T) \int^1_0 \biggl\{ Pq^N(\theta) - \bigg[ \frac{\partial C(q^N(\theta), \theta)}{\partial \theta} \frac{F^N(\theta)}{f^N(\theta)} + C(q^N(\theta), \theta)\bigg] \biggr\} dF^N(\theta)
	\end{align*}
	\begin{align}
		\text{s.t.} \quad 
		& \text{ for any } \theta, q^j(\theta) \text{ is non-increasing in } \theta, j \in \{N,T\}, \\
		& U^T(1) = \max_{\theta \in [0,1]} \int_{\theta}^{1} \bigg( \frac{\partial C(q^N(x), x)}{\partial x} - \frac{\partial C(q^T(x), x)}{\partial x} \bigg) dx.
	\end{align} 
		
	The principal's goal is to maximize the objective function by choosing quantities for each group, $q^T$ and $q^N$, and setting the utility for the most costly trait agent $U^T(1)$. In the objective, the term $\frac{\partial C(q^j(\theta), \theta)}{\partial \theta} \frac{F^j(\theta)}{f^j(\theta)} + C(q^j(\theta), \theta)$ is the \textit{virtual cost} of type $\theta$ in group $j$: it combines the direct production cost $C(q^j(\theta), \theta)$ with the informational rent markup $\frac{\partial C}{\partial \theta} \cdot \frac{F^j(\theta)}{f^j(\theta)}$, which the principal must concede to preserve the within-group incentive compatibility. Constraint (2) states that the utility for the most costly trait agent is determined by the maximum cumulative difference in the partial effect of type on cost between the two groups. To proceed, we use the Monotone Likelihood Ratio Property to order these two groups. We assume the following relation between $f^T$ and $f^N$:
	
	\begin{assu}[Monotone Likelihood Ratio Property (MLRP)]
		For all $\theta_1, \theta_2 \in [0,1]$ with $\theta_2 > \theta_1$, $$\frac{f^T(\theta_2)}{f^N(\theta_2)} \geq \frac{f^T(\theta_1)}{f^N(\theta_1)}.$$
	\end{assu}
	
	Equivalently, the likelihood ratio $f^T/f^N$ is non-decreasing in $\theta$. In our context, this assumption suggests that it is more likely to find an agent in the trait group if a higher cost $\theta$ is shown. This assumption is supported by the use of the Work Opportunity Tax Credit. Since the WOTC aims to encourage the principal to hire trait agents, this implies the trait agent may be discriminated against or have a disadvantage in the hiring process.
	
	Conversely, we consider the case where the likelihood ratio is ordered in the opposite direction, and we define it as the reverse MLRP:
	\begin{assu}[Reverse MLRP]
		For all $\theta_1, \theta_2 \in [0,1]$ with $\theta_2 > \theta_1$, $$\frac{f^T(\theta_2)}{f^N(\theta_2)} \leq \frac{f^T(\theta_1)}{f^N(\theta_1)}.$$
	\end{assu}
	
	Next, we define a commonly used assumption to impose regularity on the distributions. 
	
	\begin{assu}[Monotone Reverse Hazard Rate]
		Let the reverse hazard rate be $\frac{f(\theta)}{F(\theta)}$, where $F$ is a CDF and $f$ is the associated PDF. The monotone reverse hazard rate property holds if $\frac{f(\theta)}{F(\theta)}$ is non-increasing in $\theta \in [0,1]$.
	\end{assu}
	
	The reverse hazard rate is the instantaneous probability of an agent being drawn as type $\theta$, given that he is not drawn at any type $\theta' > \theta$, where $\theta'$ is a higher-cost type than $\theta$. We then require the reverse hazard rate to be monotone, which is not unusual in the standard mechanism design literature.
	
	To begin the analysis, we first consider the optimal contract for the most efficient agents in both groups, which are the agents with $\theta = 0$. Using the calculus of variations, we have the following proposition:
	\begin{prop}[No Distortion at the Top]\label{prop:startpoint}
		$\frac{\partial C(q^T(0),0)}{\partial q^T} = \frac{P}{1-\tau}$, and $\frac{\partial C(q^N(0),0)}{\partial q^N} = P$.
	\end{prop}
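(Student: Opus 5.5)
The plan is to read the no-distortion-at-the-top property off the pointwise first-order condition of the relaxed problem, evaluated at $\theta=0$. First I drop the monotonicity constraints (1) and treat $U^T(1)$ as a scalar choice variable subject to constraint (2). For fixed $U^T(1)$, the objective is an integral functional $\int_0^1 L^j(\theta,q^j(\theta))\,d\theta$ for each group $j$, with no derivative of $q^j$ appearing. With $\tau(N)=0$ and $\tau(T)=\tau$, the integrand is
\[
L^j(\theta,q)=\Bigl[P q-(1-\tau(j))\Bigl(C(q,\theta)+\frac{\partial C(q,\theta)}{\partial\theta}\frac{F^j(\theta)}{f^j(\theta)}\Bigr)\Bigr]f^j(\theta),
\]
where the $U^T(1)$ term is a constant. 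The Euler--Lagrange equation therefore reduces to a pointwise condition: $\partial L^j/\partial q=0$ at every $\theta$ where the perturbation of $q^j$ is unconstrained.

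The key steps, in order, are these.

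1. Compute the pointwise conditions:
\[
P=(1-\tau(j))\Bigl[\frac{\partial C(q^j(\theta),\theta)}{\partial q}+\frac{\partial^2 C(q^j(\theta),\theta)}{\partial q\,\partial\theta}\frac{F^j(\theta)}{f^j(\theta)}\Bigr].
\]

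2. Evaluate at $\theta=0$. Since $F^j(0)=0$ and $f^j(0)>0$ (the support is $[0,1]$), the rent term vanishes. This gives $\partial C(q^T(0),0)/\partial q=P/(1-\tau)$ and $\partial C(q^N(0),0)/\partial q=P$.

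3. Check existence and uniqueness of these solutions. Strict convexity in $q$ makes $\partial C/\partial q$ strictly increasing. The assumption $P>\partial C(0,0)/\partial q$, together with $1-\tau\in(0,1]$, gives a positive interior solution, provided $\partial C/\partial q$ is unbounded or large enough. I would note this as a mild regularity condition.

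The main obstacle is the coupling through constraint (2). Here $U^T(1)$ depends on the entire profiles $q^N$ and $q^T$ via $\max_{\theta}\int_\theta^1(\cdot)\,dx$. Perturbing $q^j$ on a neighborhood of $0$ could in principle change $U^T(1)$, and this would add a multiplier term to the first-order condition near $0$.

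I would handle this by noting that a perturbation of $q^j$ on $[0,\varepsilon)$ changes the integrand $\partial C(q^j(x),x)/\partial x$ only on a set of measure $\varepsilon$. The resulting effect on $U^T(1)$ is $O(\varepsilon)$ times the perturbation size, while the objective's direct effect is weighted by $f^j$. I would then argue that at $\theta=0$ the rent term carries the factor $F^j(0)=0$. More concretely, I would write the Lagrangian with a multiplier measure $\mu$ on the maximizer set of (2). The derivative of the constraint term with respect to $q^j(\theta)$ is $\pm\,\partial^2C/\partial q\partial\theta\cdot\mu([0,\theta])$, which is the weight of the cumulative constraint on types below $\theta$. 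If $\mu$ has no atom at $0$, this weight vanishes at $\theta=0$.

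To rule out an atom at $0$, I would use the following argument. If the maximizer of (2) is at $\theta=0$, the binding integral runs over the whole interval. Even then, the marginal contribution of the point $\theta=0$ itself has measure zero. So the pointwise Euler--Lagrange condition at $\theta=0$ is unaffected, and the multiplier enters only through a density term multiplying $\mu([0,\theta])/f^j(\theta)$.

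Finally, I need to check that monotonicity (1) does not bind at $\theta=0$. This holds because any ironing interval starting at $0$ would preserve the same first-order condition at its left endpoint in the limit. Alternatively, I can simply note that the relaxed solution is decreasing near $0$ under the monotone reverse hazard rate assumption.
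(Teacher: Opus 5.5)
Your multiplier formulation is a reasonable alternative to the paper's transversality argument: by Fubini, the constraint term contributes $\pm C_{q\theta}\,\mu([0,\theta])$ to each pointwise condition. You also correctly isolate the crux, namely that the conclusion holds exactly when $\mu$ has no atom at $\theta=0$. But your reason for excluding that atom is wrong. An atom $\mu(\{0\})>0$ means the across-group IC binds at $\theta=0$, i.e.\ $U^T(0)=U^N(0)$. That constraint, $U^T(1)\ge\int_0^1\bigl(\partial C(q^N(x),x)/\partial x-\partial C(q^T(x),x)/\partial x\bigr)dx$, depends on $q^j(x)$ at every $x$. So the atom sits in $\mu([0,\theta])$ for all $\theta$ and survives the limit $\theta\to0^+$. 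Your own conditions then become
\[
C_q(q^T(0),0)=\frac{P}{1-\tau}+\frac{\mu(\{0\})\,C_{q\theta}(q^T(0),0)}{(1-\tau)\pi_T f^T(0)},\qquad C_q(q^N(0),0)=P-\frac{\mu(\{0\})\,C_{q\theta}(q^N(0),0)}{(1-\pi_T)f^N(0)},
\]
which is a genuine distortion at the top. The ``measure zero'' remark confuses the index $\theta=0$ of the constraint with its range of integration. The $O(\varepsilon)$ comparison cannot settle the question either, because a perturbation on $[0,\varepsilon)$ moves both the objective and $U^T(1)$ by amounts of the same order $\varepsilon$.

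The missing idea is the paper's treatment of the binding case. If the constraint binds at $0$, then $0$ minimizes the slack $\underline U+R^T(\theta)-R^N(\theta)\ge0$. Its right derivative $r^N(0)-r^T(0)$ must therefore be nonnegative, i.e.\ $q^N(0)\ge q^T(0)$ (using continuity of $q^j$ at $0$). Meanwhile, the admissible one-sided perturbations $h^T(0)\ge h^N(0)$ correspond to the displayed conditions with $\mu(\{0\})\ge0$. They give $C_q(q^N(0),0)\le P<P/(1-\tau)\le C_q(q^T(0),0)$, hence $q^T(0)>q^N(0)$ when $\tau>0$, which is a contradiction. When $\tau=0$ they force $q^T(0)=q^N(0)$ with both marginal costs equal to $P$. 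In your notation, $\mu(\{0\})>0$ makes both displayed inequalities strict, so $q^T(0)>q^N(0)$ for every $\tau$, contradicting $q^N(0)\ge q^T(0)$. Hence $\mu(\{0\})=0$, and your limit argument goes through. Separately, your ironing claim is not right as stated: on an ironed interval $[0,a]$ only an averaged condition holds, not the pointwise one at $0$. The paper also leaves the monotonicity check aside, so this point is secondary.
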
 

	For the most efficient agent $\theta = 0$, there is no lower-cost type that could mimic him, so the principal faces no downward incentive pressure and can set quantities at the first-best level. For the non-trait group, the first-best condition is the standard equality of marginal cost and price. For the trait group, the tax credit $\tau$ effectively lowers the principal's net cost of production, raising the effective price to $\frac{P}{1-\tau} > P$; thus, the most efficient trait agent is induced to produce strictly more than the most efficient non-trait agent when $\tau > 0$. This gap widens as $\tau$ increases, reflecting the principal's incentive to extract greater output from the trait group when the government shares a larger part of the cost.
	
	Next, we consider a single-crossing condition on information rents of these two groups given a type $\theta$:
	\begin{assu}\label{assu:singlecrossinginfo}
		$\frac{F^T(\theta)}{f^T(\theta)} - \frac{F^N(\theta)}{f^N(\theta)}$ is non-decreasing in $\theta$.
	\end{assu}
	
	In words, the difference in the information rents required to incentivize the two groups is non-decreasing in $\theta$. 

	Next, we present our main theorem:
	\begin{thm}\label{thm:main}
		Let $0 \leq \tau < 1$. Assume the Monotone Reverse Hazard Rate holds for the relevant distributions or measures, and assume that $\frac{F^T(\theta)}{f^T(\theta)} - \frac{F^N(\theta)}{f^N(\theta)}$ is non-decreasing in $\theta$.
		\begin{enumerate}
			\item Under the MLRP, the optimal menu of contracts induces full separation up to a hiring cutoff $\tilde{\theta} \in (0,1]$. For $\theta \in [0, \tilde{\theta})$, the optimal quantities are the unique solution to 
			$$\Psi^T(q^T(\theta),\theta) = \frac{P}{1-\tau}, \qquad \Psi^N(q^N(\theta),\theta) = P,$$
			where $\Psi^j(q,\theta) := C_{q}(q,\theta) + \frac{F^j(\theta)}{f^j(\theta)}\,C_{q \theta}(q,\theta)$ is the virtual marginal cost for group $j$. Furthermore, $U^T(1) = 0$.

			\item Under the reverse MLRP, the optimal contract is characterized by a cutoff $\theta^* \in [0,1]$:
			\begin{itemize}
				\item \textit{Separating Region} ($0 \leq \theta < \theta^*$): quantities solve the virtual cost equations above. This region is non-empty when $\tau > 0$.
				\item \textit{Pooling Region} ($\theta^* \leq \theta \leq 1$): $q^T(\theta) = q^N(\theta) = q_p(\theta)$, the unique solution to
				$$P\hat{f}(\theta) = \widetilde{f}(\theta)\,C_{q}(q_p,\theta) + \widetilde{F}(\theta)\,C_{q \theta}(q_p,\theta),$$
				where $\hat{f}(\theta) := \pi_T f^T(\theta) + (1-\pi_T)f^N(\theta)$, $\widetilde{f}(\theta) := (1-\tau)\pi_T f^T(\theta) + (1-\pi_T)f^N(\theta)$, and $\widetilde{F}(\theta)$ is defined analogously.
			\end{itemize}
			When $\tau = 0$, $\theta^* = 0$ and the optimal contract is full pooling. When $\tau > 0$ and $\theta^* = 1$, the optimal contract induces full separation. Furthermore, $U^T(1) = 0$.
		\end{enumerate}
		The wage for each worker can be pinned down by the revenue equivalence result from Lemma~\ref{lemma:ICs}.
	\end{thm}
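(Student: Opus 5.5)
The plan is to work with the reduced problem stated just before the theorem, where the wages are already eliminated using Lemma~\ref{lemma:ICs} and Lemma~\ref{lemma:IR}. The only remaining coupling between the two groups is the rent term
\[
U^T(1)=\max_{\theta}\int_\theta^1\big(C_\theta(q^N(x),x)-C_\theta(q^T(x),x)\big)\,dx,
\]
which enters the objective with weight $-(1-\tau)\pi_T$. The strategy has two steps. First, drop constraint (2): set $U^T(1)=0$ and impose only the across-group constraint in its reduced form, namely $\int_\theta^1 (C_\theta(q^N,x)-C_\theta(q^T,x))dx\le 0$ for all $\theta$. Second, show that the candidate solution satisfies this, so that $U^T(1)=0$ is consistent with Lemma~\ref{lemma:IR}. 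Since $C_{q\theta}>0$, a sufficient condition for the inequality is $q^T(x)\ge q^N(x)$ pointwise on $[\theta,1]$. So in each case I would verify that ordering, or enforce it where it fails.

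For part 1, I would first solve the relaxed problem pointwise. Its first-order conditions are exactly $\Psi^T(q,\theta)=P/(1-\tau)$ and $\Psi^N(q,\theta)=P$, with the corner $q=0$ wherever the virtual marginal cost at zero exceeds the right-hand side. Uniqueness follows because $\Psi^j$ is strictly increasing in $q$, which needs $C_{qq\theta}\ge 0$ or a similar property I would assume as in the quadratic case. Monotonicity in $\theta$ follows from the monotone reverse hazard rate ($F/f$ non-decreasing) together with $C_{q\theta}>0$, so constraint (1) is slack. Next I compare the two schedules. MLRP implies reverse-hazard-rate dominance, that is $F^T/f^T\le F^N/f^N$. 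Together with $P/(1-\tau)\ge P$, this gives $\Psi^T(q,\theta)\le\Psi^N(q,\theta)$, hence $q^T\ge q^N$ pointwise. The rent integrand is then $\le 0$ everywhere, so the maximum in (2) is attained at $\theta=1$ and $U^T(1)=0$. The hiring cutoff $\tilde\theta$ is the point where $q^N$ reaches zero (the trait cutoff being weakly larger). Proposition~\ref{prop:startpoint} guarantees positive production near $0$, which gives $\tilde\theta>0$.

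For part 2, reverse MLRP reverses the comparison of the information-rent terms, so the relaxed solution can have $q^N>q^T$. The tax credit pushes in the opposite direction. By Assumption~\ref{assu:singlecrossinginfo}, the gap in the rent terms changes sign at most once as $\theta$ grows, so the pointwise ordering $q^T\ge q^N$ holds on an initial interval $[0,\theta^*)$ and fails beyond it. When $\tau>0$, Proposition~\ref{prop:startpoint} makes $q^T(0)>q^N(0)$ strictly, so $\theta^*>0$. When $\tau=0$ and reverse MLRP holds, the rent terms already violate the ordering from the start, so $\theta^*=0$.

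On $[\theta^*,1]$ I would impose the binding constraint $q^T=q^N=q_p$ and optimize over the common schedule. Adding the two groups' integrands with weights $(1-\tau)\pi_T f^T$ and $(1-\pi_T)f^N$ and taking the pointwise Euler condition yields exactly $P\hat f=\widetilde f C_q+\widetilde F C_{q\theta}$. Here $\widetilde F$ appears because the rent term integrates to the weighted CDFs. Uniqueness and monotonicity of $q_p$ follow from the monotone reverse hazard rate applied to the mixture measure $\widetilde F/\widetilde f$, which is why the theorem says ``relevant distributions or measures''. Pooling makes the rent integrand vanish on $[\theta^*,1]$, and separation keeps it nonpositive below, so again $U^T(1)=0$. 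The wages then come from Lemma~\ref{lemma:ICs}.

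The main obstacle is justifying that pooling is optimal in the violation region, rather than, say, setting $U^T(1)>0$ or using a nonlocal ordering in which only the integral constraint binds. My approach is a Lagrangian with a nonnegative multiplier measure $\mu$ on the integral constraint $\int_\theta^1(\cdot)\le U^T(1)$. The condition that the derivative with respect to $U^T(1)$ is nonnegative is $\int d\mu\le(1-\tau)\pi_T$, and this forces $U^T(1)=0$ whenever the multiplier's total mass stays below $(1-\tau)\pi_T$. I would then verify that the cumulative multiplier implied by the pooling solution is nonnegative and monotone (ironing-style), using the single-crossing assumption. This verification is the delicate step.
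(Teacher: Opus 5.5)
Your Part 1, and the mechanics of your Part 2, follow the paper. In Part 2 that means locating $\theta^*$ through the crossing condition $\bigl(\frac{F^T}{f^T}-\frac{F^N}{f^N}\bigr)C_{q\theta}=\frac{P\tau}{1-\tau}$ and summing the two first-order conditions to obtain $q_p$.

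The gap is the step you flag as the main obstacle and then leave open. You correctly reduce the optimality of $U^T(1)=0$ under reverse MLRP to showing that the multiplier on the disclosure constraint has total mass at most $(1-\tau)\pi_T$. However, you give no argument that this holds. The check you propose is that the cumulative multiplier is nonnegative and monotone, but that is only dual feasibility and places no bound on the total mass. Without the bound, nothing excludes the alternative that Part~2 exists to rule out: paying every trait agent a uniform rent $U^T(1)>0$ so as to slacken the constraint and separate the groups further up. This is exactly what Claim~\ref{claim:pooling_optimal} in the paper establishes, and it is the substantive content of the reverse-MLRP result.

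The missing idea is to read the cumulative multiplier off the trait group's first-order condition. Let $M(\theta)$ denote the multiplier mass on $[0,\theta]$. The condition for $q^T(\theta)$ then reads
\[
M(\theta)=\Xi^T(\theta):=\pi_T f^T(\theta)\,\frac{(1-\tau)C_q(q^T,\theta)-P}{C_{q\theta}(q^T,\theta)}+(1-\tau)\pi_T F^T(\theta).
\]
The no-distortion-at-the-top condition of Proposition~\ref{prop:startpoint} gives $\Xi^T(0)=0$. Hence the total mass is $\Xi^T(1)$, and the marginal value of $U^T(1)$ is $-(1-\tau)\pi_T+\Xi^T(1)$.

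At any pooling point, substitute the pooled condition $P\hat f=\widetilde f C_q+\widetilde F C_{q\theta}$ to eliminate $C_q/C_{q\theta}$. This yields
\[
\Xi^T=\frac{(1-\tau)\pi_T(1-\pi_T)\bigl(F^Tf^N-F^Nf^T\bigr)}{\widetilde f}-\frac{\pi_T(1-\pi_T)\tau f^Tf^N P}{\widetilde f\,C_{q\theta}}.
\]
The first term is strictly below $(1-\tau)\pi_T$ (for $f^T>0$), because $(1-\pi_T)f^N(F^T-1)\le 0<(1-\tau)\pi_T f^T+(1-\pi_T)F^Nf^T$. The second term is nonpositive.

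Now suppose $\theta=1$ lies in a separating interval, as it must whenever $U^T(1)>0$. Then $M$ is constant on that interval, and the bound is inherited from its left endpoint, which is either $0$ or a pooling point. So the principal's value is strictly decreasing in $U^T(1)$ for every $U^T(1)\ge 0$.

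This bound uses only $F^T\le 1$ and the pooled first-order condition. The single-crossing assumption you planned to invoke plays no role in it; it matters only for making the binding set a single upper interval.
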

	
	We start the discussion of the first part of Theorem~\ref{thm:main}. Under MLRP, the likelihood ratio $f^T/f^N$ is non-decreasing, meaning higher-cost types are proportionally more concentrated in the trait group. A standard consequence is reverse hazard rate dominance: $\frac{F^T(\theta)}{f^T(\theta)} \leq \frac{F^N(\theta)}{f^N(\theta)}$ for all $\theta$; see \cite{krishna2009auction} for further discussion. This means the trait group requires a smaller information rent markup, so the virtual cost equation yields $q^T(\theta) > q^N(\theta)$ at every cost type. A trait agent who considers mimicking a non-trait agent would receive a lower quantity and thus a lower utility. Therefore, the across-group IC is slack and is automatically satisfied without imposing additional constraints. The principal can therefore optimize each group's contract independently, delivering full separation for all hired agents. The tax credit $\tau > 0$ reinforces this: it raises the effective price for the trait group from $P$ to $\frac{P}{1-\tau}$, further widening the output gap $q^T - q^N$ and strengthening trait agents' preference for their own contract.

	For the second part of Theorem~\ref{thm:main}, the likelihood ratio $f^T/f^N$ is non-increasing under reverse MLRP, so lower-cost types are relatively more concentrated in the trait group. The hazard rate ordering flips: $\frac{F^T(\theta)}{f^T(\theta)} \geq \frac{F^N(\theta)}{f^N(\theta)}$, meaning the trait group now faces a larger information rent markup. The virtual cost equations then yield $q^T(\theta) < q^N(\theta)$: trait agents receive more distorted contracts. A trait agent can gain by mimicking a non-trait agent and obtaining the higher-quantity, lower-distortion contract, violating the across-group IC under full separation. The single-crossing condition (Assumption~\ref{assu:singlecrossinginfo}) ensures this IC gap is monotone in $\theta$, so pooling is needed for a contiguous region of costly types. The optimal response is to pool all $\theta \geq \theta^*$ under the common contract $q_p(\theta)$, which equates $U^T = U^N$ in the pooled region and restores both the across-group IC and the within-group IC. The pooling quantity $q_p$ is determined by a blended virtual cost using the mixture distributions (or measures) $\widetilde{f}$ and $\widetilde{F}$, lying strictly between $q^T$ and $q^N$.\footnote{When $\tau > 0$, $\widetilde{F}$ is not a proper distribution as $\widetilde{F}(1) \neq 1 $. However, it is still well-defined as a measure because it is a combination of $F^T$ and $F^N$.} When $\tau > 0$, separation becomes profitable for efficient types $\theta < \theta^*$: the tax credit compensates the trait group enough that they are willing to accept a more distorted contract and reveal their identity.

	The key step in showing that pooling the groups is optimal under reverse MLRP is to verify that the principal's objective value is strictly decreasing in $U^T(1)$, so the optimum is $U^T(1) = 0$. The proof (see Appendix~\ref{app:pfs}) establishes that the marginal effect of $U^T(1)$ decomposes into three channels:
		\begin{enumerate}
			\item \textit{Transfer cost} ($-(1-\tau)\pi_T$): $U^T(1)$ is a pure transfer to the worst-cost trait agent, paid across the entire trait population and discounted by the tax credit.
			\item \textit{Production efficiency gain} ($\frac{(1-\tau)\hat{f}}{\widetilde{f}}|\Lambda|$): the benefit from tailoring quantities to each group's information rent over the pooling region, where $\Lambda(\theta) := \frac{\pi_T(1-\pi_T)(F^Nf^T - F^Tf^N)}{\hat{f}(\theta)}$ measures the disagreement between the two cost distributions.
			\item \textit{Tax credit loss} ($\Delta \leq 0$): when $\tau > 0$, separating the groups forfeits the ability to collect the tax credit differentially at the margin.
		\end{enumerate}
	Under reverse MLRP, channel (2) never compensates channel (1), and channel (3) only reinforces the case for pooling. The reason channel (2) never compensates channel (1) is a mismatch between who pays and who benefits. Raising $U^T(1)$ by a dollar raises every trait agent's utility by a dollar, because the envelope condition lifts the whole trait group's rent uniformly. The cost is therefore $(1-\tau)\pi_T$: a payment to the entire trait population, with no change in anyone's output. What the dollar buys is much narrower. The added slack in the disclosure constraint relaxes the pooling requirement, allowing the principal to tailor quantities to each group's information rent over a larger set of types. This tailoring is valuable only insofar as the two groups actually want different contracts, and $\Lambda(\theta)$ measures exactly this disagreement: it vanishes when the two cost distributions coincide, and it grows as they pull apart. Under reverse MLRP, $|\Lambda(\theta)|$ is non-decreasing in $\theta$ and is therefore largest at $\theta = 1$ (Lemma~\ref{lemma:Lambdanonpos}), which is exactly where the efficiency channel is evaluated. The bound $|\Lambda| < \pi_T$ therefore delivers a strong conclusion: even at the type where the disagreement between the groups is greatest, the tailoring gain is never worth the population-wide payment that purchases it. Since the bound depends only on the distributions, not on the cost function $C(q,\theta)$ or the tax credit $\tau$, the dominance of the transfer cost is a structural feature of the problem rather than an artifact of functional form.

    We summarize the proof of Theorem~\ref{thm:main} here. The first step is a change of variables that converts the problem into a standard one in the calculus of variations. The across-group incentive constraint becomes the pointwise state constraint, and we attach a multiplier function $\lambda(\theta) \geq 0$ with complementary slackness. Then, we can derive the Euler--Lagrange equations in which the multiplier's derivative equals, for each group, the derivative of that group's virtual-cost wedge. The structure of the optimum is related to the support of $\lambda$: wherever $\lambda(\theta) = 0$ the constraint is slack and each group's quantity solves its own virtual marginal cost equation, and wherever the constraint binds on an interval we can solve for a common quantity schedule whose first-order condition aggregates the two populations into the mixture measures $\tilde{f}$ and $\tilde{F}$. Separating and pooling regions thus emerge from complementary slackness, the junction $\theta^*$ is determined by continuity of the schedules together with the boundary conditions, and the distributional assumptions enter only at the final step, to sign the multiplier: MLRP guarantees $\lambda \equiv 0$ globally, while reverse MLRP guarantees the constraint binds on a contiguous upper region. Assumption~\ref{assu:singlecrossinginfo} enters at this same step, and its role is to shape the support of the multiplier: it requires $F^T(\theta)/f^T(\theta) - F^N(\theta)/f^N(\theta)$ to be non-decreasing, meaning the disclosure constraint passes from slack to binding at most once, so the binding set is a single interval and the optimal contract takes the clean cutoff form of Theorem~\ref{thm:main}.
    
    %Without Assumption~\ref{assu:singlecrossinginfo} the machinery above applies unchanged, but the support of $\lambda$ may be a union of intervals, and the optimal contract would alternate between separating and pooling bands rather than switching once at $\theta^*$.

	\section{Optimal Contract under the Quadratic Cost}\label{sec:quadratic}
	
	In this section, we restrict our attention to a quadratic cost function, $C(q, \theta) = \frac{1}{2}q^2 + \theta q$. We use this cost function to derive the explicit quantities and wages in the optimal contract. Notice that, under this quadratic cost function, the slope of the marginal cost is the same for all types $\theta$, while $\theta$ determines the intercept of the marginal cost. We separate the discussion of the MLRP case and the reverse MLRP case into two subsections.

	\subsection{Under MLRP}
	The first part of our Theorem~\ref{thm:main} can be rewritten as a corollary as follows:
	\begin{coro}
		Let $0 \leq \tau < 1$. Assume the Monotone Reverse Hazard Rate holds for both groups. Under MLRP, the optimal quantity and wage designed for each group are:
		\begin{align*}
			q^T(\theta) &= 
			\begin{cases}
				\frac{P}{1-\tau} - \theta - \frac{F^T(\theta)}{f^T(\theta)} & \text{if } \theta < \theta^{T*} \\
				0 & \text{if } \theta \geq \theta^{T*}
			\end{cases}, \\
			q^N(\theta) &= 
			\begin{cases}
				P - \theta - \frac{F^N(\theta)}{f^N(\theta)} & \text{if } \theta < \theta^{N*} \\
				0 & \text{if } \theta \geq \theta^{N*}
			\end{cases}, \\
			w^T \big( q^T(\theta),\theta \big) &= 
			\begin{cases}
				\int_{\theta}^{\theta^{T*}} q^T(x)dx + \frac{1}{2}(q^T(\theta))^2 + \theta q^T(\theta) & \text{if } \theta < \theta^{T*} \\
				0 & \text{if } \theta \geq \theta^{T*}
			\end{cases}, \\
			w^N \big( q^N(\theta),\theta \big) &= 
			\begin{cases}
				\int_{\theta}^{\theta^{N*}} q^N(x)dx + \frac{1}{2}(q^N(\theta))^2 + \theta q^N(\theta) & \text{if } \theta < \theta^{N*} \\
				0 & \text{if } \theta \geq \theta^{N*}
			\end{cases}.
		\end{align*}
		Furthermore, $\theta^{T*} = \frac{P}{1-\tau} - \frac{F^T (\theta^{T*})}{f^T (\theta^{T*})}$, $\theta^{N*} = P - \frac{F^N (\theta^{N*})}{f^N (\theta^{N*})}$, and $U^T(1) = 0$.
	\end{coro}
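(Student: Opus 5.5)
We obtain the corollary by specializing part 1 of Theorem~\ref{thm:main} to $C(q,\theta)=\frac12 q^2+\theta q$, for which $C_q=q+\theta$, $C_{q\theta}=1$ and $C_\theta=q$. The virtual marginal cost is then $\Psi^j(q,\theta)=q+\theta+\frac{F^j(\theta)}{f^j(\theta)}$. The equations $\Psi^T=\frac{P}{1-\tau}$ and $\Psi^N=P$ are linear in $q$, and they give the interior expressions $q^T(\theta)=\frac{P}{1-\tau}-\theta-\frac{F^T(\theta)}{f^T(\theta)}$ and $q^N(\theta)=P-\theta-\frac{F^N(\theta)}{f^N(\theta)}$ directly.

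Next comes the nonnegativity constraint $q\ge 0$. The Monotone Reverse Hazard Rate makes $F^j/f^j$ non-decreasing, so each unconstrained schedule is strictly decreasing in $\theta$. Hence it is positive at $\theta=0$ (there $F^j(0)/f^j(0)=0$, matching Proposition~\ref{prop:startpoint}), and it crosses zero at most once. Call that crossing point $\theta^{j*}$, or set $\theta^{j*}=1$ if no crossing occurs. Setting the interior expression to zero gives exactly the stated fixed-point equations $\theta^{T*}=\frac{P}{1-\tau}-\frac{F^T(\theta^{T*})}{f^T(\theta^{T*})}$ and $\theta^{N*}=P-\frac{F^N(\theta^{N*})}{f^N(\theta^{N*})}$. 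Beyond the cutoff, the pointwise maximization of the (concave-in-$q$) integrand subject to $q\ge0$ has a corner solution at $q=0$, since the derivative of the integrand at $q=0$ is already negative. The truncated schedule is still non-increasing, so the monotonicity condition of Lemma~\ref{lemma:ICs} holds. The MLRP step of Theorem~\ref{thm:main} carries over to show that across-group IC is slack, and it gives $U^T(1)=0$. We should check that truncation does not break this. Reverse hazard rate dominance gives $q^T\ge q^N$ pointwise before truncation, the tax credit only raises $q^T$ further, and the map $x\mapsto\max\{x,0\}$ preserves the ordering. So $\theta^{T*}\ge\theta^{N*}$, and the integrand in condition (3) of Lemma~\ref{lemma:ICs}, namely $q^N(x)-q^T(x)$, is nonpositive everywhere.

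For the wages, we use the revenue equivalence formula of Lemma~\ref{lemma:ICs} with $C_\theta(q,x)=q$, which gives $w^j(\theta)=U^j(1)+\int_\theta^1 q^j(x)\,dx+\frac12 q^j(\theta)^2+\theta q^j(\theta)$. By Lemma~\ref{lemma:IR} together with the result $U^T(1)=0$, we have $U^j(1)=0$. Since $q^j$ vanishes on $[\theta^{j*},1]$, the integral truncates to $\int_\theta^{\theta^{j*}}q^j$. For $\theta\ge\theta^{j*}$ all terms vanish, so $w^j=0$: these types are not hired and receive the outside option.

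The main obstacle is the corner at $q=0$. Theorem~\ref{thm:main} is stated through the interior first-order condition, so we must argue that adding the nonnegativity constraint to the pointwise maximization still yields the optimum. The key fact is that the objective, after substituting rents, separates pointwise into integrands that are strictly concave in $q$, so the Kuhn--Tucker condition at each $\theta$ suffices. We must also confirm that the resulting truncated schedule meets the monotonicity condition, so that no ironing is needed.
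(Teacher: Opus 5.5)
Your proposal is correct and follows the paper's route: the corollary is part 1 of Theorem~\ref{thm:main} (equivalently, the relaxed-problem argument in Appendix~\ref{app:cost}) specialized to $C(q,\theta)=\frac12 q^2+\theta q$. As in the paper, monotonicity comes from the Monotone Reverse Hazard Rate, across-group IC from $q^T\ge q^N$ under MLRP, and the wages from revenue equivalence with $U^j(1)=0$. Two of your steps are in fact more careful than the paper. Your pointwise Kuhn--Tucker treatment of the corner at $q=0$ replaces the paper's bare Intermediate Value Theorem remark. And because you verify all constraints directly, you do not need the extra Assumption~\ref{assu:singlecrossinginfo} that Theorem~\ref{thm:main} carries but the corollary omits.
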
 

	This result shows that the within-group separation comes from the Monotone Reverse Hazard Rate property, and the across-group separation comes from the Monotone Likelihood Ratio property; thus, the full separation is guaranteed. Following Appendix B from \cite{krishna2009auction}, we know the Monotone Likelihood Ratio Property implies the Reverse Hazard Rate Dominance; thus, we have a clear comparison of the information rents $\frac{F^T(\theta)}{f^T(\theta)}$ and $\frac{F^N(\theta)}{f^N(\theta)}$. As the contract for the non-trait group is more distorted than that for the trait group due to the MLRP assumption, trait agents receive more information rent by showing their evidence. Therefore, revealing the group identity is always weakly better than not revealing it for the trait agents. Furthermore, the most efficient trait agent produces more than the most efficient non-trait agent, and the cutoff type for the trait group depends on $\tau$. The full separation structure is preserved even as the tax credit increases. The tax credit works by encouraging all the trait agents to report their trait status. Therefore, when the tax credit increases, not only does the trait agent's utility increase, but the cutoff type $\theta^{T*}$ also increases, resulting in an increase in the hiring proportion of the trait group.

	We use a polynomial distribution example to demonstrate the effect of the tax credit:
	\begin{exmp}
		Let $F^N(\theta) = \theta^{0.3}$ and $F^T(\theta) = \theta^{0.7}$, and set $P = 0.75$ and $\pi_T = 0.3$. We will consider four levels of tax credit, $\tau \in \{0, 0.2, 0.4, 0.6\}$. Figure \ref{fig:optimal_quantities} illustrates the optimal contract design, where the full separation up to the cutoff type is represented by the blue line (non-trait agents) and the red line (trait agents). We see that when the tax credit increases, the production for the trait agents moves to the right, indicating a higher threshold.

		The two slopes of the optimal quantities are determined by the within-group IC, which means that if the slopes were flatter, the more efficient types would like to mimic the less efficient types within group. Since we are under MLRP, the blue line lies to the left of the red line, which shows that the trait agents receive higher utility than the non-trait agents with the same $\theta$.
		
		%Furthermore, we also compute the contract under the group-blind design, which is represented by the green line. We see that tax credit is not affecting the cutoff types, and the gap of the trait agent's cutoff and the group-blind cutoff is increasing drastically.\footnote{When tax credit is high, we see the group-blind contract design has area violating monotonicity, this is due to the violation of $\frac{\hat{f}(\theta)}{\tilde{f}(\theta)}$ being non-increasing in $\theta$. However, we can use the ironing procedure to make this area satisfies monotonicity; thus, the group-blind design will satisfy the monotonicity constraint.} 
		
		\begin{figure}[H]
			\centering
			\begin{subfigure}[b]{0.495\textwidth}
				\centering
				\includegraphics[width=\textwidth]{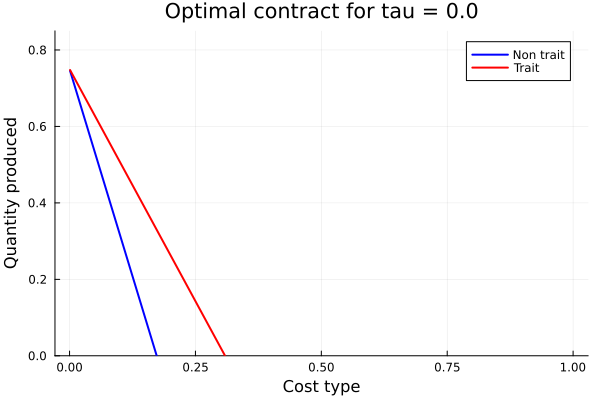}
			\end{subfigure}
			\hfill
			\begin{subfigure}[b]{0.495\textwidth}
				\centering
				\includegraphics[width=\textwidth]{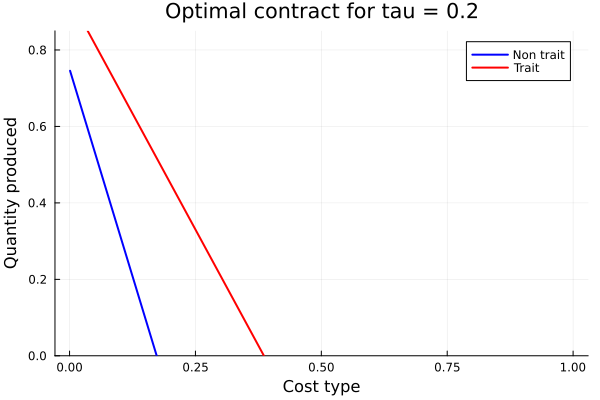}
			\end{subfigure}
			\vfill
			\begin{subfigure}[b]{0.495\textwidth}
				\centering
				\includegraphics[width=\textwidth]{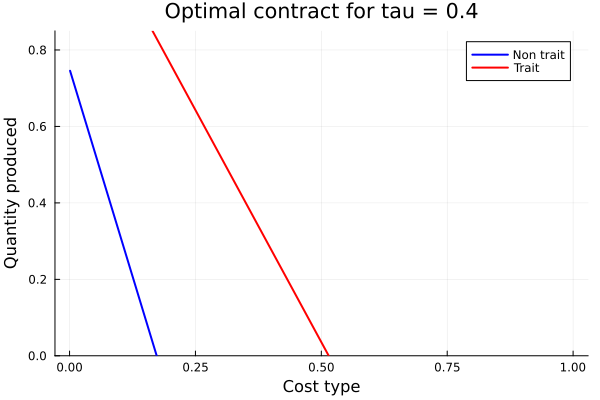}
			\end{subfigure}
			\hfill
			\begin{subfigure}[b]{0.495\textwidth}
				\centering
				\includegraphics[width=\textwidth]{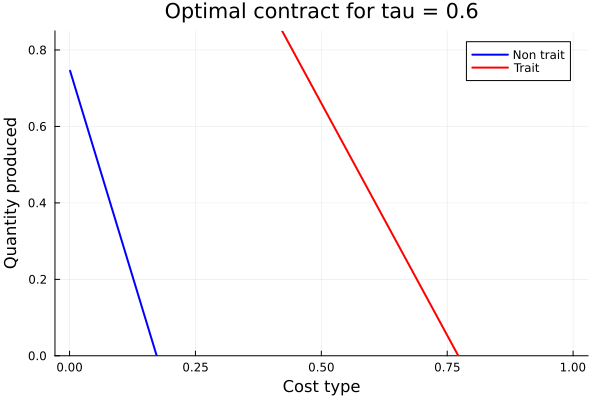}
			\end{subfigure}
			\caption{Optimal quantities under different tax credit levels under MLRP}
			\label{fig:optimal_quantities}
		\end{figure}
		
		Next, we consider the hiring probabilities under different levels of tax credit. Figure \ref{fig:hireprob} shows that when $\tau = 0$, the trait agents have a lower probability of being hired, even though their cutoff type is higher, as shown in Figure \ref{fig:optimal_quantities}. As $\tau$ increases, the hiring probability of trait agents equalizes with that of non-trait agents around $\tau = 0.4$.

		\begin{figure}[H]
			\centering
			\begin{subfigure}[b]{0.495\textwidth}
				\centering
				\includegraphics[width=\textwidth]{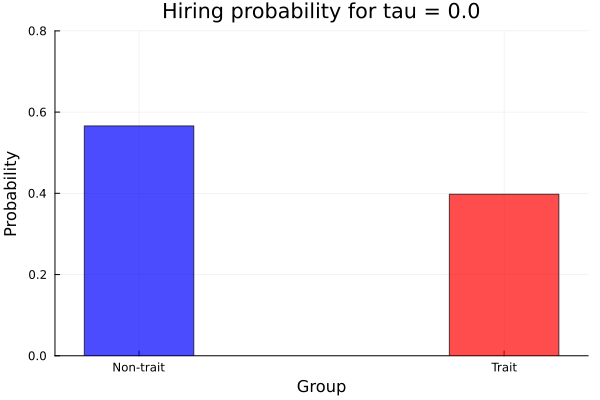}
			\end{subfigure}
			\hfill
			\begin{subfigure}[b]{0.495\textwidth}
				\centering
				\includegraphics[width=\textwidth]{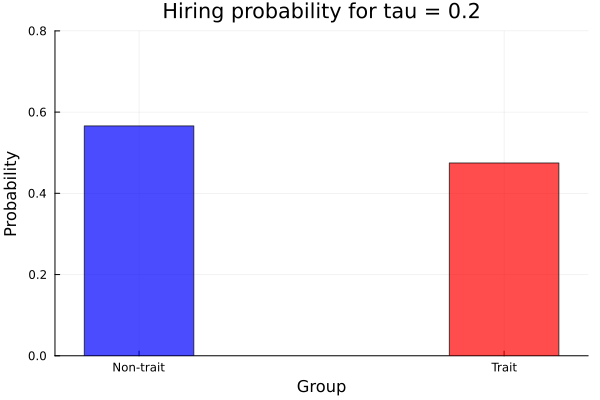}
			\end{subfigure}
			\vfill
			\begin{subfigure}[b]{0.495\textwidth}
				\centering
				\includegraphics[width=\textwidth]{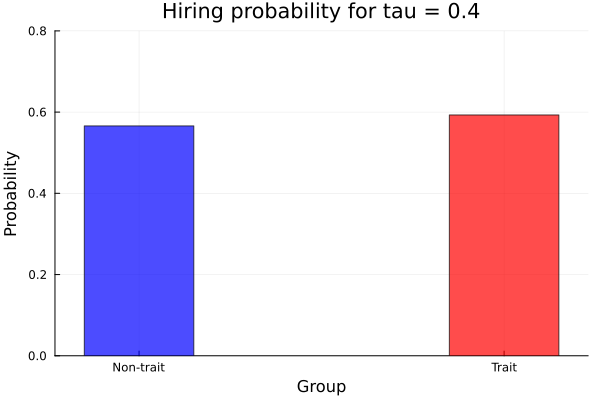}
			\end{subfigure}
			\hfill
			\begin{subfigure}[b]{0.495\textwidth}
				\centering
				\includegraphics[width=\textwidth]{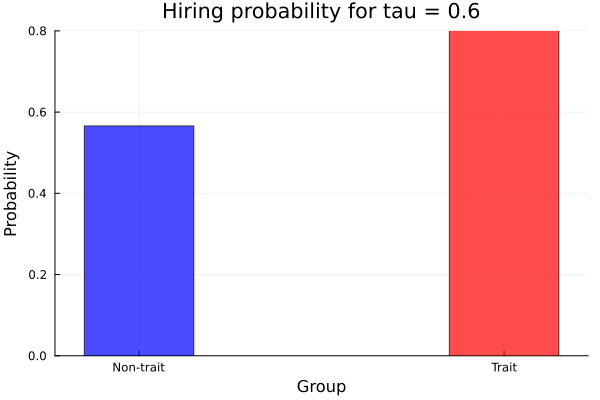}
			\end{subfigure}
			\caption{Hiring probabilities under different tax credit levels}
			\label{fig:hireprob}
		\end{figure}
	\end{exmp}

	\subsection{Under Reverse MLRP}
	The second part of our Theorem~\ref{thm:main} reduces to the following:
	\begin{coro}\label{coro:ReMLRP}
		 Let $0 \leq \tau < 1$. Assume the Monotone Reverse Hazard Rate Property, and that $\frac{F^T(\theta)}{f^T(\theta)} - \frac{F^N(\theta)}{f^N(\theta)}$ is non-decreasing in $\theta$.
		 
		 The optimal contract is characterized by a cutoff $\theta^* \in [0, 1]$ and the optimal production plan takes the following form:
		 \begin{itemize}
		 	\item[1.]  Separating Region ($0 \leq \theta < \theta^*$): The principal separates the groups to capitalize on the tax credit.
		 	$$q^T(\theta) = \frac{P}{1-\tau} - \theta - \frac{F^T(\theta)}{f^T(\theta)}, \quad q^N(\theta) = P - \theta - \frac{F^N(\theta)}{f^N(\theta)}.$$
		 	
		 	\item[2.] Pooling Region ($\theta^* \leq \theta \leq 1$): The principal pools the groups to minimize information rents.$$q^T(\theta) = q^N(\theta) = q_p(\theta) = P \frac{\hat{f}(\theta)}{\tilde{f}(\theta)} - \theta - \frac{\tilde{F}(\theta)}{\tilde{f}(\theta)},$$
		 \end{itemize}
		 
		 where $\hat{f} = \pi_T f^T + (1- \pi_T) f^N$ and $\widetilde{f} = (1-\tau) \pi_T f^T + (1- \pi_T) f^N$. $\hat{F}$ and $\widetilde{F}$ are defined similarly. 
		 
		 When $\tau = 0$, we have $\theta^* = 0$, and the optimal contract pools the groups over the entire domain.
	\end{coro}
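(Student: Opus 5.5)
This corollary is the quadratic specialization of part 2 of Theorem~\ref{thm:main}, so the plan is to substitute $C(q,\theta)=\tfrac12 q^2+\theta q$ into the characterization already established there. First I record the derivatives: $C_q=q+\theta$ and $C_{q\theta}=1$. All standing hypotheses of Theorem~\ref{thm:main} (twice continuous differentiability, $C_q>0$, $C_\theta = q >0$ for $q>0$, $C_{qq}=1>0$, $C_{q\theta}=1>0$) hold, and the distributional assumptions are imposed verbatim in the corollary. Hence the cutoff structure, the existence of $\theta^*\in[0,1]$, the fact that $\theta^*=0$ when $\tau=0$, and $U^T(1)=0$ are inherited directly, and only the explicit quantity formulas must be derived.

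In the separating region, the virtual marginal cost becomes $\Psi^j(q,\theta)=q+\theta+F^j(\theta)/f^j(\theta)$. This is linear in $q$ with slope one, so the equations $\Psi^T=P/(1-\tau)$ and $\Psi^N=P$ have the unique solutions
$$q^T(\theta)=\frac{P}{1-\tau}-\theta-\frac{F^T(\theta)}{f^T(\theta)},\qquad q^N(\theta)=P-\theta-\frac{F^N(\theta)}{f^N(\theta)}.$$
In the pooling region, the defining equation is $P\hat f=\widetilde f\,(q_p+\theta)+\widetilde F$. Since $\widetilde f(\theta)>0$ (because $\tau<1$ and the densities are positive), dividing by $\widetilde f$ gives
$$q_p=P\frac{\hat f}{\widetilde f}-\theta-\frac{\widetilde F}{\widetilde f}.$$
When $\tau=0$, we have $\widetilde f=\hat f$ and $\widetilde F=\hat F$, so the full-pooling contract reduces to $q_p=P-\theta-\hat F/\hat f$, consistent with the stated claim.

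The one point needing care, and the main obstacle, is that these closed forms must be compatible with the constraints that Theorem~\ref{thm:main} used implicitly: nonnegativity and monotonicity of $q$. Under the Monotone Reverse Hazard Rate, $F^j/f^j$ is non-decreasing, so $q^T$ and $q^N$ are strictly decreasing. For $q_p$, I would check monotonicity using the same hypothesis the theorem imposes on the relevant measures, namely that $\widetilde F/\widetilde f$ is non-decreasing. I would also need to argue that $\hat f/\widetilde f$ is non-increasing. Under reverse MLRP this holds: $\hat f/\widetilde f = 1+\tau\pi_T f^T/\widetilde f$, and $f^T/\widetilde f$ is non-increasing because $f^T/f^N$ is non-increasing. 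Where a formula turns negative, I would truncate at zero, as in the MLRP corollary. Finally, because the theorem fixes the pooling junction by continuity at $\theta^*$, I would confirm that the two schedules meet consistently there. This requires no new argument; it is just the linear equations evaluated at $\theta^*$.
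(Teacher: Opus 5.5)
Your proposal is correct and matches the paper, which obtains this corollary simply by substituting $C(q,\theta)=\tfrac12 q^2+\theta q$ (so $C_q=q+\theta$, $C_{q\theta}=1$) into the virtual-cost and pooling equations of part 2 of Theorem~\ref{thm:main} and solving the resulting linear equations. Two of your extra checks go beyond what the paper writes and are consistent with it: that $\hat f/\widetilde f$ is non-increasing under reverse MLRP, so $q_p$ is decreasing, and that $q_p(\theta^*)$ equals the common separating quantity at the crossing point. Truncating at zero, however, would change the formulas as stated, so present it as a caveat rather than as part of the proof.
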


	For Corollary~\ref{coro:ReMLRP}, when we consider $\tau = 0$, we are essentially comparing the distortion of production quantities and the transfer, $U^T(1)$. That is, we are studying the principal's objective by comparing the effect of distorting $q^N$ and $q^T$, and the effect of giving a pure transfer to the trait group. This corollary shows that the optimal design is to pool the two groups and offer them the pooling contract, and it implies that $U^T(1) = 0$ is the utility level designed for the most costly trait agent.
	
	When we set $U^T(1)$ high, the principal sets quantities closer to the optimal level of production as if the group identity were observed. Although the reverse MLRP assumption distorts the trait group's contract more (which gives an incentive for the trait agent to pretend to be a non-trait agent), in the end, the trait agent does not misreport because being in the trait group entitles the agent to receive this high transfer $U^T(1)$. When $U^T(1)$ starts decreasing, the principal treats the higher cost agents as if they were from the same group, and this pooling level of production distorts the quantity of the non-trait agents and redistributes the information rents toward the trait agents; thus, the level of information rent lies between $\frac{F^T(\theta)}{f^T(\theta)}$ and $\frac{F^N(\theta)}{f^N(\theta)}$. 

	When we consider $\tau > 0$, there must be a region where separating the trait agents from the non-trait agents is optimal for the principal, and the tax credit $\tau$ provides the incentive for separation. When $\tau$ increases, the region of separation increases until full separation is recovered. Even though reverse MLRP makes the trait group's production more distorted than the non-trait group's when the group is observed, the tax credit $\tau$ offsets the production distortion, thus reducing the effect of reverse MLRP.

	We use the same polynomial distribution to demonstrate the reverse MLRP case.
	\begin{exmp}
		Under the same polynomial distribution setting and under the reverse MLRP assumption, we first demonstrate how separating the groups leads to across-group IC violation when the tax credit $\tau$ is set to zero. The upper half of Figure~\ref{fig:Uprogession} shows four different contract designs, from the full separation, case (1), to pooling the groups, case (4). The lower half of Figure~\ref{fig:Uprogession} shows the types $\theta$ whose across-group IC is violated, and there is no violation in case (4).

		We present two types of semi-pooling, where case (2) pools the more efficient types and case (3) pools the more costly types. We see that with some level of pooling, the principal reduces the degree of across-group IC violation, and pooling the more costly types is better than pooling the more efficient types. 

		\begin{figure}[H]
			\centering
			\includegraphics[width=1.05\textwidth]{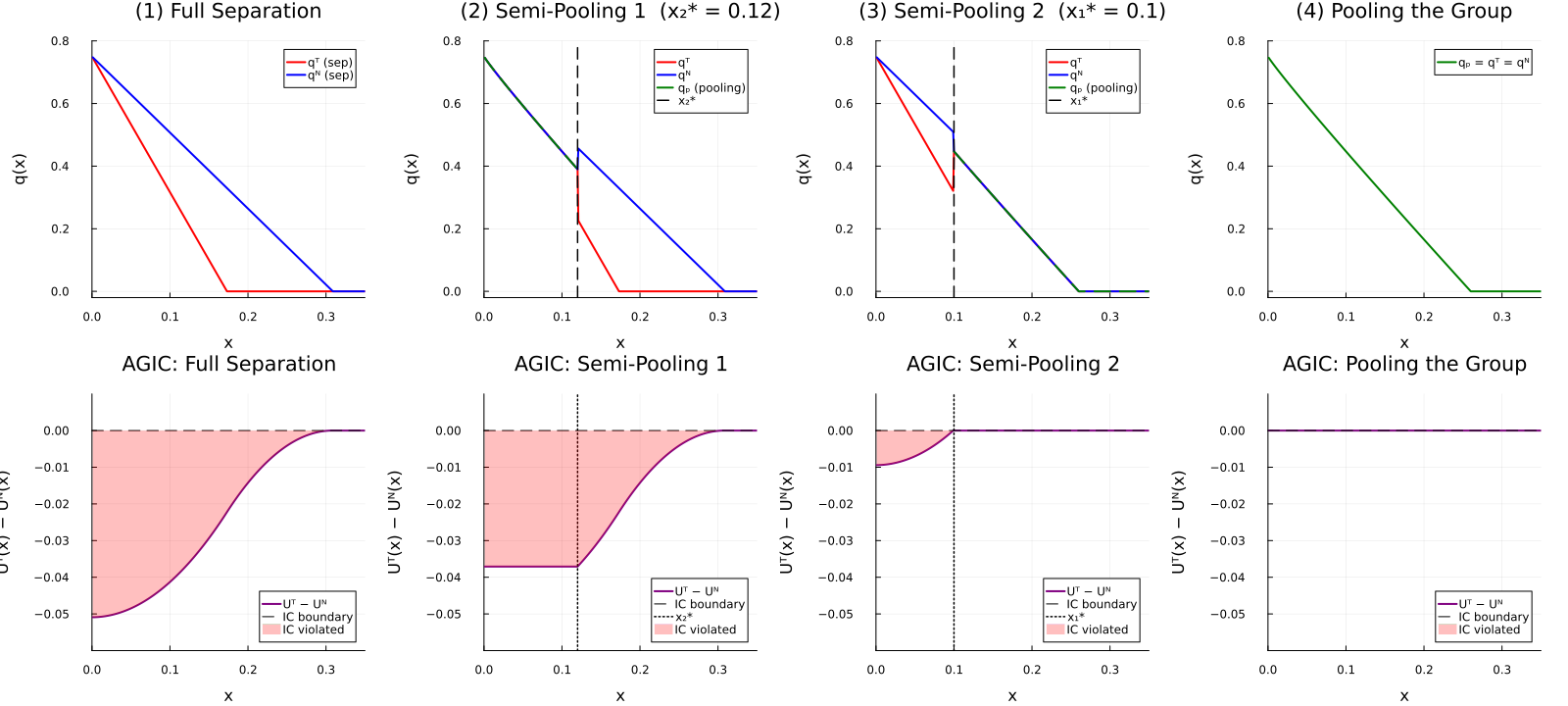}		
			\caption{IC comparison from full separation to pooling the groups.}
			\label{fig:Uprogession}	
		\end{figure}
		
		We next show the effect of tax credit $\tau$ on the optimal contract design. Figure~\ref{fig:quantities_ReMLRP} shows the effect of $\tau \in \{0, 0.2, 0.4, 0.6\}$. As we have discussed, pooling the groups (upper-left) is the optimal design when $\tau = 0$. As the tax credit $\tau$ increases, we have the more efficient trait agents being separated from the non-trait agents to receive the benefit of the tax credit. Because part of the information rents is now paid by the tax credit, the principal is willing to ask those efficient agents to produce at the optimal level as if the group were known.

		\begin{figure}[H]
			\centering
			\begin{subfigure}[b]{0.495\textwidth}
				\centering
				\includegraphics[width=\textwidth]{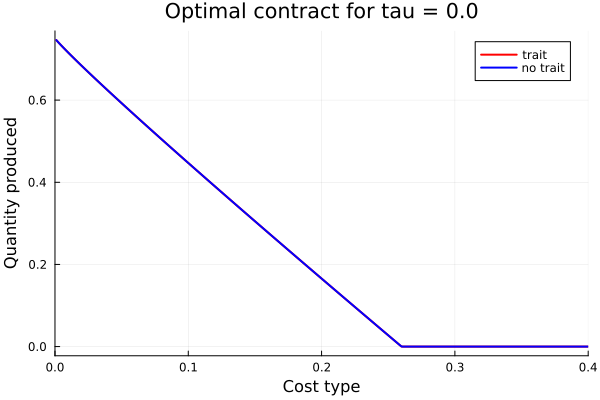}
			\end{subfigure}
			\hfill
			\begin{subfigure}[b]{0.495\textwidth}
				\centering
				\includegraphics[width=\textwidth]{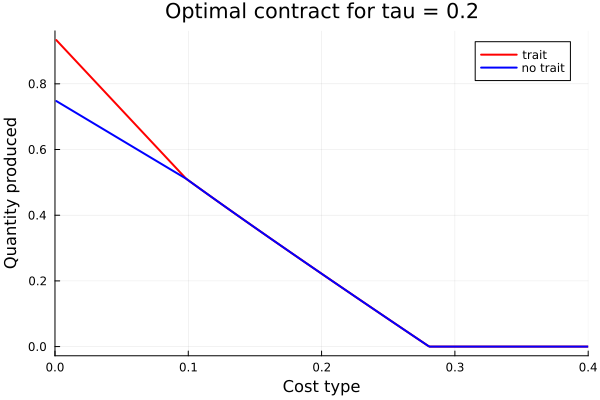}
			\end{subfigure}
			\vfill
			\begin{subfigure}[b]{0.495\textwidth}
				\centering
				\includegraphics[width=\textwidth]{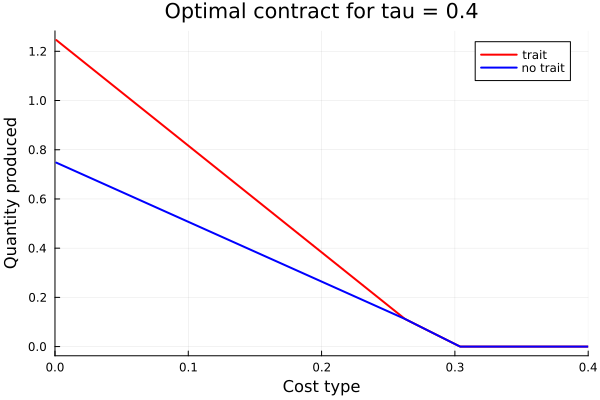}
			\end{subfigure}
			\hfill
			\begin{subfigure}[b]{0.495\textwidth}
				\centering
				\includegraphics[width=\textwidth]{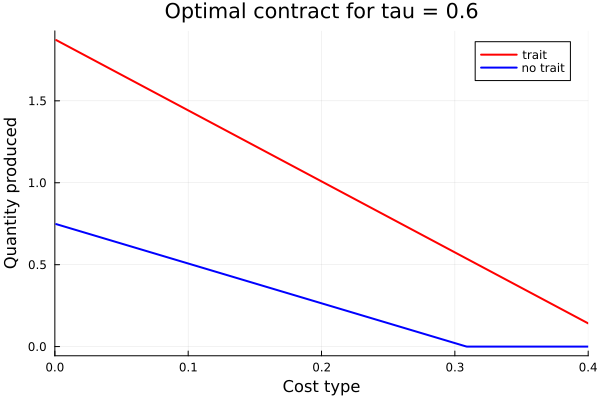}
			\end{subfigure}
			\caption{Optimal quantities under different tax credit levels under reverse MLRP}
			\label{fig:quantities_ReMLRP}
		\end{figure}
	\end{exmp} 

    \section{Policy Implication}\label{sec:policy}

    Theorem \ref{thm:main} gives a dichotomy. When the trait agents tend to have higher cost types (MLRP), the tax credit works through group-specific contracts: the principal hires the trait agents at cost types that she would reject among the non-trait agents, and she pays them more at any given cost type. This is affirmative action in the literal sense of differential treatment. When the trait agents tend to have lower cost types (reverse MLRP), the optimal response is the opposite. Over the pooling region, the principal offers a single group-blind menu, and group identity is ignored. Therefore, the same policy instrument endogenously produces either explicit group-targeting or de facto non-discrimination, and which regime arises depends on the stochastic ordering of the cost distributions, not on the policy itself.

    This dichotomy matters for how we interpret the empirical evidence on the WOTC. The evidence consistently shows that the subsidy produces little visible change in contracts.\footnote{The data coverage of these studies varies: the employer survey and agency data in \cite{gao2001wotc} cover California and Texas; \cite{jain2026limits} uses administrative data from Wisconsin, as does \cite{hamersma2008effects} cited at the end of this section; and the participation estimates in \cite{hamersma2003work} are based on national survey data.} Participating employers adjust their administrative practices to capture the credit, but fewer than ten percent of them report changing their hiring standards (\cite{gao2001wotc}). Certification rates among eligible new hires are low (\cite{hamersma2003work}). Moreover, \cite{jain2026limits} estimate precise null effects of subsidy eligibility on hiring, earnings, and retention, and they find that roughly ninety-six percent of subsidized hires are inframarginal. They attribute these null results to screening and organizational frictions. Our model supplies the complementary frictionless benchmark: under reverse MLRP, unchanged contracts are exactly what optimal screening looks like. Hence, the absence of a contractual response cannot by itself distinguish frictions from optimal group-blind screening. We emphasize that our purpose is not to defend the effectiveness of the program. Instead, we use the information structure of the WOTC, which this empirical literature documents in detail, as the leading case of a general contracting problem.

    Which ordering is the empirically relevant one? On one hand, MLRP is the premise of the program itself. The statute defines the target groups as individuals facing significant barriers to employment, and audit evidence confirms that a felony record or a long unemployment spell carries a real hiring penalty (\cite{pager2003mark}; \cite{agan2018ban}; \cite{kroft2013duration}). In the sense of \cite{milgrom1981good}, group membership is unfavorable news about the cost type. On the other hand, reverse MLRP is also plausible, because the distributions in our model are conditional on the pool that the principal actually faces. A firm screens applicants on observables before group status enters the contracting problem, and this prior screening can attenuate or even reverse the cost ordering in the population. If the trait applicants who survive the firm's screening are more favorably selected, relative to their group, than the non-trait applicants who survive, then the conditional distributions can satisfy reverse MLRP even when the population satisfies MLRP. This possibility is relevant for the WOTC because the firms that use it are disproportionately those that screen systematically: \cite{gao2001wotc} documents that three percent of participating firms account for roughly eighty-three percent of certifications. Therefore, the pools relevant to our model are exactly those in which selection is strongest. Furthermore, the target categories are not alike, so the ordering can run in opposite directions across them. For groups defined by disadvantage, such as ex-felons, membership is bad news about the cost type. For veterans, however, it is plausibly good news: military service brings training and a proven work history, which tends to lower the cost of employing a worker.
    
    Two features of our model that might appear to be conveniences are in fact faithful to the institutional design of the WOTC. First, the group-specific menus that arise under MLRP are lawful for the populations that the program targets. The designated categories, such as qualified veterans, ex-felons, and recipients of means-tested benefits, are not protected classes for the purpose of favorable treatment under federal employment law. This is unlike race or sex, which the WOTC deliberately does not target. When group-blindness is legally mandated, our model characterizes when the mandate has force. Under reverse MLRP, the mandate is slack on the pooling region because the unconstrained optimum already treats the two groups identically there. Under MLRP, the mandate binds wherever the separating contracts differ, and its shadow price is the difference between the value of the program in Theorem~\ref{thm:main} and the value of the optimal single-menu program.

    Second, the unidirectional disclosure constraint mirrors the certification process of the WOTC. The employer's claim rests on the information that the applicant supplies voluntarily during hiring. Most target statuses are stigmatized and unobservable without the worker's cooperation, and the audit evidence above shows that disclosure carries a real penalty, so concealment can be rational. At the same time, no worker can credibly claim a status he does not hold. This information structure has a substantive consequence for incidence. Because every trait agent retains the option to take the contract designed for the non-trait agents, the tax credit can never make him worse off through the contracting channel. The available evidence is consistent with this prediction: \cite{hamersma2008effects} finds that certified workers earn a short-run premium of roughly one-third of the credit's value, so the trait agents capture part of the subsidy as rent. This result stands in contrast to the literature following \cite{coate1993will}, in which affirmative action can harm its intended beneficiaries through equilibrium belief effects. In our model, the disclosure option is itself a form of protection, and it is exactly the feature that two-way verifiability would destroy.

	\section{Conclusion}\label{sec:conclusion}
	%%%%%%%%%%%%%%%%%%%%%%%%%%%%%%%%%%%%%%%%%%%%%%%%%%%%%%%%%%%%%%%%%%%%%%%%%%%%%%%%%%%%%%%
	%%%%%%%%%%%%%%%%%%%% Conclusion %%%%%%%%%%%%%%%%%%%%%%%%%%%%%%%%%%%%%%%%%%%%%%%%%%%%%%%%
	%%%%%%%%%%%%%%%%%%%%%%%%%%%%%%%%%%%%%%%%%%%%%%%%%%%%%%%%%%%%%%%%%%%%%%%%%%%%%%%%%%%%%%%
	In this paper, we study the optimal contract design when agents differ along two dimensions: a privately known cost type and membership in a protected group that qualifies the principal for a tax credit. Group membership is verifiable in only one direction: a trait agent can present evidence of his group identity or hide it, while a non-trait agent cannot claim it. Theorem~\ref{thm:main} shows that the optimal contracts depend on how the cost distributions of the two groups are ordered. Under MLRP, where the trait agents tend to draw higher cost types, the optimal contracts induce full separation among the hired agents, so each agent's contract choice reveals both his cost type and his group identity. In addition, the principal hires the trait agents up to a higher cost threshold than the non-trait agents, and the threshold rises with the tax credit. Under reverse MLRP, where the trait agents tend to draw lower cost types, the optimal contracts without the tax credit pool the groups and separate the agents by cost type alone. When the tax credit is introduced, the principal separates the groups only for the lower-cost agents, and increasing the tax credit enlarges the region of separation.

	These two regimes correspond to affirmative action and non-discrimination, and both arise endogenously from the principal's optimization rather than from regulation. As we discuss in Section~\ref{sec:policy}, this dichotomy matters for how the empirical evidence on the WOTC should be interpreted: under reverse MLRP, a tax credit that leaves the contracts unchanged is not evidence that the program fails, because pooling by cost type alone is exactly the optimal design over that region. On the methodological side, our analysis shows that a group dimension that affects payoffs only through the cost distribution, and can be misreported in only one direction, keeps the two-dimensional screening problem tractable under a general convex production cost. The calculus of variations then delivers a deterministic optimal contract without resorting to randomization.

	We close with two directions for future research. First, the structure we analyze, a one-way verifiable status coupled with a group-specific incentive, appears in settings well beyond the WOTC. Procurement set-asides, disability employment subsidies, and insurance markets in which clients can voluntarily disclose favorable test results all share this structure, and each application suggests a different ordering of the cost distributions. Second, a natural extension is to consider costly verification of evidence. Introducing costly verification would create an additional strategic layer in our model, where the trait agents must weigh the benefit of revealing their status against the cost of verification. This tradeoff could lead to partial pooling results. Such an extension would deepen our understanding of how diversity incentives operate in real-world settings where verification is rarely frictionless.

	%%%%%%%%%%%%%%%%%%%%%%%%%%%%%%%%%%%%%%%%%%%%%%%%%%%%%%%%%%%%%%%%%%%%%%%%%%%%%%%%%%%%%%
	%%%%%%% Reference %%%%%%%%%%%%%%%%%%%%%%%%%%%%%%%%%%%%%%%%%%%%%%%%%%%%%%%%%%%%%%%%%%
	%%%%%%%%%%%%%%%%%%%%%%%%%%%%%%%%%%%%%%%%%%%%%%%%%%%%%%%%%%%%%%%%%%%%%%%%%%%%%%%%%%%%%%
	\bibliography{ref_aa}
	\bibliographystyle{apalike}
	%\nocite{*} % To list all doc in the bibfile
	%%%%%%%%%%%%%%%%%%%%%%%%%%%%%%%%%%

	\newpage
	\begin{appendices}
		\section{Omitted Proofs}\label{app:pfs}
		\subsection{Proofs of Section \ref{sec:optimal}}

		\subsubsection*{Proof of Lemma \ref{lemma:withIC}}
		\begin{proof}
			($\Rightarrow$): Let $\theta_2 > \theta_1$. By within-group IC, we have
			\begin{align*}
				& w^j(\theta_1) - C \big(q^j(\theta_1), \theta_1 \big) \geq w^j(\theta_2) - C \big(q^j(\theta_2), \theta_1 \big) \\
				& w^j(\theta_2) - C \big(q^j(\theta_2), \theta_2 \big) \geq w^j(\theta_1) - C \big(q^j(\theta_1), \theta_2 \big) 
			\end{align*} 
			Rearranging terms, we get $C\big(q^j(\theta_2), \theta_2 \big) - C\big(q^j(\theta_2), \theta_1 \big) \leq C\big(q^j(\theta_1), \theta_2 \big) - C\big(q^j(\theta_1), \theta_1 \big)$. We then use $\frac{\partial^2 C}{\partial q \partial \theta } > 0$, which implies $C(q_2, \theta_2) - C(q_2, \theta_1) > C(q_1, \theta_2) - C(q_1, \theta_1) $ for any $q_2 > q_1$, to get $q^j(\theta_2) \leq q^j(\theta_1)$.
				
			Using the envelope condition, we have $\frac{d U^j(\cdot)}{d \theta} = - \frac{\partial C\big(q^j(\theta), \theta \big)}{\partial \theta}$. Using the Fundamental Theorem of Calculus and considering starting at type $\theta = 1$, we have 
			$$U^j(\theta) = U^j(1) + \int_{\theta}^{1} \frac{\partial C\big(q^j(x), x\big)}{\partial x} dx, $$
			where $U^j(\theta)$ is a simplification of $U(w^j(\theta), q^j(\theta); \theta)$.
				
			($\Leftarrow$): Suppose (1.) and (2.) hold. W.l.o.g., again let $\theta_2 > \theta_1$. We check if $U(w^j(\theta_2), q^j(\theta_2); \theta_2) \geq U(w^j(\theta_1), q^j(\theta_1); \theta_2)$:
			\begin{align*}
				& U^j(\theta_2) \geq U^j(\theta_1) - C\big(q^j(\theta_1), \theta_2 \big) + C\big(q^j(\theta_1), \theta_1 \big)\\
				&  \int_{\theta_2}^{1} \frac{\partial C\big(q^j(x), x\big)}{\partial x} dx \geq \int_{\theta_1}^{1} \frac{\partial C\big(q^j(x), x\big)}{\partial x} dx + C\big(q^j(\theta_1), \theta_1 \big) - C\big(q^j(\theta_1), \theta_2 \big) \\
				& -\int_{\theta_1}^{\theta_2} \frac{\partial C\big(q^j(x), x\big)}{\partial x} dx \geq - C\big(q^j(\theta_1), \theta_2 \big) + C\big(q^j(\theta_1), \theta_1 \big)\\
				& \int_{\theta_1}^{\theta_2} \frac{\partial C\big(q^j(x), x\big)}{\partial x} dx \leq C\big(q^j(\theta_1), \theta_2 \big) - C\big(q^j(\theta_1), \theta_1 \big) \\
				& \int_{\theta_1}^{\theta_2} \frac{\partial C\big(q^j(x), x\big)}{\partial x} dx \leq \int_{\theta_1}^{\theta_2} \frac{\partial C\big(q^j(\theta_1), x\big)}{\partial x} dx , \text{ by the Fundamental Theorem of Calculus}\\
				& \int_{\theta_1}^{\theta_2} \bigg[ \frac{\partial C\big(q^j(x), x\big)}{\partial x} - \frac{\partial C\big(q^j(\theta_1), x\big)}{\partial x} \bigg]dx \leq 0
			\end{align*}
			Using the single-crossing condition and monotonicity of $q^j$, we have $$\frac{\partial C\big(q^j(\theta), \theta\big)}{\partial \theta} - \frac{\partial C\big(q^j(\theta_1), \theta \big)}{\partial \theta} \leq 0,$$
			for all $\theta \in [\theta_1, \theta_2].$ Thus, within-group IC holds.
		\end{proof}

		\subsubsection*{Proof of Lemma \ref{lemma:acrossIC}}
		\begin{proof}
			By within-group IC constraints, we know that choosing contract $(w^j(\theta), q^j(\theta))$ gives an agent with cost $\theta$ utility $w^j(\theta) - C(q^j(\theta), \theta)$, where $j \in \{N,T\}$, and this is weakly better than choosing another contract $(w^j(\theta'), q^j(\theta'))$. This does not depend on whether the agent is a trait agent or a non-trait agent, as the utility function is the same.
		\end{proof}
		
		\subsubsection*{Proof of Lemma \ref{lemma:ICs}}
		\begin{proof}
			The proof that a set of contracts is within-group IC if and only if conditions (1.) and (2.) hold is in Lemma \ref{lemma:withIC}.
			
			($\Rightarrow$) Within-group IC and across-group IC together imply that 
			$$U\big( w^T(\theta), q^T(\theta); \theta \big) - U\big( w^N(\theta), q^N(\theta); \theta \big) \geq 0.$$
			By revenue equivalence,\footnote{Notice that $U\big( w^j(\theta), q^j(\theta); \theta \big) = w^j(\theta) - C(q^j(\theta), \theta).$} it becomes 
			$$U \big( w^T(\theta), q^T(\theta); \theta \big)|_{\theta = 1} + \int_{\theta}^{1} \frac{\partial C(q^T(x), x)}{\partial x} dx - U \big( w^N(\theta), q^N(\theta); \theta \big)|_{\theta = 1} - \int_{\theta}^{1} \frac{\partial C(q^N(x), x)}{\partial x} dx  \geq 0. $$
			
			($\Leftarrow$) Starting from (3.), we have $U\big( w^T(\theta), q^T(\theta); \theta \big) - U\big( w^N(\theta), q^N(\theta); \theta \big) \geq 0$, so the across-group IC holds under the same cost $\theta$. By Lemma \ref{lemma:acrossIC}, $U\big( w^N(\theta), q^N(\theta); \theta \big) \geq w^N(\theta') - C\big(q^N(\theta'), \theta\big)$ for any $\theta' \in [0,1]$. Therefore, across-group IC holds under any cost $\theta'$. 
		\end{proof}
		
		\subsubsection*{Proof of Lemma \ref{lemma:IR}}
		\begin{proof}
			$U^N(1) = 0$ follows from the standard argument, as the worst-cost non-trait agent does not have the incentive to mimic a non-trait agent with another cost, and he is not allowed to mimic any trait agent.
			
			As for the worst-cost trait agent, within-group IC and across-group IC together ensure that the utility for the worst-cost trait agent is nonnegative, which is $U^T(1) \geq U^N(1) = 0.$ Using Lemma~\ref{lemma:ICs} (3.), we have for any $\theta$, $\int_{\theta}^{1} \bigg( \frac{\partial C(q^N(x), x)}{\partial x} - \frac{\partial C(q^T(x), x)}{\partial x} \bigg) dx \leq U^T (1).$
			
			Let us define $I(\theta) = \int_{\theta}^{1} \bigg( \frac{\partial C(q^N(x), x)}{\partial x} - \frac{\partial C(q^T(x), x)}{\partial x} \bigg) dx$. We first show that $I(\theta)$ is continuous in $[0,1]$, and then the maximum, $\max_{\theta \in [0,1]} I(\theta)$, exists by the Weierstrass Extreme Value Theorem.

			To show $I(\theta)$ is continuous for $\theta \in [0,1]$, we first observe that $q^T(x)$ and $q^N(x)$ are both bounded on $[0,1]$ since they are non-increasing. Since $C(\cdot)$ is continuously differentiable and each $q^j$ is bounded, the partial derivative $\frac{\partial C(q^j(x), x)}{\partial x}$ is bounded on $[0,1]$ for $j \in \{N,T\}$. Thus, the integrand $g(x) := \frac{\partial C(q^N(x), x)}{\partial x} - \frac{\partial C(q^T(x), x)}{\partial x}$ is bounded: there exists a constant $M$ such that $|g(x)| \leq M$ for all $x \in [0,1]$. Next, consider a small change $\epsilon > 0$ from a $\theta$,
			$$|I(\theta+\epsilon) - I(\theta)| = \bigg|\int_{\theta+\epsilon}^{1} g(x)\, dx - \int_{\theta}^{1} g(x)\, dx\bigg| = \bigg|\int_{\theta}^{\theta+\epsilon} g(x)\, dx\bigg| \leq \int_{\theta}^{\theta+\epsilon} |g(x)|\, dx \leq M\epsilon.$$
			As $\epsilon \rightarrow 0$, we have $|I(\theta+\epsilon) - I(\theta)| \rightarrow 0$. The same argument applies for $\epsilon < 0$. Since $\theta$ is arbitrary, $I(\theta)$ is continuous on $[0,1]$. By the Weierstrass Extreme Value Theorem, $\max_{\theta \in [0,1]} I(\theta)$ exists.

			Thus far, we have shown Lemma~\ref{lemma:ICs} (3.) can be replaced by $U^T (1) \geq \max_{\theta \in [0,1]} I(\theta)$. Notice that the lower bound for $\max_{\theta \in [0,1]} I(\theta)$ is $0$ by choosing $\theta = 1$; therefore the IR constraint is satisfied. 
			
			Then, we show $U^T (1) = \max_{\theta \in [0,1]} I(\theta)$. We show by contradiction. Suppose $U^T (1) > \max_{\theta \in [0,1]} I(\theta)$. The principal can reduce $U^T (1)$ by a small amount, say $\eta$, such that $U^T (1) - \eta > \max_{\theta \in [0,1]} I(\theta)$. Let $\hat{U}^T (1)$ be denoted as $U^T (1) - \eta$, then $\hat{U}^T(\theta) = \hat{U}^T(1) + \int_{\theta}^{1} \frac{\partial C(q^T(x), x)}{\partial x}\, dx$ still satisfies the across-group IC, and we can reduce every trait agent's wage; therefore giving the principal a higher value.
		\end{proof}
		
		\subsubsection*{Proof of Proposition~\ref{prop:startpoint}}
		\begin{proof}
			We reformulate the problem using the calculus of variations. In this way, Proposition~\ref{prop:startpoint} becomes the boundary condition at $0$. We define $ U^T(1) = \underline{U}$, which is a constant, and we let $R^j(\theta) = \int_{\theta}^{1} \frac{\partial C(q^j(x), x)}{\partial x}dx$, where $j \in \{T, N\}$. $R^j(\theta)$ is the state variable of the problem. $r^j(\theta)$ represents the flow variable; we let $r^j(\theta) = \frac{\partial C(q^j(\theta), \theta)}{\partial \theta}$. Since we assume single-crossing ($\frac{\partial^2 C(q^j(\theta), \theta)}{\partial q^j \partial \theta} > 0$), $\frac{\partial C(\cdot)}{\partial \theta}$ is strictly increasing in $q^j$ and hence invertible. We define $\psi: \mathbb{R} \times [0,1] \to \mathbb{R}_+$ as
			$$\psi(r^j, \theta) := \left(\frac{\partial C(\cdot,\,\theta)}{\partial \theta}\right)^{-1}\!(r^j),$$
			i.e., $\psi(r^j, \theta)$ is the unique $q^j \geq 0$ satisfying $\frac{\partial C(q^j,\theta)}{\partial \theta} = r^j$.
			We then define $G: \mathbb{R} \times [0,1] \to \mathbb{R}$ by
			$$G(r^j, \theta) := C\!\left(\psi(r^j, \theta),\, \theta\right),$$
			which is the cost function re-expressed in terms of the flow variable $r$ and type $\theta$.

			Next, we describe our boundary conditions. For $\theta = 0$, which is the most efficient type in both groups, we have a constraint requiring that $\underline{U} + R^T(\theta = 0) \geq R^N(\theta = 0)$. For $\theta = 1$, we have fixed values for $R^T(\theta = 1)$ and $R^N(\theta = 1)$; furthermore, $R^T(\theta = 1) = R^N(\theta = 1) = 0$ by definition of $R^j(\theta)$. We can restate our constrained optimization problem as follows:			
				\begin{align*}
					\max_{R^T,R^N} \int^1_0 & \biggl\{\pi_T f^T(\theta)\bigg[ P \psi \big(r^T(\theta), \theta \big) - (1-\tau)\Big(\underline{U} + R^T(\theta) + G \big( r^T(\theta), \theta \big)\Big) \bigg]  + \\
					& (1-\pi_T) f^N(\theta) \bigg[ P \psi \big(r^N(\theta), \theta \big) - R^N(\theta) - G\big(r^N(\theta), \theta\big) \bigg] \biggr\}  d\theta
				\end{align*}
				\begin{align*}
					\text{s.t.} \quad 
					& \text{ } \psi(r^j(\theta), \theta) \text{ is non-increasing in } \theta, j \in \{N,T\}, \\
					& \text{ } \psi(r^j(\theta), \theta) \geq 0 \text{ for all } \theta, j \in \{N,T\}, \\
					& \underline{U} + R^T(\theta) \geq R^N(\theta), \text{ for all } \theta \in [0,1], \\
					& R^T(1) = R^N(1) = 0.
				\end{align*}
		
			We then apply a Lagrange multiplier $\lambda(\theta)$ to the constraint on state variables and consider the following modified problem:\footnote{We will again check the non-increasing constraint and the non-negativity constraint afterward.}
				
				\begin{align*}
					\mathcal{J} = \int^1_0 & \biggl\{\pi_T f^T(\theta)\bigg[ P \psi \big(r^T(\theta), \theta \big) - (1-\tau)\Big(\underline{U} + R^T(\theta) + G \big( r^T(\theta), \theta \big)\Big) \bigg]  + \\
					& (1-\pi_T) f^N(\theta) \bigg[ P \psi \big(r^N(\theta), \theta \big) - R^N(\theta) - G\big(r^N(\theta), \theta\big) \bigg] + \\
					& \lambda(\theta)[\underline{U} + R^T(\theta) - R^N(\theta)] \biggr\} d\theta
				\end{align*}
				
			We denote $\mathcal{L}$ as the inside part of the integral. We identify the boundary condition by considering the first variation of $\mathcal{J}$. Following Section 12 from \cite{kamien2012dynamic}, we denote ${R^j}'(\theta)  :=\frac{d R^j(\theta)}{d \theta} = - r^j(\theta) $. We rewrite $\mathcal{J}$ as
				\begin{align*}
					\mathcal{J} = \int^1_0 &\biggl\{ \pi_T f^T(\theta) \bigg[ P \psi \big(-{R^T}'(\theta), \theta \big) - (1-\tau)\Big(\underline{U} + R^T(\theta) + G \big(-{R^T}'(\theta), \theta \big)\Big) \bigg]  + \\
					& (1-\pi_T) f^N(\theta) \bigg[ P \psi \big(-{R^N}'(\theta), \theta \big) - R^N(\theta) - G \big(-{R^N}'(\theta), \theta \big) \bigg]  + \\
					& \lambda(\theta)[\underline{U} + R^T(\theta) - R^N(\theta)] \biggr\} d\theta
				\end{align*}

			Since we have an inequality constraint: $\underline{U} + R^T(\theta) - R^N(\theta) \geq 0$, for all $\theta \in [0,1]$, complementary slackness implies $\lambda(\theta)[\underline{U} + R^T(\theta) - R^N(\theta)] = 0$, thus we can drop it when considering the first variation of $\mathcal{J}$. Our integrand can be considered as a function: $\mathcal{L} \big(\theta, R^T(\theta), {R^T}'(\theta), R^N(\theta), {R^N}'(\theta) \big)$. Let $\mathcal{J}^*$ be the maximum objective value that can be achieved, and we consider the integrand being $\mathcal{L} \big(\theta, R^{T*}(\theta), {R^T}'^*(\theta), R^{N*}(\theta), {R^N}'^*(\theta) \big)$.
				\begin{align*}
					\mathcal{J} - \mathcal{J}^* = \int^1_0 \bigg[ \mathcal{L} \big(\theta, R^T(\theta), {R^T}'(\theta), R^N(\theta), {R^N}'(\theta) \big) - \mathcal{L} \big(\theta, R^{T*}(\theta), {R^T}'^*(\theta), R^{N*}(\theta), {R^N}'^*(\theta) \big) \bigg] d\theta.
				\end{align*}
			Since $\mathcal{J}^*$ is the maximum, $\mathcal{J} - \mathcal{J}^*$ has to be non-positive. We then use Taylor's Theorem around $\big(\theta, R^{T*}, {R^T}'^*, R^{N*}, {R^N}'^* \big)$:
				\begin{align*}
					&\mathcal{J} - \mathcal{J}^* = \\
					&\int^1_0 \bigg[ \big(R^T - R^{T*} \big) \frac{\partial \mathcal{L}^*}{\partial R^T} + \big({R^T}' - {R^T}'^* \big) \frac{\partial \mathcal{L}^*}{\partial {R^T}' } + \big(R^N - R^{N*} \big) \frac{\partial \mathcal{L}^*}{\partial R^N} + \big({R^N}' - {R^N}'^* \big) \frac{\partial \mathcal{L}^*}{\partial {R^N}' }\bigg] d\theta + h.o.t
				\end{align*}	

			where $h.o.t$ refers to higher-order terms. Let $R^j - R^{j*} := h^j$ and ${R^j}' - {R^j}'^* := {h^j}'$, where $j \in \{T,N\}$. Notice that $\int_{0}^{1} {h^j}' \frac{\partial \mathcal{L}^*}{\partial {R^j}' } d \theta = -h^j(0) \frac{\partial \mathcal{L}^*}{\partial {R^j}' }|_{\theta = 0} - \int_{0}^{1} h^j \frac{d}{d \theta} \frac{\partial \mathcal{L}^*}{\partial {R^j}' } d\theta$ by using integration by parts. Then:

			\begin{align*}
				\mathcal{J} - \mathcal{J}^* = & -h^T(0) \frac{\partial \mathcal{L}^*}{\partial {R^T}' }|_{\theta = 0} - h^N(0) \frac{\partial \mathcal{L}^*}{\partial {R^N}' }|_{\theta = 0} \\
				& + \int_{0}^{1} h^T \bigg(  \frac{\partial \mathcal{L}^*}{\partial R^T} - \frac{d}{d \theta} \frac{\partial \mathcal{L}^*}{\partial {R^T}' } \bigg) d\theta + \int_{0}^{1} h^N \bigg(  \frac{\partial \mathcal{L}^*}{\partial R^N} - \frac{d}{d \theta} \frac{\partial \mathcal{L}^*}{\partial {R^N}' } \bigg) d\theta + h.o.t
			\end{align*}	

			We redefine $\mathcal{J} - \mathcal{J}^*$ as $\delta \mathcal{J}$. Dropping the $*$ for convenience and $h.o.t$ as we are interested in the first-ordered necessary condition, we have our first variation being:
			
			\begin{align*}
				\delta \mathcal{J} = & -h^T(0) \frac{\partial \mathcal{L}}{\partial {R^T}' }|_{\theta = 0} - h^N(0) \frac{\partial \mathcal{L}}{\partial {R^N}' }|_{\theta = 0} \\
				& + \int_{0}^{1} h^T \bigg(  \frac{\partial \mathcal{L}}{\partial R^T} - \frac{d}{d \theta} \frac{\partial \mathcal{L}}{\partial {R^T}' } \bigg) d\theta + \int_{0}^{1} h^N \bigg(  \frac{\partial \mathcal{L}}{\partial R^N} - \frac{d}{d \theta} \frac{\partial \mathcal{L}}{\partial {R^N}' } \bigg) d\theta
			\end{align*}	

			The two integrals in the second line relate to the Euler--Lagrange equation and will be $0$ under $\delta \mathcal{J}$. Thus, we can simplify $\delta \mathcal{J}$ to:

			\begin{align*}
				\delta \mathcal{J} 
				& =  -h^T(0) \frac{\partial \mathcal{L}}{\partial {R^T}' }|_{\theta = 0} - h^N(0) \frac{\partial \mathcal{L}}{\partial {R^N}' }|_{\theta = 0} \leq 0\\
				& \Rightarrow -h^T(0) \biggl\{\pi_T f^T \bigg[ P \frac{\partial \psi(-{R^T}', \theta)}{\partial {R^T}'} - (1-\tau)\frac{\partial G(-{R^T}', \theta)}{\partial {R^T}'} \bigg]\biggr\}|_{\theta = 0} \\
				&\qquad - h^N(0) \biggl\{(1-\pi_T) f^N \bigg[ P \frac{\partial \psi(-{R^N}', \theta)}{\partial {R^N}'} - \frac{\partial G(-{R^N}', \theta)}{\partial {R^N}'}\bigg]\biggr\}|_{\theta = 0} \leq 0
			\end{align*}

			Since $\psi \big(-{R^j}', \theta\big)$ is evaluated at $r^j = -{R^j}'$, differentiating $\frac{\partial C(\cdot)}{\partial\theta} = -{R^j}'$ with respect to ${R^j}'$ and applying the inverse function theorem gives:
			$$\frac{\partial \psi(\cdot)}{\partial {R^j}'} = \frac{-1}{\frac{\partial^2 C(\cdot)}{\partial q^j \,\partial \theta}}, \qquad \frac{\partial G(\cdot)}{\partial {R^j}'} = \frac{\partial C(\cdot)}{\partial q^j} \cdot \frac{\partial \psi(\cdot)}{\partial {R^j}'} = \frac{- \frac{\partial C(\cdot)}{\partial q^j}}{\frac{\partial^2 C(\cdot)}{\partial q^j\,\partial \theta}},$$
			where all derivatives of $C$ are evaluated at $\big(\psi({R^j}', \theta),\, \theta\big)$. Substituting into $\delta \mathcal{J}$:
			\begin{align*}
				\delta \mathcal{J} = & -h^T(0) \biggl\{\pi_T f^T \frac{(1-\tau)\frac{\partial C}{\partial q^T} - P}{\frac{\partial^2 C}{\partial q^T\,\partial \theta}}\biggr\}\bigg|_{\theta = 0} - h^N(0) \biggl\{(1-\pi_T) f^N \frac{\frac{\partial C}{\partial q^N} - P}{\frac{\partial^2 C}{\partial q^N\,\partial \theta}}\biggr\}\bigg|_{\theta = 0} \leq 0.
			\end{align*}
			
			Next, we consider how the across-group IC affects $\delta \mathcal{J}$. Recall that across-group IC means $\underline{U} + R^T(\theta) \geq R^N(\theta)$ for all $\theta \in [0,1]$. At $\theta = 0$, this inequality also needs to hold. Therefore, at the optimal path either $\underline{U} + R^{T*}(0) > R^{N*}(0)$ or $\underline{U} + R^{T*}(0) = R^{N*}(0)$. We let the non-binding case be case 1, and the binding case be case 2.
			
			We start with case 1. Taking any $R^T(0)$ and $R^N(0)$, we know $\underline{U} + R^{T}(0) \geq R^{N}(0) \Rightarrow \underline{U} + h^T(0) + R^{T*}(0) \geq h^N(0) + R^{N*}(0)$, since $h^j(0) = R^j(0) - R^{j*}(0)$. Rearranging terms, we get $\underline{U} + R^{T*}(0) - R^{N*}(0) + h^T(0) - h^N(0) \geq 0$. As we are considering the non-binding case, the sum of the first three terms are strictly positive, and $h^T(0)$ and $h^N(0)$ can be chosen freely as they are small perturbations. For $\delta\mathcal{J} \leq 0$ to hold for all free choices of $h^T(0)$ and $h^N(0)$, each coefficient must equal zero independently; therefore, we have
			\begin{align*}
				\frac{\partial C(q^T(0), 0)}{\partial q^T} = \frac{P}{1-\tau} \qquad \text{ and } \qquad \frac{\partial C(q^N(0), 0)}{\partial q^N} = P.
			\end{align*}			
			% No distortion on the top condition.

			Next, we discuss case 2, where $\underline{U} + R^{T*}(0) = R^{N*}(0)$. Taking any $R^T(0)$ and $R^N(0)$, we again have $\underline{U} + R^{T*}(0) - R^{N*}(0) + h^T(0) - h^N(0) \geq 0 \Rightarrow h^T(0) \geq h^N(0)$, as the sum of the first three terms become zero in the binding case. This result implies that the perturbation for group $T$ has to be weakly larger than the perturbation for group $N$ for any perturbation applied to them. In other words, if we deviate from the optimal path, the utility gain (loss) for the most efficient trait agent needs to be no smaller (no larger) than the utility gain (loss) for the most efficient non-trait agent. 
			
			We then consider a function $g(\theta) = \underline{U} + R^T(\theta) - R^{N}(\theta)$. Notice that $g(\theta) \geq 0$ as this function captures the across-group IC at each $\theta \in [0,1]$. When the boundary is binding, $g(0) = 0$, which is a local minimum, and we consider the limit from the right: %\footnote{I need $q^T$ and $q^N$ being continuous at 0}
			\begin{align*}
				g'(0^+) & = \lim_{\theta \rightarrow 0^+} \frac{g(\theta) - g(0)}{\theta - 0} \geq 0 \\
				&\Rightarrow \lim_{\theta \rightarrow 0^+} \frac{\underline{U} + R^T(\theta) - R^N(\theta)}{\theta} \geq 0, \text{ as } g(0) = 0 \\
				& \text{ by L'Hopital's Rule, } \\
				&\Rightarrow \lim_{\theta \rightarrow 0^+} -r^T(\theta) + r^N(\theta) \geq 0 \\
				&\Rightarrow -r^T(0) + r^N(0) \geq 0
			\end{align*}
			Therefore, we have a condition requiring $r^N(0) \geq r^T(0)$. 
			
			To summarize, we have to satisfy two conditions in the binding case, which are: (1.) $h^T(0) \geq h^N(0)$ and (2.) $r^N(0) \geq r^T(0)$. We then prove by contradiction to show that the binding case cannot be optimal. $\delta \mathcal{J} \leq 0 \Rightarrow -h^T(0)M_T \leq h^N(0)M_N$, where $M_T := \pi_T f^T(0) \frac{(1-\tau)\frac{\partial C(q^T(0), 0)}{\partial q^T} - P}{\frac{\partial^2 C(q^T(0), 0)}{\partial q^T\,\partial \theta}}$ and $M_N := (1-\pi_T) f^N(0) \frac{\frac{\partial C(q^N(0), 0)}{\partial q^N} - P}{\frac{\partial^2 C(q^N(0),0)}{\partial q^N\,\partial \theta}}$.
			
			Since $h^T(0)$ and $h^N(0)$ can be any perturbations satisfying $h^T(0) \geq h^N(0)$, we consider two sets of $h^T(0)$ and $h^N(0)$ that give us the necessary conditions for both $M_T$ and $M_N$:
			\begin{itemize}
				\item[1.] $h^T(0) \geq 0$ and $h^N(0) = 0$. We have $-h^T(0)M_T \leq 0 \Rightarrow M_T \geq 0 \Rightarrow \frac{P}{1-\tau} \leq \frac{\partial C(q^T(0), 0)}{\partial q^T}$.
				\item[2.] $h^T(0) = 0$ and $h^N(0) \leq 0$. We have $0 \leq h^N(0)M_N \Rightarrow M_N \leq 0 \Rightarrow P \geq \frac{\partial C(q^N(0), 0)}{\partial q^N}$.
			\end{itemize}

			From 1. and 2., we have $\frac{\partial C(q^N(0), 0)}{\partial q^N} \leq P < \frac{P}{1- \tau} \leq \frac{\partial C(q^T(0), 0)}{\partial q^T}$. Since $\frac{\partial^2 C}{\partial q^2} > 0$, this implies $q^T(0) > q^N(0)$. By single-crossing ($\frac{\partial^2 C}{\partial q \partial \theta} > 0$), $\frac{\partial C}{\partial \theta}$ is strictly increasing in $q$, so $r^T(0) = \frac{\partial C(q^T(0), 0)}{\partial \theta} > \frac{\partial C(q^N(0), 0)}{\partial \theta} = r^N(0)$. This violates $r^N(0) \geq r^T(0)$ as long as $\tau > 0$. Therefore, the boundary condition for the optimal path has to be the non-binding case where $\underline{U} + R^{T*}(0) > R^{N*}(0)$.
			
			When $\tau = 0$, items 1. and 2. give $\frac{\partial C(q^T(0),0)}{\partial q^T} \geq P \geq \frac{\partial C(q^N(0),0)}{\partial q^N}$, so $q^T(0) \geq q^N(0)$. Combined with condition (2) that $r^N(0) \geq r^T(0)$, and single-crossing ($\frac{\partial^2 C}{\partial q \partial \theta} > 0$), we must have $q^T(0) = q^N(0)$, and hence $\frac{\partial C(q^T(0),0)}{\partial q^T} = \frac{\partial C(q^N(0),0)}{\partial q^N} = P$.	
		\end{proof}
			
		\subsubsection*{Proof of Theorem \ref{thm:main}}
		\begin{proof}
			We prove the theorem by constructing the Euler--Lagrange Equations to obtain the candidate solutions; we then compare them and identify the one that maximizes the principal's objective.
		 
			\medskip
			\noindent \textbf{Euler--Lagrange Equations and the Candidate Solutions}
			
			Consider the Euler--Lagrange equations:
			\begin{align*}
				& \frac{\partial \mathcal{L}}{\partial R^T} - \frac{d}{d \theta} \frac{\partial \mathcal{L}}{\partial {R^T}' } = 0, \\
				& \frac{\partial \mathcal{L}}{\partial R^N} - \frac{d}{d \theta} \frac{\partial \mathcal{L}}{\partial {R^N}'} = 0.
			\end{align*}
			
			To begin, we apply integration by parts to the $R^j(\theta) f^j(\theta)$ terms in the objective. Since $R^j(1) = 0$ and $F^j(0) = 0$, we have 
			
			$$\int_0^1 R^j(\theta) f^j(\theta)\, d\theta = \int_0^1 r^j(\theta) F^j(\theta)\, d\theta, \quad j \in \{T, N\}.$$

			After this substitution, $R^j$ no longer appears explicitly in the integrand, so $\frac{\partial \mathcal{L}}{\partial R^T} = \lambda(\theta)$ and $\frac{\partial \mathcal{L}}{\partial R^N} = -\lambda(\theta)$.
			
			We denote $\frac{\partial C(q^j,\theta)}{\partial q}$ as $C_{q}(q^j, \theta)$ and $\frac{\partial^2 C(q^j,\theta)}{\partial q \partial\theta}$ as $C_{q\theta}(q^j, \theta)$, for $j \in \{N,T\}$. 
			The Euler--Lagrange equations then give:
				\begin{align}
    			\lambda(\theta) 
   				 &= \frac{d}{d\theta}\left[\pi_T f^T(\theta)\frac{(1-\tau)C_{q}(q^T,\theta) - P}{C_{q \theta}(q^T,\theta)} + (1-\tau)\pi_T F^T(\theta)\right], \\
    			-\lambda(\theta) 
    			&= \frac{d}{d\theta}\left[(1-\pi_T)f^N(\theta)\frac{C_{q}(q^N,\theta) - P}{C_{q \theta}(q^N,\theta)} + (1-\pi_T)F^N(\theta)\right].
				\end{align}

			Adding up, the left-hand side vanishes. Applying the Fundamental Theorem of Calculus and evaluating at $\theta = 0$ using the boundary conditions $C_{q}(q^T(0),0) = \frac{P}{1-\tau}$ and $C_{q}(q^N(0),0) = P$, together with $F^j(0) = 0$, pins down the constant term $K = 0$. We thus have our first observation:

			\begin{fact}\label{fact:equl}
				\begin{align*}
    			\pi_T f^T(\theta)\frac{(1-\tau)C_{q}(q^T,\theta) - P}{C_{q \theta}(q^T,\theta)} + (1-\tau)\pi_T F^T(\theta) = -\left[(1-\pi_T)f^N(\theta) \frac{C_{q}(q^N,\theta) - P}{C_{q \theta}(q^N,\theta)} + (1-\pi_T)F^N(\theta)\right],
				\end{align*}
				for all $\theta \in [0,1]$.
			\end{fact}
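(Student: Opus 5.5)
The plan is to add the two Euler--Lagrange equations displayed just before the statement, so that the multiplier $\lambda(\theta)$ cancels, and then to fix the resulting constant of integration using the boundary condition of Proposition~\ref{prop:startpoint}.

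\textbf{Step 1: reduce the sum to a constant.} Write
\[
A^T(\theta) := \pi_T f^T(\theta)\frac{(1-\tau)C_{q}(q^T,\theta) - P}{C_{q \theta}(q^T,\theta)} + (1-\tau)\pi_T F^T(\theta),
\]
\[
A^N(\theta) := (1-\pi_T)f^N(\theta)\frac{C_{q}(q^N,\theta) - P}{C_{q \theta}(q^N,\theta)} + (1-\pi_T)F^N(\theta).
\]
The Euler--Lagrange equations read $\lambda = \frac{d}{d\theta}A^T$ and $-\lambda = \frac{d}{d\theta}A^N$. Adding them gives $\frac{d}{d\theta}\big(A^T + A^N\big) = 0$ on $[0,1]$.

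I would read this identity in the integrated (du Bois-Reymond) sense rather than pointwise. The multiplier of a state constraint need not be a function and may have a singular part, so it is safer to state the two equations as $A^T(\theta) - A^T(0) = \int_0^\theta \lambda$ and $A^N(\theta) - A^N(0) = -\int_0^\theta \lambda$, interpreting $\int_0^\theta \lambda$ as a measure integral if necessary. Adding these removes $\lambda$ entirely, whatever its regularity. The conclusion is $A^T(\theta) + A^N(\theta) = K$ for every $\theta \in [0,1]$, with
\[
K = A^T(0) + A^N(0).
\]

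\textbf{Step 2: evaluate at $\theta = 0$.} Since $F^T(0) = F^N(0) = 0$, the distribution terms vanish at the origin.

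The proof of Proposition~\ref{prop:startpoint} shows that the optimum lies in the non-binding case whenever $\tau > 0$. There the no-distortion conditions hold: $C_q(q^T(0),0) = \frac{P}{1-\tau}$ and $C_q(q^N(0),0) = P$. Therefore $(1-\tau)C_q(q^T(0),0) - P = 0$ and $C_q(q^N(0),0) - P = 0$, while $C_{q\theta} > 0$ keeps both denominators finite and nonzero. Hence $A^T(0) = A^N(0) = 0$, so $K = 0$.

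When $\tau = 0$, the last paragraph of that proof gives $C_q(q^T(0),0) = C_q(q^N(0),0) = P$ even in the binding case. The same conclusion $K = 0$ follows.

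Rearranging $A^T = -A^N$ is exactly Fact~\ref{fact:equl}.

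\textbf{Main obstacle.} The delicate point is justifying that the Euler--Lagrange equations hold on all of $[0,1]$ with $A^T + A^N$ continuous up to $\theta = 0$. Two difficulties arise:

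\begin{itemize}
\item[(i)] The problem also carries the monotonicity and nonnegativity constraints on $\psi(r^j,\theta)$, which were set aside when $\mathcal{J}$ was formed. Above a hiring cutoff, where $q^j = 0$ sits at a corner, the unconstrained Euler--Lagrange equation may fail.
\item[(ii)] The schedules $q^j$ must be regular enough (say piecewise $C^1$, with $A^j$ absolutely continuous) for the fundamental theorem of calculus to apply.
\end{itemize}

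For (ii), I would first argue on the region where both quantities are interior and the unconstrained first-order conditions determine $q^j$. There $q^j$ inherits continuity from the implicit function theorem applied to the virtual-cost equations, using $C_{qq} > 0$ and $C_{q\theta} > 0$. This region contains a neighbourhood of $0$, because $q^j(0) > 0$ given $P > C_q(0,0)$.

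For (i), on the corner region I would append the nonnegativity multiplier to the Euler--Lagrange equations and note that it enters additively in the same way for both groups. I would therefore state and verify the identity on $[0,\tilde{\theta})$ first, and extend it to the remaining types only after checking that the corner multipliers also cancel or vanish there.
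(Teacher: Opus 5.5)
Your proposal is correct and is essentially the paper's argument: add the two Euler--Lagrange equations so that $\lambda$ cancels, integrate, and fix the constant at $\theta=0$ using $F^j(0)=0$ and the no-distortion conditions of Proposition~\ref{prop:startpoint}, which give $K=0$ (your integrated reading of $\lambda$ and your treatment of the $\tau=0$ binding case are extra care the paper omits). On your obstacle (i), resolve it by working in the relaxed problem where nonnegativity and monotonicity have been set aside, which is how the paper states the fact and where it holds on all of $[0,1]$ for the untruncated schedules; do not try to extend it to the truncated schedules, because the nonnegativity multipliers are group-specific and do not cancel when you add the equations (e.g.\ under MLRP with $\tau>0$, between the two hiring cutoffs one has $A^T=0$ while $A^N=\frac{(1-\pi_T)f^N}{C_{q\theta}}\bigl(\Psi^N(0,\theta)-P\bigr)>0$).
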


			The following two cases consider the Karush--Kuhn--Tucker (KKT) conditions: (1.) separating region ($\lambda(\theta) = 0$) and (2.) pooling region ($\lambda(\theta) > 0$).
			
			\medskip
			\noindent \textbf{Case 1: Separating Region ($\lambda(\theta) = 0$)}
			
			In a separating region, complementary slackness tells us that $\underline{U} + R^T(\theta) - R^N(\theta) > 0$ for all $\theta$ in this region, and we can go back to the Euler--Lagrange equations and apply the Fundamental Theorem of Calculus to get:
			\begin{align*}
				& C_1 =  \pi_T f^T(\theta)\frac{(1-\tau)C_{q}(q^T,\theta) - P}{C_{q \theta}(q^T,\theta)} + (1-\tau)\pi_T F^T(\theta) \\
				& C_2 = (1-\pi_T)f^N(\theta)\frac{C_{q}(q^N,\theta) - P}{C_{q \theta}(q^N,\theta)} + (1-\pi_T)F^N(\theta)
			\end{align*}

			where $C_1$ and $C_2$ are two constants. From Fact~\ref{fact:equl}, we know $C_1 = -C_2$. Rearranging each equation, $q^T(\theta)$ and $q^N(\theta)$ are implicitly defined by:
			\begin{align}
				\frac{P}{1-\tau} &= C_{q}(q^T,\theta) - \frac{C_1 - (1-\tau)\pi_T F^T(\theta)}{(1-\tau)\pi_T f^T(\theta)} C_{q \theta}(q^T,\theta), \\
				P &= C_{q}(q^N,\theta) + \frac{C_1 + (1-\pi_T)F^N(\theta)}{(1-\pi_T)f^N(\theta)} C_{q \theta}(q^N,\theta).
			\end{align}

			The single-crossing condition $C_{q \theta} > 0$ and $C_{qq} > 0$ together guarantee we can pin down $q^T(\theta)$ and $q^N(\theta)$ for each $\theta$ uniquely, up to the constant $C_1$.

			\medskip
			\noindent\textbf{Case 2: Pooling Region ($\lambda(\theta) > 0$)}
			
			In a pooling region, complementary slackness tells us that $\underline{U} + R^T(\theta) - R^N(\theta) = 0$ for all $\theta$ in this region. Taking the derivative with respect to $\theta$ further implies $r^T(\theta) = r^N(\theta)$, which by the definition of $r^j$ means $\frac{\partial C(q^T,\theta)}{\partial\theta} = \frac{\partial C(q^N,\theta)}{\partial\theta}$. Since $\frac{\partial^2 C}{\partial q\,\partial\theta} > 0$, $\frac{\partial C(\cdot,\theta)}{\partial\theta}$ is strictly increasing in $q$, so this implies $q^T(\theta) = q^N(\theta) =: q_p(\theta)$. That is, for all cost types $\theta$ in this region, the principal asks for the same production regardless of group identity.
			
			Substituting $q^T(\theta) = q^N(\theta) = q_p(\theta)$ into Fact~\ref{fact:equl}, and using the identity $\hat{f}(\theta) - \tau\pi_T f^T(\theta) = \widetilde{f}(\theta)$, the pooling quantity $q_p(\theta)$ is implicitly characterized by:
			$$\widetilde{f}(\theta)\cdot C_{q}(q_p,\theta) + \widetilde{F}(\theta)\cdot C_{q \theta}(q_p,\theta) = P\hat{f}(\theta),$$
			
			where $\hat{f}(\theta) := \pi_T f^T(\theta) + (1-\pi_T)f^N(\theta)$, $\widetilde{f}(\theta) := \pi_T(1-\tau)f^T(\theta) + (1-\pi_T)f^N(\theta)$, and $\widetilde{F}(\theta) := \pi_T(1-\tau)F^T(\theta) + (1-\pi_T)F^N(\theta)$.\footnote{$\widetilde{F}(\theta)$ is a weighted combination of $F^T$ and $F^N$ but is not a proper CDF when $\tau > 0$, since $(1-\tau)\pi_T$ and $(1-\pi_T)$ do not sum to one.} The single-crossing condition $C_{q\theta} > 0$ and convexity $C_{qq} > 0$ together guarantee that this equation uniquely pins down $q_p(\theta)$ for each $\theta$. 
			
			Notice that when $\tau > 0$, the pooling region cannot be the whole $[0,1]$, because the boundary condition requires $\theta = 0$ to be in the separating region. Therefore, we impose distributional assumptions to characterize the exact pooling and separating regions.

			\medskip
			\noindent\textbf{Optimality of the Objective Function}

			We first consider the optimal design when group identity is observed by the principal. From Case 1, evaluating $C_1$ at $\theta = 0$ using the boundary condition $C_{q}(q^T(0),0) = \frac{P}{1-\tau}$ and $F^T(0) = 0$ gives $C_1 = 0$. The separating quantities $q^T(\theta)$ and $q^N(\theta)$ are then implicitly defined by:
			\begin{align}
				\Psi^T(q^T,\theta) &:= C_{q}(q^T,\theta) + \frac{F^T(\theta)}{f^T(\theta)} C_{q \theta}(q^T,\theta) = \frac{P}{1-\tau}, \label{opt:T} \\
				\Psi^N(q^N,\theta) &:= C_{q}(q^N,\theta) + \frac{F^N(\theta)}{f^N(\theta)} C_{q \theta}(q^N,\theta) = P. \label{opt:N}
			\end{align}
			where $\Psi^j$ captures the virtual cost for group $j$. Since $\frac{\partial \Psi^j}{\partial q^j} > 0$, each equation uniquely pins down $q^j(\theta)$. At the same production level $q$, the virtual cost difference is:
			$$\Psi^T(q,\theta) - \Psi^N(q,\theta) = \left(\frac{F^T(\theta)}{f^T(\theta)} - \frac{F^N(\theta)}{f^N(\theta)}\right) C_{q\theta}(q,\theta).$$

			\medskip
			\noindent\textit{(a) Under MLRP}

			$\frac{F^T(\theta)}{f^T(\theta)} \leq \frac{F^N(\theta)}{f^N(\theta)}$ for all $\theta$, so $\Psi^T(q,\theta) \leq \Psi^N(q,\theta)$ at the same $q$. Since $\frac{P}{1-\tau} > P$ and $\Psi^j$ is strictly increasing in $q^j$, satisfying both \eqref{opt:T} and \eqref{opt:N} simultaneously requires $q^T(\theta) > q^N(\theta)$ for all $\theta \in [0,1]$.\footnote{We cannot have $q^T(\theta) \leq q^N(\theta)$ for all $\theta$ since the boundary conditions give $\Psi^T(q^T(0),0) = \frac{P}{1-\tau} > P = \Psi^N(q^N(0),0)$, which combined with $\frac{\partial\Psi^j}{\partial q^j} > 0$ requires $q^T(0) > q^N(0)$.}

			We now verify the across-group IC constraint holds with $\underline{U} = 0$. Since $q^T(\theta) > q^N(\theta)$ for all $\theta$, single-crossing gives $r^T(\theta) = \frac{\partial C(q^T,\theta)}{\partial\theta} > \frac{\partial C(q^N,\theta)}{\partial\theta} = r^N(\theta)$. Integrating from $\theta$ to $1$ gives $R^T(\theta) > R^N(\theta)$ for all $\theta \in [0,1)$, so $U^T(\theta) \geq U^N(\theta)$ holds everywhere. Therefore, the full separating contract with $\underline{U} = 0$ is optimal under MLRP.

			\medskip
			\noindent\textit{(b) Under Reverse MLRP}

			 $\frac{F^T(\theta)}{f^T(\theta)} \geq \frac{F^N(\theta)}{f^N(\theta)}$ for all $\theta$, so $\Psi^T(q,\theta) \geq \Psi^N(q,\theta)$ at the same $q$. The separating solution from \eqref{opt:T} and \eqref{opt:N} then satisfies $q^T(\theta) \leq q^N(\theta)$ for all $\theta > 0$, with equality only at $\theta = 0$. Full separation requires $\underline{U} > 0$ to satisfy the across-group IC, while pooling corresponds to $\underline{U} = 0$. To determine which is optimal, we study how the principal's value $\mathcal{J}$ responds to changes in $\underline{U}$. Define the following functions along the optimal path:
			\begin{align}
				\Xi^T(\theta) &:= \pi_T f^T(\theta)\frac{(1-\tau)C_{q}(q^T,\theta) - P}{C_{q \theta}(q^T,\theta)} + (1-\tau)\pi_T F^T(\theta), \label{def:Xi_T} \\
				\Xi^N(\theta) &:= (1-\pi_T)f^N(\theta)\frac{C_{q}(q^N,\theta) - P}{C_{q \theta}(q^N,\theta)} + (1-\pi_T)F^N(\theta). \label{def:Xi_N}
			\end{align}

			The Euler--Lagrange equations from the first variation of $\mathcal{J}$ with respect to $R^T$ and $R^N$ take the compact form:
			\begin{align}
				\lambda(\theta) = \frac{d\Xi^T}{d\theta}, \qquad -\lambda(\theta) = \frac{d\Xi^N}{d\theta}. \label{eq:EL_Xi}
			\end{align}

			From Fact~\ref{fact:equl}, we know $\Xi^T(\theta)+ \Xi^N(\theta) = 0$, for all $\theta \in [0,1]$.

			Before proceeding, we state our claim.
			\begin{claim}\label{claim:pooling_optimal}
				$\mathcal{J}(0) \geq \mathcal{J}(\underline{U})$ for all $\underline{U} \geq 0$ and all $\tau \in [0,1)$.
			\end{claim}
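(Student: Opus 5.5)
Define $\mathcal{J}(\underline{U})$ as the principal's value when the constant $\underline{U}=U^T(1)$ is held fixed and the schedules $(q^T,q^N)$ are chosen optimally subject to the pointwise constraint $\underline{U}+R^T(\theta)\ge R^N(\theta)$. The plan is to show that $\mathcal{J}$ is non-increasing on $[0,\infty)$; the claim then follows at once. To do this I will apply an envelope theorem to the Lagrangian $\mathcal{J}$ written in the proof of Proposition~\ref{prop:startpoint}. Since $\underline{U}$ appears only through the term $-(1-\tau)\pi_T f^T(\theta)\underline{U}$ and through $\lambda(\theta)\underline{U}$, the derivative is
\[
\frac{d\mathcal{J}}{d\underline{U}} = -(1-\tau)\pi_T + \int_0^1 \lambda(\theta)\,d\theta .
\]
From \eqref{eq:EL_Xi}, $\int_0^1\lambda\,d\theta = \Xi^T(1)-\Xi^T(0)$. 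By the boundary condition of Proposition~\ref{prop:startpoint} and $F^T(0)=0$, we have $\Xi^T(0)=0$. It therefore suffices to prove $\Xi^T(1)\le(1-\tau)\pi_T$ for every $\underline{U}\ge0$.

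I would evaluate $\Xi^T(1)$ by splitting into the two cases from the proof so far. If $\theta=1$ lies in the separating region, then $\lambda\equiv0$ near $1$ and $\Xi^T$ is constant there. Because the constant $C_1$ is fixed at $0$ by $\theta=0$, we get $\Xi^T(1)=0$ and the derivative equals $-(1-\tau)\pi_T<0$. If $\theta=1$ lies in the pooling region, then $q^T(1)=q^N(1)=q_p(1)$. Using Fact~\ref{fact:equl}, i.e.\ $\Xi^T=-\Xi^N$, I eliminate the common ratio $(C_q(q_p,1)-P)/C_{q\theta}(q_p,1)$ between the two expressions \eqref{def:Xi_T}--\eqref{def:Xi_N}; this is the same ratio pinned down by the pooling equation. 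Solving gives $\Xi^T(1)$ in closed form in terms of $f^T(1),f^N(1)$ and $F^j(1)=1$. I expect it to decompose as $\frac{(1-\tau)\hat f}{\widetilde f}|\Lambda(1)|+\Delta$, where $\Lambda$ is as defined in the text and $\Delta\le0$ collects the terms proportional to $\tau$. Checking the sign of $\Delta$ should be routine algebra, because $\tau$ enters only through $\widetilde f$ versus $\hat f$.

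The main obstacle is the bound $\frac{\hat f(1)}{\widetilde f(1)}|\Lambda(1)|\le \pi_T$. Here I would use reverse MLRP (through Lemma~\ref{lemma:Lambdanonpos}, which gives $\Lambda\le0$ with $|\Lambda|$ non-decreasing) only to locate the worst case at $\theta=1$. At $\theta=1$ we have $F^T=F^N=1$, so $|\Lambda(1)|=\pi_T(1-\pi_T)|f^T(1)-f^N(1)|/\hat f(1)$. The needed inequality then reduces to $(1-\pi_T)|f^N(1)-f^T(1)|\le \widetilde f(1)$. Under reverse MLRP we have $f^N(1)\ge f^T(1)$, so this reads $(1-\pi_T)(f^N-f^T)\le(1-\tau)\pi_Tf^T+(1-\pi_T)f^N$, which is immediate since $f^T\ge0$. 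If $\Lambda$ in the text is instead normalized differently, I would redo this last comparison directly rather than rely on the decomposition.

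Concavity of $\mathcal{J}$ in $\underline{U}$ is plausible, since the constraint set is convex in $(R^T,R^N,\underline{U})$ after the change of variables and $G$ is convex. Even without it, the derivative being non-positive everywhere gives the claim by integrating from $0$ to $\underline{U}$. Finally, I would note that the case split also covers $\underline{U}$ large enough that the constraint never binds, which falls into the first case.
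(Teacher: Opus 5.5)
Your outline (the envelope formula, $\int_0^1\lambda\,d\theta=\Xi^T(1)-\Xi^T(0)$, $\Xi^T(0)=0$, and the pooling-point decomposition of $\Xi^T$) is the paper's. However, your treatment of the case where $\theta=1$ lies in a separating region is wrong, and that case is the main one. The Euler--Lagrange constant is constant only on each maximal interval where $\lambda=0$. It is pinned to $0$ by Proposition~\ref{prop:startpoint} only on the slack interval containing $\theta=0$. Since $d\Xi^T/d\theta=\lambda\ge 0$, concluding $\Xi^T(1)=0$ amounts to asserting $\lambda=0$ a.e., i.e.\ that the constraint never binds.

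Now note that for every $\underline{U}>0$ the constraint is slack near $\theta=1$: the slack at $\theta=1$ equals $\underline{U}$, because $R^T(1)=R^N(1)=0$. So every $\underline{U}>0$ falls into your separating case, and your argument would give $\mathcal{J}'(\underline{U})=-(1-\tau)\pi_T$ for all $\underline{U}>0$. That is false whenever the constraint binds on an interior interval. An example is $\tau=0$ under reverse MLRP with $\underline{U}$ below the value of $R^N(0)-R^T(0)$ at the group-by-group solution; there $q^T\le q^N$ forces the constraint to bind somewhere. In such configurations $\Xi^T(1)=\Xi^T(\theta_2)$, where $\theta_2<1$ is the right endpoint of the last binding interval, and this value is generally strictly positive.

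So what you actually need is the bound $\Xi^T(\theta)<(1-\tau)\pi_T$ at an arbitrary pooling point $\theta<1$. Your reduction uses $F^T(1)=F^N(1)=1$, so it does not deliver this. Appealing to Lemma~\ref{lemma:Lambdanonpos} to place the worst case at $\theta=1$ does not close the gap as stated. The term to control is $\frac{(1-\tau)\hat f}{\tilde f}|\Lambda|$, and the factor $\hat f/\tilde f$ is non-increasing in $\theta$ under reverse MLRP when $\tau>0$. Monotonicity of $|\Lambda|$ alone therefore says nothing about the product.

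The paper instead uses a pointwise inequality valid at every $\theta$. The efficiency term equals $\frac{(1-\tau)\pi_T(1-\pi_T)(F^Tf^N-F^Nf^T)}{\tilde f}$, and $(1-\pi_T)(F^Tf^N-F^Nf^T)<\tilde f$ is equivalent to $(1-\pi_T)f^N(F^T-1)<(1-\tau)\pi_Tf^T+(1-\pi_T)F^Nf^T$. The left side is non-positive simply because $F^T\le 1$. Together with $\Delta\le 0$, this bounds $\Xi^T$ at every pooling point, hence at $\theta_2$, hence at $1$.

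Alternatively, you could rescue your ``worst case at $\theta=1$'' idea. With $\rho=f^T/f^N$, you would check directly that the ratio $(F^T-F^N\rho)/\bigl((1-\tau)\pi_T\rho+1-\pi_T\bigr)$ is non-decreasing under reverse MLRP. That is a separate computation, not a consequence of the lemma.
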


			\begin{proof}
				We show $\mathcal{J}'(\underline{U}) < 0$ for all $\underline{U} \geq 0$. Since $\underline{U}$ enters $\mathcal{J}$ linearly, the envelope theorem gives:
				\begin{equation}
					\mathcal{J}'(\underline{U}) = -(1-\tau)\pi_T + \int_0^1 \lambda(\theta)\,d\theta. \label{eq:envelope}
				\end{equation}

				By the Fundamental Theorem of Calculus and \eqref{eq:EL_Xi}:
				$$\int_0^1 \lambda(\theta)\,d\theta = \Xi^T(1) - \Xi^T(0).$$
				The boundary condition (Proposition~\ref{prop:startpoint}) gives $C_q(q^T(0), 0) = P/(1-\tau)$, so $(1-\tau)C_q(q^T(0),0) - P = 0$. Since $F^T(0) = 0$:
				$$\Xi^T(0) = \pi_T f^T(0) \cdot \frac{0}{C_{q\theta}(q^T(0),0)} + 0 = 0.$$
				Therefore, $\mathcal{J}'(\underline{U}) = -(1-\tau)\pi_T + \Xi^T(1)$, and it remains to show $\Xi^T(1) < (1-\tau)\pi_T$.

				Since $\lambda \geq 0$, $\Xi^T$ is non-decreasing, and $\Xi^T(0) = 0$ gives $\Xi^T(1) \geq 0$. We bound $\Xi^T$ by considering the contract structure at $\theta = 1$.

				\textit{At any pooling point} $\theta$ where $q^T(\theta) = q^N(\theta) = q_p(\theta)$: define $\alpha(\theta) := C_q(q_p,\theta)/C_{q\theta}(q_p,\theta)$. Using $\Xi^T(\theta) + \Xi^N(\theta) = 0$ (Fact~\ref{fact:equl}), the condition
				$$\tilde{f}(\theta)\,\alpha(\theta) - \hat{f}(\theta)\frac{P}{C_{q\theta}} + \tilde{F}(\theta) = 0$$
				determines $\alpha$, where $\tilde{f} = (1-\tau)\pi_T f^T + (1-\pi_T)f^N$ and $\tilde{F} = (1-\tau)\pi_T F^T + (1-\pi_T)F^N$. Substituting back into $\Xi^T$ and decomposing (using $(1-\tau)\hat{f} - \tilde{f} = -\tau(1-\pi_T)f^N$ and $F^T\tilde{f} - f^T\tilde{F} = (1-\pi_T)(F^Tf^N - f^TF^N)$):
				\begin{equation}
					\Xi^T(\theta) = \frac{(1-\tau)\hat{f}(\theta)}{\tilde{f}(\theta)}\,|\Lambda(\theta)| + \Delta(\theta), \label{eq:Jprime_full}
				\end{equation}
				where $\Delta(\theta) := -\frac{\pi_T(1-\pi_T)\tau\, f^T f^N\, P}{\tilde{f}\,C_{q\theta}} \leq 0$ and we used $\Lambda(\theta) \leq 0$ under reverse MLRP (see Lemma~\ref{lemma:Lambdanonpos}). 
				
				If $\theta = 1$ lies in a pooling region, substituting \eqref{eq:Jprime_full} at $\theta = 1$ into $\mathcal{J}'(\underline{U}) = -(1-\tau)\pi_T + \Xi^T(1)$ gives the full three-channel decomposition:
				\begin{equation}
					\mathcal{J}'(\underline{U}) = \underbrace{-(1-\tau)\pi_T}_{\text{transfer cost}} + \underbrace{\frac{(1-\tau)\hat{f}(1)}{\tilde{f}(1)}\,|\Lambda(1)|}_{\text{efficiency gain}} + \underbrace{\Delta(1)}_{\text{tax credit loss}\;\leq\;0}. \label{eq:Jprime_decomp}
				\end{equation}
				The case in which $\theta = 1$ lies in a separating region is treated at the end of the proof.

				Since $\Delta \leq 0$, it suffices to show $\frac{(1-\tau)\hat{f}}{\tilde{f}}|\Lambda| < (1-\tau)\pi_T$. Expanding:
				$$\frac{(1-\tau)\hat{f}}{\tilde{f}}|\Lambda| = \frac{(1-\tau)\pi_T(1-\pi_T)\bigl(F^Tf^N - F^Nf^T\bigr)}{\tilde{f}}.$$
				We need $(1-\pi_T)(F^Tf^N - F^Nf^T) < \tilde{f}$, i.e.,
				$$(1-\pi_T)f^N\underbrace{(F^T - 1)}_{\leq\, 0} \;<\; (1-\tau)\pi_T f^T + (1-\pi_T)\underbrace{F^Nf^T}_{\geq\, 0}.$$
				The left side is non-positive while the right side is strictly positive. Therefore, $\Xi^T(\theta) < (1-\tau)\pi_T$ at any pooling point.

				If $\theta = 1$ belongs to a pooling region, we are done. If $\theta = 1$ belongs to a separating region, $\Xi^T$ is constant in that region, equal to the value at its left endpoint. If the separating region extends to $\theta = 0$, then $\Xi^T(1) = \Xi^T(0) = 0 < (1-\tau)\pi_T$. Otherwise, the left endpoint is a preceding pooling point $\theta_2$, and $\Xi^T(1) = \Xi^T(\theta_2) < (1-\tau)\pi_T$ by the bound above.

				In all cases, $\Xi^T(1) < (1-\tau)\pi_T$, so $\mathcal{J}'(\underline{U}) < 0$ for all $\underline{U} \geq 0$ and all $\tau \in [0,1)$. Since $\mathcal{J}$ is strictly decreasing in $\underline{U}$, the optimum is $\underline{U} = 0$.
			\end{proof}

			\medskip
			\noindent\textit{Contract structure at $\underline{U} = 0$.}

			Having established $\underline{U} = 0$, we now characterize the structure of the optimal contract.

			\textbf{Case $\tau = 0$.} Under reverse MLRP, the separating solution satisfies $q^T(\theta) \leq q^N(\theta)$ for all $\theta$, so $r^T(\theta) \leq r^N(\theta)$ by single-crossing. At $\underline{U} = 0$, the across-group IC requires $R^T(\theta) \geq R^N(\theta)$, which forces $R^T = R^N$ and hence $q^T = q^N$ everywhere: full pooling with $\theta^* = 0$.

			\textbf{Case $\tau > 0$.} The boundary condition (Proposition~\ref{prop:startpoint}) gives $C_q(q^T(0),0) = \frac{P}{1-\tau} > P = C_q(q^N(0),0)$, so $q^T(0) > q^N(0)$. Consider the separating quantities from \eqref{opt:T}--\eqref{opt:N} (with $C_1 = 0$). At any $\theta$ where $q^T(\theta) = q^N(\theta) = q$, the two virtual cost equations give:
			\begin{equation}
				\left(\frac{F^T(\theta)}{f^T(\theta)} - \frac{F^N(\theta)}{f^N(\theta)}\right) C_{q\theta}(q,\theta) = \frac{P\tau}{1-\tau}. \label{eq:crossing}
			\end{equation}
			At $\theta = 0$, the left side is zero (since $F^j(0) = 0$) while the right side is strictly positive, confirming $q^T(0) > q^N(0)$. Since $\frac{F^T(\theta)}{f^T(\theta)} - \frac{F^N(\theta)}{f^N(\theta)}$ is non-decreasing by assumption and $C_{q\theta} > 0$, the left side of \eqref{eq:crossing} is non-decreasing in $\theta$ (at a given $q$). Two cases arise:

			\begin{itemize}
				\item If the left side of \eqref{eq:crossing} remains below $\frac{P\tau}{1-\tau}$ for all $\theta \in [0,1]$ (i.e., the tax credit dominates), then $q^T(\theta) > q^N(\theta)$ throughout and the optimal contract is full separation ($\theta^* = 1$). The across-group IC is satisfied since $r^T > r^N$ everywhere implies $R^T > R^N$.

				\item Otherwise, there exists a crossing point $\theta^* \in (0,1)$ where $q^T(\theta^*) = q^N(\theta^*)$, determined by \eqref{eq:crossing}. For $\theta > \theta^*$, the separating solution would give $q^T(\theta) < q^N(\theta)$, making $r^T < r^N$ and violating the across-group IC near $\theta = 1$ (since $R^T(1) = R^N(1) = 0$ and $r^T < r^N$ implies $R^T(\theta) < R^N(\theta)$ for $\theta$ close to $1$).
			\end{itemize}

			In the second case, the principal pools on $[\theta^*, 1]$ with $q^T(\theta) = q^N(\theta) = q_p(\theta)$. On $[0, \theta^*)$, the separating quantities from \eqref{opt:T}--\eqref{opt:N} apply. The across-group IC at $\underline{U} = 0$ is verified: on $[\theta^*, 1]$, $r^T = r^N$, so $R^T(\theta) = R^N(\theta)$; on $[0, \theta^*)$, $q^T > q^N$ implies $r^T > r^N$, so $R^T(\theta) - R^N(\theta) = \int_\theta^{\theta^*}(r^T - r^N)\,dx > 0$. Therefore the across-group IC holds everywhere.
		\end{proof}

		\subsubsection*{Technical Lemmas}		
		\begin{lemma}\label{lemma:Lambdanonpos}
			Define $\Lambda(\theta) := \frac{\pi_T(1-\pi_T)\bigl(F^N(\theta)f^T(\theta) - F^T(\theta)f^N(\theta)\bigr)}{\hat{f}(\theta)}$, where $\hat{f}(\theta) = \pi_T f^T(\theta) + (1-\pi_T)f^N(\theta)$. Under reverse MLRP, $\Lambda(\theta)$ is non-increasing and $\Lambda(\theta) \leq 0$ for all $\theta \in [0,1]$.
		\end{lemma}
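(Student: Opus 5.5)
The plan is to reduce both claims to a statement about the likelihood ratio $\ell(\theta) := f^T(\theta)/f^N(\theta)$, which is non-increasing under reverse MLRP. I assume $f^N > 0$ on $[0,1]$, implicit in the MLRP statements. Dividing numerator and denominator of $\Lambda$ by $f^N(\theta)$ gives
$$\Lambda(\theta) = -\,\pi_T(1-\pi_T)\,\frac{D(\theta)}{\pi_T\ell(\theta) + (1-\pi_T)}, \qquad D(\theta) := F^T(\theta) - \ell(\theta)F^N(\theta).$$
It then suffices to show two things: $D \geq 0$, and $D$ is non-decreasing. The denominator $\pi_T\ell + (1-\pi_T)$ is strictly positive and non-increasing. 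Hence $D/(\pi_T\ell+1-\pi_T)$ is a ratio of a nonnegative non-decreasing function over a positive non-increasing function, so it is nonnegative and non-decreasing. Multiplying by $-\pi_T(1-\pi_T)\le 0$ then yields $\Lambda \le 0$ and $\Lambda$ non-increasing.

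For the sign, I would prove the reverse-hazard-rate ordering directly rather than cite \cite{krishna2009auction}. Since $\ell(x) \geq \ell(\theta)$ for $x \leq \theta$,
$$F^T(\theta) = \int_0^\theta \ell(x) f^N(x)\,dx \;\geq\; \ell(\theta)F^N(\theta).$$
This is exactly $D(\theta) \geq 0$, equivalently $F^T/f^T \geq F^N/f^N$.

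For monotonicity, I want to avoid assuming $\ell$ is differentiable, so I would compare two points $\theta_1 < \theta_2$ directly. Write $F^T(\theta_2) - F^T(\theta_1) = \int_{\theta_1}^{\theta_2}\ell f^N\,dx$. Also split $\ell(\theta_2)F^N(\theta_2) - \ell(\theta_1)F^N(\theta_1) = \ell(\theta_2)\int_{\theta_1}^{\theta_2} f^N\,dx + (\ell(\theta_2)-\ell(\theta_1))F^N(\theta_1)$. Subtracting the second expression from the first gives
$$D(\theta_2) - D(\theta_1) = \int_{\theta_1}^{\theta_2}\bigl(\ell(x) - \ell(\theta_2)\bigr)f^N(x)\,dx + \bigl(\ell(\theta_1) - \ell(\theta_2)\bigr)F^N(\theta_1).$$
Both terms are nonnegative because $\ell$ is non-increasing, so $D$ is non-decreasing.

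The main obstacle is this regularity point. The naive route differentiates $\Lambda$ and needs $\ell'$, which the assumptions do not supply. The two-point decomposition above sidesteps it, using only monotonicity and integrability of $\ell f^N$.

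Finally, I would combine the pieces. Take $0 \le a_1 \le a_2$ (the values of $D$) and $b_1 \geq b_2 > 0$ (the values of the denominator). Then $a_1/b_1 \le a_2/b_1 \le a_2/b_2$, which completes the argument.
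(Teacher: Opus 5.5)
Your proof is correct, but it takes a different route from the paper's.

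\textbf{Sign claim.} The paper treats $F^N f^T - F^T f^N \le 0$ as a known consequence of reverse MLRP (the reverse hazard rate ordering) and does not prove it. You prove it directly with the one-line integral comparison $F^T(\theta)=\int_0^\theta \ell f^N \ge \ell(\theta)F^N(\theta)$, so that step is self-contained in your version.

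\textbf{Monotonicity.} The paper differentiates $\Lambda$ directly. After a cancellation, the numerator of $\Lambda'$ factors as $\bigl((f^T)'f^N - (f^N)'f^T\bigr)\bigl(\pi_T F^T + (1-\pi_T)F^N\bigr)$, whose sign is that of the derivative of the likelihood ratio. You instead normalize by $f^N$, show $D$ is non-decreasing with a two-point decomposition, and treat the non-increasing denominator $\pi_T\ell + 1-\pi_T$ separately.

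\textbf{What each route buys.} Your route uses only the pointwise monotonicity of $\ell$ and the integrability of $f^T=\ell f^N$. So it needs no differentiability of the densities. The paper's model never assumes differentiability, but its proof silently requires it. Given that smoothness, the paper's route is shorter and yields the explicit formula
$\Lambda' = \pi_T(1-\pi_T)\,\ell'\,\bigl(\pi_T F^T + (1-\pi_T)F^N\bigr)/(\pi_T\ell + 1-\pi_T)^2$.
This shows at a glance where $\Lambda$ is strictly, not just weakly, decreasing.
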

		\begin{proof}
			Under reverse MLRP, $F^N(\theta)f^T(\theta) - F^T(\theta)f^N(\theta) \leq 0$, so $\Lambda(\theta) \leq 0$. Consider taking the derivative w.r.t. $\theta$:
				\begin{align*}
					\frac{d\Lambda(\theta)}{d \theta} & = \pi_T(1-\pi_T) \frac{\hat{f} \big(F^N f^{T'} - F^T f^{N'}\big) - \big(F^N f^T - F^T f^N\big)\hat{f}'}{\hat{f}^2} \\
					&= \pi_T(1-\pi_T) \frac{ \big(f^{T'}f^N - f^{N'}f^T\big) \big(\pi_TF^T + (1- \pi_T)F^N\big)}{\hat{f}^2}
				\end{align*}
			Next, notice that we have $\frac{f^T(\theta)}{f^N(\theta)}$ being non-increasing under reverse MLRP, which implies $\frac{f^Nf^{T'} - f^T f^{N'}}{(f^N)^2} \leq 0$; that is, $ f^Nf^{T'} \leq f^T f^{N'}$. Using this observation, we get:
				\begin{align*}
					\mathrm{Sign} \bigg( \frac{d\Lambda(\theta)}{d \theta} \bigg) & = \mathrm{Sign} \big(f^{T'}f^N - f^{N'}f^T\big) \leq 0
			\end{align*}
				So $\Lambda(\theta)$ is non-increasing w.r.t. $\theta$ and $\Lambda(\theta)$ reaches the minimum when $\theta = 1$.
		\end{proof}

		\begin{lemma}\label{lemma:nootherpooling}
			Let $\tau = 0$. If the principal pools at cost types $ \theta \in [\theta_2, 1]$, that is $q^T(\theta) = q^N(\theta) = q_p(\theta)$, then she also pools at any cost type $\theta < \theta_2$.
		\end{lemma}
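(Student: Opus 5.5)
The plan is to argue through the multiplier $\lambda$ and the functions $\Xi^T,\Xi^N$ from the proof of Theorem~\ref{thm:main}, specialized to $\tau=0$. By Claim~\ref{claim:pooling_optimal}, the optimum has $\underline{U}=0$. By Proposition~\ref{prop:startpoint} with $\tau=0$ (the binding sub-case), we then have $C_q(q^T(0),0)=C_q(q^N(0),0)=P$, so $q^T(0)=q^N(0)$, and also $\Xi^T(0)=\Xi^N(0)=0$. The goal is to show that no separating interval $(\theta_1,\theta_2)$ with $\lambda\equiv0$ can sit strictly below a pooling region $[\theta_2,1]$.

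Suppose, for contradiction, that there is a maximal open interval $(\theta_1,\theta_2)$ on which the groups are separated, with $q^T\neq q^N$ and $\lambda=0$. Here $\theta_1$ is either $0$ or a pooling point, and $\theta_2$ is a pooling point. On this interval the across-group constraint with $\underline{U}=0$ is slack, so $R^T(\theta)>R^N(\theta)$. At both endpoints it binds: $R^T(\theta_1)=R^N(\theta_1)$, which at $\theta_1=0$ follows from the boundary analysis, and $R^T(\theta_2)=R^N(\theta_2)$. Integrating $r^T-r^N$ over $(\theta_1,\theta_2)$ therefore gives $\int_{\theta_1}^{\theta_2}(r^T-r^N)\,dx=0$. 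Yet $R^T-R^N>0$ strictly inside, so near $\theta_2$ we need $r^T<r^N$, and hence $q^T<q^N$ just left of $\theta_2$; near $\theta_1$ we need $r^T>r^N$, and hence $q^T>q^N$ just right of $\theta_1$.

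Next, I will show that on a separating interval with $\lambda=0$, $\Xi^T$ is constant, equal to some $C_1$, and $\Xi^N=-C_1$. The quantities therefore solve the shifted virtual-cost equations from Case~1. Continuity of $\Xi^T$ at $\theta_1$ (or $\Xi^T(0)=0$) pins down $C_1$. If $\theta_1=0$, then $C_1=0$, so the separating schedules are the unconstrained ones \eqref{opt:T}--\eqref{opt:N} with $\tau=0$. Under reverse MLRP these satisfy $q^T\le q^N$ throughout, because $\Psi^T\ge\Psi^N$ at equal $q$. This contradicts $q^T>q^N$ just right of $\theta_1$, given strict slackness. If $\theta_1>0$ is a pooling point, $C_1=\Xi^T(\theta_1)\ge0$, since $\Xi^T$ is non-decreasing from $0$. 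A positive $C_1$ shifts $q^T$ further down and $q^N$ up, as one sees from the implicit equations in Case~1 with $C_{q\theta}>0$. So again $q^T\le q^N$ on the interval, and again this contradicts the sign needed near $\theta_1$. In either case no such interval exists, so pooling extends all the way down to $0$.

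The main obstacle is the sign argument in the second step. Comparing the shifted equations requires a monotone comparative static in $C_1$. The sign of $\frac{F^T}{f^T}-\frac{F^N}{f^N}$, together with the term $\frac{C_1}{\pi_T f^T}$ versus $-\frac{C_1}{(1-\pi_T)f^N}$, must be handled carefully. It is also delicate when strictness fails, where $q^T=q^N$ on a set that is not yet formally classified as pooling. I would first try to prove directly that $\Psi^T(q,\theta)+\frac{C_1}{\pi_T f^T}C_{q\theta}\ge\Psi^N(q,\theta)-\frac{C_1}{(1-\pi_T)f^N}C_{q\theta}$ for $C_1\ge0$. I would then use the strict monotonicity of $\Psi^j$ in $q$ to conclude $q^T\le q^N$, treating the equality case by absorbing it into the pooling region.
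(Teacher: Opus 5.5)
Your skeleton (take a maximal slack interval, pin down $C_1=\Xi^T(\theta_1)$, compare the two shifted virtual-cost equations) is workable. As written, though, two sign errors break the argument.

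\textbf{First error: slackness is translated into the wrong ordering.} With $D:=R^T-R^N$, $\underline{U}=0$ and pooling on $[\theta_2,1]$, you have $D(\theta)=\int_\theta^{\theta_2}(r^T-r^N)\,dx$ on the interval. So $D>0$ just left of $\theta_2$ forces $r^T>r^N$, i.e.\ $q^T>q^N$, on part of every left neighbourhood of $\theta_2$. If the constraint binds at $\theta_1$, then $D>0$ just to its right forces $q^T<q^N$ there. Both are the opposite of what you wrote, so your conclusion $q^T\le q^N$ contradicts nothing at $\theta_1$. The contradiction has to be drawn at $\theta_2$: $q^T\le q^N$ on $(\theta_1,\theta_2)$ gives $D\le 0$ there, which violates slackness. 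Arguing at $\theta_2$ also removes your appeal to the constraint binding at $\theta_1=0$, which Proposition~\ref{prop:startpoint} does not provide. At $\tau=0$ both of its sub-cases give $q^T(0)=q^N(0)$, but neither forces $R^T(0)=R^N(0)$. With this repair, your $\theta_1=0$ case ($C_1=0$, unconstrained schedules, $q^T\le q^N$) is exactly the paper's Case 1.

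\textbf{Second error: the comparative static in $C_1$ runs the other way.} This is the real gap when $\theta_1>0$. At $\tau=0$ the Case~1 equations are
\[
\Psi^T(q^T,\theta)-\frac{C_1}{\pi_T f^T(\theta)}\,C_{q\theta}(q^T,\theta)=P,\qquad \Psi^N(q^N,\theta)+\frac{C_1}{(1-\pi_T)f^N(\theta)}\,C_{q\theta}(q^N,\theta)=P.
\]
So $C_1>0$ \emph{lowers} the trait group's effective virtual cost and raises the non-trait group's, which pushes $q^T$ up and $q^N$ down. The inequality you plan to prove carries both $C_1$-terms with the wrong sign; with those signs it would follow trivially from $\Psi^T\ge\Psi^N$, and proving it would not give $q^T\le q^N$. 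At a common $q$, the correct left-hand sides differ by $\frac{\hat f\,C_{q\theta}}{\pi_T(1-\pi_T)f^Tf^N}\bigl(|\Lambda(\theta)|-C_1\bigr)$. Hence $C_1\ge 0$ alone yields nothing: a large enough $C_1$ would give $q^T>q^N$ and a genuinely slack interval.

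\textbf{What is missing is the value of $C_1$.} At a pooling point with $\tau=0$, the computation in the proof of Claim~\ref{claim:pooling_optimal} gives $\Xi^T=|\Lambda|$, because $\widetilde f=\hat f$ and $\Delta=0$. Continuity of $\Xi^T$ then yields $C_1=|\Lambda(\theta_1)|$; this also covers $\theta_1=0$, since $\Lambda(0)=0$. Since $|\Lambda|$ is non-decreasing by Lemma~\ref{lemma:Lambdanonpos}, $|\Lambda(\theta)|\ge C_1$ on $(\theta_1,\theta_2)$. Strict monotonicity of the left-hand sides in $q$ then gives $q^T\le q^N$ there, and the $\theta_2$-end argument closes both cases at once.

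Repaired this way, your route is tighter than the paper's Case 2. The paper uses binding at both endpoints to force a crossing of $r^T$ and $r^N$, then simply asserts that the ordering cannot change inside a separating region. The identity $C_1=|\Lambda(\theta_1)|$, together with the monotonicity of $|\Lambda|$, is what actually justifies that step.
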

		\begin{proof}
			Suppose not. The principal separates at cost type $\theta \in (\theta_1, \theta_2)$. Consider the following two cases: (1.) $\theta_1 = 0$, and (2.) $\theta_1 > 0$. 
			
			To begin, notice that once we pool the production for any $\theta' \in [\theta_2, 1]$, we have $\underline{U} = 0$.\footnote{If there is a separating region from $\theta_3$ to $1$ and $\theta_3 > \theta_2$, then $\underline{U}$ will not be $0$, but the logic of this proof will still hold.}

			For Case 1, we have pooling for the higher cost types (close to 1) and separating for the lower cost types (close to 0). As the separation region hits the boundary, the boundary condition from Proposition~\ref{prop:startpoint} pins down $C_1 = 0$ in the Euler--Lagrange equations. The separating quantities $q^T(\theta)$ and $q^N(\theta)$ on $[0, \theta_2)$ are then implicitly defined by:

			$$\Psi^T(q^T,\theta) = P, \qquad \Psi^N(q^N,\theta) = P.$$

			Under reverse MLRP, $\frac{F^T(\theta)}{f^T(\theta)} \geq \frac{F^N(\theta)}{f^N(\theta)}$, so $\Psi^T(q,\theta) \geq \Psi^N(q,\theta)$ at the same $q$. Since $\Psi^j$ is strictly increasing in $q$, this implies $q^T(\theta) < q^N(\theta)$ for all $\theta \in (0, \theta_2)$. By single-crossing ($C_{q\theta} > 0$), $\frac{\partial C(q^T,\theta)}{\partial \theta} < \frac{\partial C(q^N,\theta)}{\partial \theta}$, i.e., $r^T(\theta) < r^N(\theta)$. Since $q^T = q^N$ on $[\theta_2, 1]$, we have $r^T = r^N$ there. Thus:
			$$U^T(\theta) - U^N(\theta) = \int_{\theta}^{1} \bigg(\frac{\partial C(q^T(x),x)}{\partial x} - \frac{\partial C(q^N(x),x)}{\partial x}\bigg) dx < 0$$
			for all $\theta \in [0, \theta_2)$, violating the across-group IC constraint.

			For Case 2, we have pooling for the higher cost types (close to 1), separating for the middle cost types until $\theta_1$, and pooling again for the lower cost types (close to 0). Since $\theta_1$ and $\theta_2$ belong to the pooling region, we have $\underline{U} + R^T(\theta_1) = R^N(\theta_1)$ and $\underline{U} + R^T(\theta_2) = R^N(\theta_2)$; in other words, the across-group IC binds at both endpoints. Since $\underline{U} = 0$, this implies $R^T(\theta_1) = R^N(\theta_1)$ and $R^T(\theta_2) = R^N(\theta_2)$. Subtracting gives $\int_{\theta_1}^{\theta_2} r^T(\theta)\, d\theta = \int_{\theta_1}^{\theta_2} r^N(\theta)\, d\theta$. As the region is separating from $\theta_1$ to $\theta_2$, we cannot have $r^T(\theta) = r^N(\theta)$ for all $\theta \in (\theta_1, \theta_2)$ (since separation requires $q^T \neq q^N$ and single-crossing gives $r^T \neq r^N$). For the integrals to be equal, $r^T(\theta)$ and $r^N(\theta)$ must intersect at some type in $(\theta_1, \theta_2)$, which by single-crossing ($C_{q\theta} > 0$) means $q^T$ and $q^N$ also intersect. But then neither $q^T(\theta) > q^N(\theta)$ nor $q^T(\theta) < q^N(\theta)$ holds throughout the separating region, contradicting the structure of the Euler--Lagrange solution which requires a consistent ordering within a single separating region.

			From these two cases, we conclude that more complicated partial pooling configurations are not candidates for the optimal contract.
		\end{proof}

%%%%%%%%%%%%%%%%%% Generalizing Cost Function %%%%%%%%%%%%%%%%%%%%%%%%%%%%%%%%%%%%%%%%			
	\section{Standard Method for Results under MLRP}\label{app:cost}
		\begin{thm}
			Assume both the Monotone Likelihood Ratio Property and the Monotone Reverse Hazard Rate property, and assume the cost function satisfies $\frac{\partial^3 C(q, \theta)}{\partial q^2 \partial \theta} \leq 0$ and $\frac{\partial^3 C(q, \theta)}{\partial q \partial \theta^2} \geq 0$.\footnote{Both conditions hold with equality for the quadratic cost function in Section~\ref{sec:quadratic}, where all third derivatives vanish.} The optimal contract design induces full separation up to a cutoff type $\theta^*$.
		\end{thm}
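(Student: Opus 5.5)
The plan is to prove the appendix theorem by the standard relaxed-problem route rather than the calculus of variations. First I would drop the across-group constraint (Lemma~\ref{lemma:ICs} (3.)) and the monotonicity requirement, set $U^T(1)=U^N(1)=0$, and maximize the principal's objective in the form derived after Lemma~\ref{lemma:IR}. The objective then separates across groups and across types, so it can be maximized pointwise. For each $\theta$ and group $j$, the maximand $Pq-(1-\tau(j))\big[C(q,\theta)+\frac{F^j(\theta)}{f^j(\theta)}C_\theta(q,\theta)\big]$ is strictly concave in $q$. The reason is that $C_{qq}>0$, together with the assumption $C_{qq\theta}\le 0$, may not give concavity of the information-rent term directly. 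I would therefore check that its second derivative $-(1-\tau(j))\big[C_{qq}+\frac{F^j}{f^j}C_{qq\theta}\big]$ is negative, and where needed restrict to the relevant range or invoke the standing convexity. The pointwise maximizer is then $q^j(\theta)$ solving $\Psi^T=\frac{P}{1-\tau}$ and $\Psi^N=P$ when an interior solution exists, and $q^j(\theta)=0$ otherwise. The cutoffs $\theta^{j*}$ are defined by the virtual marginal cost at $q=0$ reaching the effective price.

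Second, I would verify monotonicity, so that the relaxed solution satisfies Lemma~\ref{lemma:withIC} (1.). Applying the implicit function theorem to $\Psi^j(q^j,\theta)=p_j$ gives
$$\frac{dq^j}{d\theta}=-\frac{C_{q\theta}+\frac{d}{d\theta}\big(\frac{F^j}{f^j}\big)C_{q\theta}+\frac{F^j}{f^j}C_{q\theta\theta}}{C_{qq}+\frac{F^j}{f^j}C_{qq\theta}}.$$
The numerator is positive: Monotone Reverse Hazard Rate makes $F^j/f^j$ non-decreasing, and $C_{q\theta\theta}\ge 0$. The denominator is the positive concavity term from the first step. Hence $q^j$ is non-increasing on the hiring region. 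Beyond the cutoff it is $0$, and since $\Psi^j(0,\theta)$ is increasing in $\theta$ by the same computation, the region where $q^j>0$ is an initial interval $[0,\theta^{j*})$.

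Third, I would check the dropped across-group constraint. MLRP implies reverse hazard rate dominance, $F^T/f^T\le F^N/f^N$, so $\Psi^T(q,\theta)\le\Psi^N(q,\theta)$ at every $q$. Combined with $\frac{P}{1-\tau}\ge P$, this gives $q^T(\theta)\ge q^N(\theta)$ for all $\theta$, and hence $\theta^{T*}\ge\theta^{N*}$. Single-crossing then gives $C_\theta(q^T,\theta)\ge C_\theta(q^N,\theta)$, so the left side of Lemma~\ref{lemma:ICs} (3.) is non-positive and the constraint holds with $U^T(1)=U^N(1)=0$. Since the relaxed optimum is feasible, it is optimal for the original problem. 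Wages then follow from revenue equivalence, and the resulting contract fully separates both dimensions up to $\theta^*=\theta^{T*}$, with non-trait agents cut off earlier at $\theta^{N*}$.

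The main obstacle is the concavity and denominator-sign step. The hypothesis $C_{qq\theta}\le 0$ works against $C_{qq}+\frac{F^j}{f^j}C_{qq\theta}>0$, so the paper's assumptions may be intended differently. I would first try to bound the term using $F^j/f^j\le$ some constant on $[0,1]$. If that fails, I would instead argue monotonicity directly from supermodularity of the pointwise objective in $(q,-\theta)$: its cross-partial is $-(1-\tau(j))\big[C_{q\theta}+(F^j/f^j)'C_{q\theta}+(F^j/f^j)C_{q\theta\theta}\big]\le 0$. Topkis's theorem then yields a non-increasing maximizer without requiring concavity.
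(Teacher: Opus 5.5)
Your proposal is correct and follows the same architecture as the paper's proof. Both solve the relaxed problem with $U^T(1)=0$, characterize quantities by $\Psi^T=\frac{P}{1-\tau}$ and $\Psi^N=P$, and obtain monotonicity by implicitly differentiating these conditions, with the Monotone Reverse Hazard Rate and $C_{q\theta\theta}\ge 0$ signing the numerator. Both then locate the cutoff where $q^j$ reaches zero and check the dropped across-group constraint using $q^T\ge q^N$.

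You genuinely differ only in that last check. The paper rewrites the integrand $C_\theta(q^N(x),x)-C_\theta(q^T(x),x)$ through the two first-order conditions as $-\frac{P\tau}{1-\tau}+\frac{F^T}{f^T}C_{q\theta}(q^T,x)-\frac{F^N}{f^N}C_{q\theta}(q^N,x)$, and then needs $C_{qq\theta}\le 0$ to sign it. That substitution treats differences in $C_\theta$ as differences in $C_q$. This is exact for the quadratic cost of Section~\ref{sec:quadratic}, but not in general. Your argument is that $q^T\ge q^N$ and $C_{q\theta}>0$ give $C_\theta(q^T(x),x)\ge C_\theta(q^N(x),x)$ pointwise. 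This is what the paper itself does in part (a) of the proof of Theorem~\ref{thm:main}. It holds for any admissible cost and makes the hypothesis $C_{qq\theta}\le 0$ unnecessary.

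The obstacle you flag is real, and the paper does not resolve it either. Its monotonicity step signs only the numerator of $dq^j/d\theta$, tacitly assuming $C_{qq}+\frac{F^j}{f^j}C_{qq\theta}>0$; the main-text proof likewise just asserts $\partial\Psi^j/\partial q>0$. The hypotheses $C_{qq}>0$ and $C_{qq\theta}\le 0$ do not deliver this. Bounding $F^j/f^j$ cannot help without a quantitative link between $C_{qq}$ and $C_{qq\theta}$, so use the Topkis fallback. Because the cross-partial is strictly negative, every selection from the pointwise argmax is non-increasing in $\theta$.

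If you go that way, note that your third step also inverts $\Psi^j$ (``$\Psi^T\le\Psi^N$ and $\frac{P}{1-\tau}\ge P$ give $q^T\ge q^N$''), so redo it by monotone comparative statics too. After dividing the trait objective by $1-\tau$, the difference between the trait and non-trait pointwise objectives has $q$-derivative $\frac{P\tau}{1-\tau}+\bigl(\frac{F^N}{f^N}-\frac{F^T}{f^T}\bigr)C_{q\theta}\ge 0$. So the trait argmax dominates in the strong set order, and taking the largest maximizer in each group gives $q^T\ge q^N$ without concavity. Existence of a pointwise maximizer still has to be assumed, as the paper implicitly does.
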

		\begin{proof}
			We proceed by solving the principal's problem without both the constraints, and checking if the solution satisfies the constraints. Ignoring both constraints, the principal optimally sets $U^T(1) = 0$.
			
			By taking the first order condition with respect to $q^T$ and $q^N$, we get:
			\begin{align*}
				& [q^T]: P = (1-\tau)\bigg[ \frac{\partial C(q^T(\theta), \theta)}{\partial q^T} + \frac{\partial^2 C(q^T(\theta), \theta)}{\partial q^T \partial \theta} \frac{F^T(\theta)}{f^T(\theta)} \bigg], \\
				& [q^N]: P =  \frac{\partial C(q^N(\theta), \theta)}{\partial q^N} + \frac{\partial^2 C(q^N(\theta), \theta)}{\partial q^N \partial \theta} \frac{F^N(\theta)}{f^N(\theta)}.
			\end{align*}
			The RHS is known as the virtual marginal cost for each group. The first term of the RHS is the marginal cost of producing $q^j$ given a cost type $\theta$, and the second term of the RHS is the information rent for cost type $\theta$, which consists of a cross derivative of the cost function and the inverse of the reverse hazard rate.
			
			We next show that given a production level $q$ and a cost type $\theta$, trait agent's virtual marginal cost is weakly lower than that of the non-trait agent. By subtracting the first order conditions, we get:
			$$-\tau \frac{\partial C(q(\theta), \theta)}{\partial q} + \frac{\partial^2 C(q(\theta), \theta)}{\partial q \partial \theta} \bigg( (1- \tau) \frac{F^T(\theta)}{f^T(\theta)} - \frac{F^N(\theta)}{f^N(\theta)}\bigg) \leq 0,$$
			since $\frac{\partial C(q(\theta), \theta)}{\partial q} > 0$, $\frac{\partial^2 C(q(\theta), \theta)}{\partial q \partial \theta} > 0$, and MLRP implies $\frac{F^N(\theta)}{f^N(\theta)} \geq \frac{F^T(\theta)}{f^T(\theta)}$. To equate two first order conditions, we must have $q^T(\theta) \geq q^N(\theta)$ for all $\theta$.
			
			Now we add both the conditions back and check if they are satisfied. Starting from the across-group IC condition, we have:
			\begin{align*}
				U^T(1) & = \max_{\theta \in [0,1]} \int^{1}_{\theta} \bigg( \frac{\partial C\big(q^N(x), x \big)}{\partial x} -  \frac{\partial C\big(q^T(x), x \big)}{\partial x} \bigg) dx \\
				& = \max_{\theta \in [0,1]} \int^{1}_{\theta} \bigg[ -\frac{P\tau}{1-\tau} + \bigg( \frac{F^T(x)}{f^T(x)} \frac{\partial^2 C(q^T(x), x)}{\partial q^T \partial x} - \frac{F^N(x)}{f^N(x)} \frac{\partial^2 C(q^N(x), x)}{\partial q^N \partial x}\bigg) \bigg] dx
			\end{align*} 
			
			$\frac{F^T(x)}{f^T(x)} \frac{\partial^2 C(q^T(x), x)}{\partial q^T \partial x} - \frac{F^N(x)}{f^N(x)} \frac{\partial^2 C(q^N(x), x)}{\partial q^N \partial x} \leq 0$ because of MLRP, $q^T(\theta) \geq q^N(\theta)$, and $\frac{\partial^3 C(q, \theta)}{\partial q^2 \partial \theta} \leq 0$. The integral is weakly negative for all $\theta \in [0,1]$; thus by choosing $\theta = 1$, we achieve the maximum value, which is $0$; thus, the across-group IC condition is satisfied.
			
			As for $q^j(\theta)$ being monotonic, we take the derivative with respect to $\theta$ for the two first order conditions. Considering the one for the trait agent, we have:
			\begin{align*}
				& [\theta]: -(1-\tau)\bigg[ \frac{\partial^2 C(q^T(\theta), \theta)}{\partial \theta \partial q^T} + \frac{F^T(\theta)}{f^T(\theta)} \frac{\partial^3 C(q^T(\theta), \theta)}{\partial q^T \partial \theta^2} + \frac{\partial^2 C(q^T(\theta), \theta)}{\partial q^T \partial \theta} \bigg( \frac{d}{d\theta}\frac{F^T(\theta)}{f^T(\theta)} \bigg) \bigg] \leq 0,
			\end{align*}
			as the first term inside the square bracket is positive by single-crossing, the second term is positive by the assumption $\frac{\partial^3 C(q, \theta)}{\partial q \partial \theta^2} \geq 0$, and the third term is positive by the monotone reverse hazard rate property together with single-crossing. Therefore, we have monotonically decreasing $q^T$. The argument is the same for $q^N$. As both production are monotonically decreasing in $\theta$, and $q^j(0) > 0$ for $j \in \{N,T\}$, if $q^j(\theta) < 0$ for some $\theta$, then by the Intermediate Value Theorem, there exists a type $\theta^*$ with $q^j(\theta^*) = 0$, which is the cutoff type. For all $\theta \geq \theta^*$, they receive a contract with zero wage and zero production; in other words, they are not hired.
		\end{proof}
	\end{appendices}

\end{document}